\definecolor{dullmagenta}{rgb}{0.4,0,0.4}   
\definecolor{darkblue}{rgb}{0,0,0.4}
\pretocmd{\blx@head@bibintoc}{\phantomsection}{}{\ddt}
 \newcommand{\Mat}{\textnormal{Mat}}
 \newcommand{\s}{\textnormal{s}}
  \newcommand{\diag}{ \textnormal{diag}}
\newcommand{\ketbra}[1]{|#1\rangle\langle #1|}
\newcommand*{\ee}{\mathrm{e}}
\newcommand{\id}{\textnormal{id}}
\def\tr{{\rm tr}}
\def\Re{{\rm Re}}
\def\cl{{\text{cl}}}
\def\ker{{\text{ker}}}
\def\fpg{F_{{\rm pg}}}
\def\cD{\mathcal D}
\def\cE{\mathcal E}
\def\spec{\text{spec}}
\newcommand{\norm}[1]{\left\lVert#1\right\rVert}
\newcommand{\normT}[1]{\left\vert\kern-0.25ex\left\vert\kern-0.25ex\left\vert #1 
    \right\vert\kern-0.25ex\right\vert\kern-0.25ex\right\vert}
\newtheorem{mythm}{Theorem}[section]
\newtheorem{myprop}[mythm]{Proposition}
\newtheorem{mycor}[mythm]{Corollary}
\newtheorem{mylem}[mythm]{Lemma}
\theoremstyle{definition}
\newtheorem{mydef}[mythm]{Definition}
\newtheorem{myrmk}[mythm]{Remark}
\numberwithin{equation}{chapter}
\begin{document}

%
%
\selectlanguage{USenglish}
%
\pagenumbering{Roman}
%
%
%
%
%
%
\begin{titlepage}
  \mbox{}

  \vspace{-1.5cm}
  \noindent
  \begin{tabular}{@{} l @{} l @{}}
    \begin{minipage}[c]{0.5\textwidth}
      \hspace{-4mm}
      \includegraphics[height=19mm]{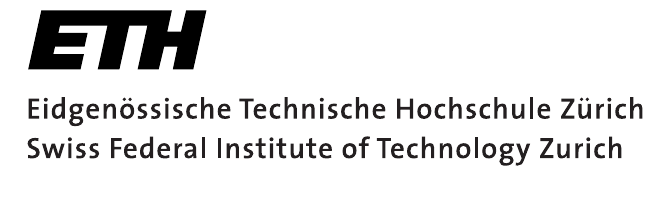}
    \end{minipage} &
    \begin{minipage}[c]{0.5\textwidth}
       \hfill \vspace{-6 mm}\includegraphics[height=8mm]{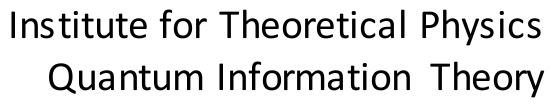}
    \end{minipage} \\
  \end{tabular}
  \rule{\textwidth}{0.5pt}
  \begin{center}
    {\Large 
      Fall 2016 \hfill Prof.~Dr.~Renato Renner
    }
    
    \vspace{\stretch{5}}
    \LARGE
    Master's Thesis
 
    \vspace{\stretch{8}}
    \Huge\textbf{
Relations between different quantum R\'enyi divergences
          }
    
    \vspace{\stretch{10}}
    \LARGE{
      Raban Iten
    }
    
    \vspace{\stretch{10}}
    \rule{\textwidth}{0.5pt}
   
    \vspace{0.0cm}
    \begin{flushleft}
      \begin{tabular}{ll}
        \Large Advisors: & \Large 
        David~Sutter
        \\
        \Large   & \Large
        Dr.~Joseph Merrill Renes 
        \\
        \Large & \Large 
        Prof.~Dr.~Renato Renner
      \end{tabular}
    \end{flushleft}
  \end{center}
\end{titlepage}
%

\thispagestyle{plain}
\cleardoublepage

%
%
\chapter*{Acknowledgments}
I would like to deeply thank my advisors David Sutter, Dr. Joseph M. Renes and Prof. Renato Renner for their excellent supervision. They were always available to discuss problems and offered an outstanding support. None of the achieved results in this thesis would have been possible without them. I would also like to thank them for coming up with the fascinating project tasks and for their great preparatory works on these topics.\\

Furthermore, I want to thank Roger Colbeck for his mathematica package QItools, which we used several times to check our analytical conjectures numerically. 

\vspace{1.1cm}
\noindent
Zurich,  Dezember 18, 2016

\vspace{2.4cm}
\noindent
Raban Iten
\thispagestyle{plain}
\clearpage

\thispagestyle{plain}
\cleardoublepage
%
%
\huge
\begin{abstract}
  \setcounter{page}{5}
\thispagestyle{plain}
  \normalsize
  \vspace{0.5cm}

\noindent Quantum generalizations of R\'enyi's entropies are a useful tool to describe a variety of operational tasks in quantum information processing. 
Two families of such generalizations turn out to be particularly useful: the Petz quantum R\'enyi divergence~$\widebar{D}_{\alpha}$ and the minimal quantum R\'enyi divergence~$\widetilde{D}_{\alpha}$. Moreover, the maximum quantum R\'enyi divergence~$\widehat{D}_{\alpha}$ is of particular mathematical interest. In this thesis, we investigate relations between these divergences and their applications in quantum information theory. As the names suggest, it is well known that $\widetilde{D}_{\alpha}(\rho \| \sigma) \leqslant \widebar{D}_{\alpha}(\rho \| \sigma)\leqslant \widehat{D}_{\alpha}(\rho \| \sigma)$ for $\alpha \geqslant 0$ and where $\rho$ and $\sigma$ are density operators.\\

Our main result is a reverse Araki-Lieb-Thirring inequality that implies a new and reverse relation between the minimal and the Petz divergence, namely that $\alpha \widebar{D}_{\alpha}(\rho \| \sigma) \leqslant  \widetilde{D}_{\alpha}(\rho \| \sigma)$ for $\alpha \in [0,1]$. This bound leads to a unified picture of the relationship between pretty good quantities used in quantum information theory and their optimal versions. 
Indeed, the bound suggests defining a ``pretty good fidelity'', whose relation to the usual fidelity implies the known relations between the optimal and pretty good measurement as well as the optimal and pretty good singlet fraction.
We also find a new necessary and sufficient condition for optimality of the pretty good measurement and singlet fraction.\\

In addition, we provide a new proof of the inequality $\widetilde{D}_{1}(\rho \| \sigma) \leqslant  \widehat{D}_{1}(\rho \| \sigma)\, ,$ based on the Araki-Lieb-Thirring inequality. This leads to an elegant proof of the logarithmic form of the reverse Golden-Thompson inequality.
\vspace{3mm}

\noindent 

\vspace{10mm}
\noindent{\bf{Keywords}}
Reverse Araki-Lieb-Thirring inequality, reverse Golden-Thompson inequality, R\'enyi divergences, R\'enyi entropies, optimality of pretty good measures, pretty good measurement, pretty good singlet fraction 
\end{abstract}

\normalsize

\selectlanguage{USenglish}
\thispagestyle{plain}
\cleardoublepage
%
%

%
\tableofcontents
\clearpage

\thispagestyle{plain}
\cleardoublepage

\lhead[\fancyplain{\scshape \leftmark}
{\scshape \leftmark}]
{\fancyplain{\scshape Semester Project}
  {\scshape Semester Project}}
\rhead[\fancyplain{\scshape Semester Project}
{\scshape Semester Project}]
{\fancyplain{\scshape \leftmark}
  {\scshape \leftmark}}
%
%
%
%
%
\setcounter{chapter}{0}
\setcounter{figure}{0}
%
%
\pagenumbering{arabic}
\renewcommand{\thechapter}{\arabic{chapter}}
\renewcommand{\thesection}{\thechapter.\arabic{section}}
\renewcommand{\thefigure}{\thechapter.\arabic{figure}}
\renewcommand{\chaptermark}[1]{\markboth{#1}{}}
\renewcommand{\sectionmark}[1]{\markright{\thesection\ #1}}
\lhead[\fancyplain{\scshape Chapter \thechapter}
{\scshape Chapter \thechapter}]
{\fancyplain{\textsc{\leftmark}}
  {\rightmark}}
\rhead[\fancyplain{\scshape \leftmark}
{\textsc{\leftmark}}]
{\fancyplain{\scshape Chapter \thechapter}
  {\scshape Chapter \thechapter}}
%
%
\cleardoublepage

\chapter{Preface} 

How can we quantify information? How can we measure the uncertainty about a physical system? These questions are not easy to answer and there are different useful measures which provide possible solutions to these questions. The problem gets even more complex if we consider quantum systems instead of classical ones. It turns out that quantum R\'enyi divergences provide a useful framework to deal with such questions. Interesting distance measures between two quantum states such as the fidelity are nicely embedded into this framework.  \\

A natural question is how different information measures are related to each other. In this thesis, we introduce a new relation between two families of such measures and describe its applications. Let us give an example of one such application: Assume that Alice prepares a certain quantum state $\rho$ with probability $p$ and a state $\sigma$ with probability $1-p$. The state is then given to Bob, who knows that Alice prepared either $\rho$ or $\sigma$, but does not know which one of both was prepared. Which measurement should Bob perform to find out if Alice has given him $\rho$ or $\sigma$ with the highest possible success probability? Unfortunately, this problem is not easy to solve in general. However, there is a known construction of a "pretty good" measurement, which provides a pretty good solution to this problem. (We refer to Appendix~\ref{app:pgm} for more details.) The relations between different quantum R\'enyi divergences allow us to specify what "pretty good" means mathematically (by comparing the "pretty good" measure with the optimal one) and to give necessary and sufficient conditions on the optimality of the pretty good measurement (cf. Chapter~\ref{cha:pgm} for more details).\\

\vspace{12mm}

The thesis is structured as follows. In Chapter~\ref{Divergences}, we give some background information about quantum R\'enyi divergences, entropies and introduce a natural continuation of the important minimal quantum R\'enyi divergence (also known as sandwiched quantum R\'enyi divergence) $\widetilde{D}_{\alpha}$ for $\alpha \in (0,\frac{1}{2})$. \\
In Chapter~\ref{chap:trace_ineq}, we consider several trace inequalities that are not only of mathematical interest, but also found a lot of applications in quantum information theory. Our main result is a reverse version of the celebrated Araki-Lieb-Thirring (ALT)  inequality. Moreover, we give a new and elegant proof (based an the ALT inequality) of a logarithmic trace inequality which is known to be equivalent to the reverse Golden-Thompson inequality.\\
In Chapter~\ref{chap:rel_div}, we introduce a new bound between two well known quantum R\'enyi divergences, the minimal  quantum R\'enyi divergence  $\widetilde{D}_{\alpha}$ and the Petz quantum R\'enyi divergence $ \widebar{D}_{\alpha}$, namely that $\alpha \widebar{D}_{\alpha}(\rho \| \sigma) \leqslant  \widetilde{D}_{\alpha}(\rho \| \sigma)$ for $\alpha \in [0,1]$ and for density operators $\rho$ and $\sigma$. This bound is a direct consequence of the reverse ALT inequality derived in Chapter~\ref{chap:trace_ineq} and leads to interesting new bounds between quantum conditional  R\'enyi entropies.\\
In Chapter~\ref{cha:pgm}, we describe applications of the new bound between the minimal and the Petz quantum R\'enyi divergence found in Chapter~\ref{chap:rel_div}. Indeed, the bound turns out to be useful to quantify the quality of different "pretty good" measures in quantum information theory and provides a unification of known bounds for such measures.\\

In Appendix~\ref{app:pgm}, we give a formal description of the pretty good measurement. \\
In Appendix~\ref{app:technical_res}, we prove some technical results related to statements araising in the main text of the thesis.\\
Appendix~\ref{APPnoation} explains the notational conventions and abbreviations we use.\\

The new bound  between the minimal and the Petz quantum R\'enyi divergence given in Chapter~\ref{chap:rel_div} as well as its applications described in Chapter~\ref{cha:pgm} have been summarized in a paper~\cite{PGM} together with David Sutter and Dr.~Joseph Merrill Renes. The paper was recently accepted for a publication in~\emph{IEEE Transaction on Information Theory}. 

\vspace{12mm}

\chapter{Quantum R\'enyi divergences} \label{Divergences}

\section{Introduction} \label{sec:intro_divergences}
As with their classical counterparts, quantum generalizations of R\'enyi entropies and divergences are powerful tools in information theory. They are related to various measures of information and uncertainty, which are useful for different tasks in finite resource theory. The aim of finite resource theory is to understand the information processing of a finite amount of resources, e.g., channels. A nice example that illustrates the usefulness of classical R\'enyi entropies for the investigation of source compression is given in Section~1.1 of~\cite{tomamichel_quantum_2016}. \\

Alfr{\'e}d R{\'e}nyi derived an elegant axiomatic approach for classical R\'enyi entropies and divergences~\cite{renyi_measures_1961}. Indeed, he states five natural\footnote{The axioms for entropies or divergences describe desirable properties of uncertainty measures or measures of distinguishability, respectively.} axioms on functionals on a probability space that allow only one solution: the well known Shannon entropy~\cite{shannon_mathematical_1948} or the Kullback-Leibler divergence~\cite{kullback_information_1951}, respectively. Classical probability distributions can be viewed as diagonal operators $\rho \neq 0$ and $\sigma \gg \rho$ (with unit trace), where the notation $\sigma \gg \rho$ denotes that the kernel of $\sigma$ is a subset of the kernel of $\rho$. Then, the (classical) Kullback-Leibler divergence is defined as 

\begin{equation}  \label{eq:rel_entr}
D(\rho \| \sigma):=\frac{ \tr \, \rho ( \log  \rho  - \log  \sigma) }{\tr \, \rho} \, .
\end{equation}
To ensure continuity, we use the convention that  $0 \log 0 = 0$. \\
Relaxing one of the five axioms allows then (in addition to the  Kullback-Leibler divergence) a whole family of divergences, the so called R\'enyi divergences. For $\alpha \in  (0,1) \cup  (1,\infty)$ and diagonal operators $\rho \neq 0$ and $\sigma \gg \rho$, the (classical) R\'enyi  divergences are defined as

\begin{equation}  \label{eq:rel_reny_entropy}
D_{\alpha}(\rho \| \sigma):=\frac{1}{\alpha-1} \log \frac{ \tr \, \rho^{\alpha} \sigma^{1-\alpha} }{\tr \, \rho} \, .
\end{equation}

To ensure continuity, we use the convention that $\tfrac{0}{0} = 1$. \\

Adapting R{\'e}nyi's axioms to the quantum case leads to the following axioms~\eqref{enum:axiom1}-\eqref{enum:axiom6}. Let  $\rho \neq 0$,  $\tilde{\rho} \neq 0$ and $\sigma \gg \rho$, $\tilde{\sigma} \gg \tilde{\rho}$ be non-negative operators. Then,  a quantum R\'enyi  divergence $\mathbb{D}$ satisfies all of the following axioms. (We refer to~\cite{tomamichel_quantum_2016} for a more detailed discussion of the axioms.)

\begin{enumerate}[(I)] 
  \item \textbf{Continuity}: $\mathbb{D}(\rho \| \sigma)$ is continuous in $\rho$ and $\sigma$.\footnote{Note that this axiom excludes quantum R\'enyi divergences with parameters $\alpha \leqslant 0$.} \label{enum:axiom1}
  \item\textbf{ Unitary invariance}:  $\mathbb{D}(\rho \| \sigma)= \mathbb{D}(U\rho U^{\dagger} \| U\sigma U^{\dagger})$ for any unitary $U$. \label{axiom:Uni}
  \item\textbf{ Normalization}: $\mathbb{D}(1 \| \frac{1}{2})=\log(2)$.
  \item \textbf{Order}: If $\rho\geqslant \sigma$, then $\mathbb{D}(\rho \| \sigma)\geqslant 0$. If $\rho \leqslant \sigma $, then $\mathbb{D}(\rho \| \sigma)\leqslant 0$.
  \item \textbf{Additivity}:  $\mathbb{D}(\rho \otimes \tilde{\rho} \| \sigma \otimes \tilde{\sigma})=\mathbb{D}(\rho \| \sigma)+\mathbb{D}(\tilde{\rho} \| \tilde{\sigma})$. \label{axiom:Add}
  \item \textbf{General Mean}: There exists a continuous and strictly monotonic function $g$ such that $\mathbb{Q}(\cdot \| \cdot) := g(\mathbb{D}(\cdot \| \cdot))$ satisfies 
  
\begin{equation}  \label{eq:gen_mean} 
\mathbb{Q}(\rho \oplus \tilde{\rho} \| \sigma \oplus \tilde{\sigma})=\frac{\tr \rho}{\tr (\rho+\tilde{\rho})} \mathbb{Q}(\rho\| \sigma) +  \frac{\tr \tilde{\rho}}{\tr (\rho+\tilde{\rho})} \mathbb{Q}(\tilde{\rho}\| \tilde{\sigma}) \, .
\end{equation}
\label{enum:axiom6}
 \end{enumerate}

Since it is desirable to have an interpretation of a R\'enyi  divergence as a measure of distinguishability, the following property is desirable. 
\begin{enumerate}[(DPI)] 
  \item  \textbf{Data-processing inequality}: For all completely positive, trace-preserving (CPTP) maps $\mathcal{E}$ and for all non-negative operators $\rho$ and $\sigma$, we have
\begin{equation}\label{eq:DPI}
\mathbb{D}(\rho \| \sigma)\geqslant \mathbb{D}\big(\mathcal{E}(\rho) \|\mathcal{E}( \sigma)\big) \, .
\end{equation}
  \end{enumerate}

The DPI can be viewed as the statement that the distinguishability of two density operators $\rho$ and $\sigma$ can only decrease under the application of a quantum channel. In the classical case, the axioms~\eqref{enum:axiom1}-\eqref{enum:axiom6} imply the data-processing inequality (DPI), but it is an open question if this is also the case in the quantum case. Note that the DPI is mathematically more involved than the axioms~\eqref{enum:axiom1}-\eqref{enum:axiom6}.\\

In contrast to the classical case, there is not a unique family of functionals that satisfies the axioms~\eqref{enum:axiom1}-\eqref{enum:axiom6} in the quantum case. This is based on the non commuting nature of quantum mechanics.  Indeed, since two non-negative operators do not commute in general, the order of the operators in the functional matters in the quantum case and leads to more possibilities than in the classical one. Interestingly, there are several different functionals (so called quantum R\'enyi  divergences) which satisfy~\eqref{enum:axiom1}-\eqref{enum:axiom6} and the DPI.

In the following, we restrict our attention to four families of quantum  R\'enyi divergences. The two most important ones are the \emph{Petz quantum R\'enyi divergence}~\cite{petz_quasi-entropies_1986} and the \emph{minimal quantum R\'enyi divergence}~\cite{muller-lennert_quantum_2013,wilde_strong_2014} (also known as \emph{sandwiched quantum R\'enyi divergence}), which have proven particularly useful, finding application to achievability, strong converses, and refined asymptotic analysis of a variety of coding and hypothesis testing problems (for a recent overview, see~\cite{tomamichel_quantum_2016}). \\
The \emph{reverse minimal quantum R\'enyi divergence} was introduced in   ~\cite{audenaert_alpha-z-relative_2015} under the name "reverse sandwiched R\'enyi relative entropy". It is especially interesting in the limit $\alpha \rightarrow 0$, where it reduces to the 0-R\'enyi relative divergence, which has been used for one-shot information theory~\cite{wang_one-shot_2012, buscemi_entanglement_2011}.\\
In addition, we consider the \emph{maximal quantum R\'enyi divergence}~\cite{Matsumoto_2016,tomamichel_quantum_2016}, which is mathematically interesting, since it provides an upper bound on all possible quantum R\'enyi divergences. Moreover, it is related to the geometric mean of two matrices.\\

\section{Quantum R\'enyi divergences} \label{sec:important_divergences}
We define four important families of quantum R\'enyi divergences. An overview over the different families of divergences for fixed arguments $\rho$ and $\sigma$ is shown in Figure~\ref{fig: divergences}.

For two non-negative operators $\rho\neq 0$ and $\sigma$ and $\alpha \in (0,1)\cup (1,\infty)$, the \emph{Petz quantum R\'enyi divergence} is defined as
\begin{equation} \label{eq:petz_entropy}
\widebar{D}_{\alpha}(\rho \| \sigma):= \begin{cases}
\frac{1}{\alpha-1} \log \frac{1}{\tr \rho} \widebar{Q}_{\alpha}(\rho \| \sigma) &\text{if $ \sigma \gg \rho \lor \alpha<1 $}\\
\infty &\text{otherwise}\, ,
\end{cases}
\end{equation}
where $\widebar{Q}_{\alpha}(\rho\|\sigma):=   \tr \rho^\alpha \sigma^{1-\alpha}$ and we use the common convention that  $-\log 0=\infty$. Moreover, negative matrix powers are only evaluated on the support of the non-negative operator throughout this thesis. The Petz divergence $\widebar{D}_{\alpha}$ satisfies the axioms~\eqref{enum:axiom1}-\eqref{enum:axiom6} and the DPI for $\alpha \in (0,1) \cup (1,2]$~\cite{petz_quasi-entropies_1986}.

\begin{figure}[!htb]
\centering
 \includegraphics[height=50mm]{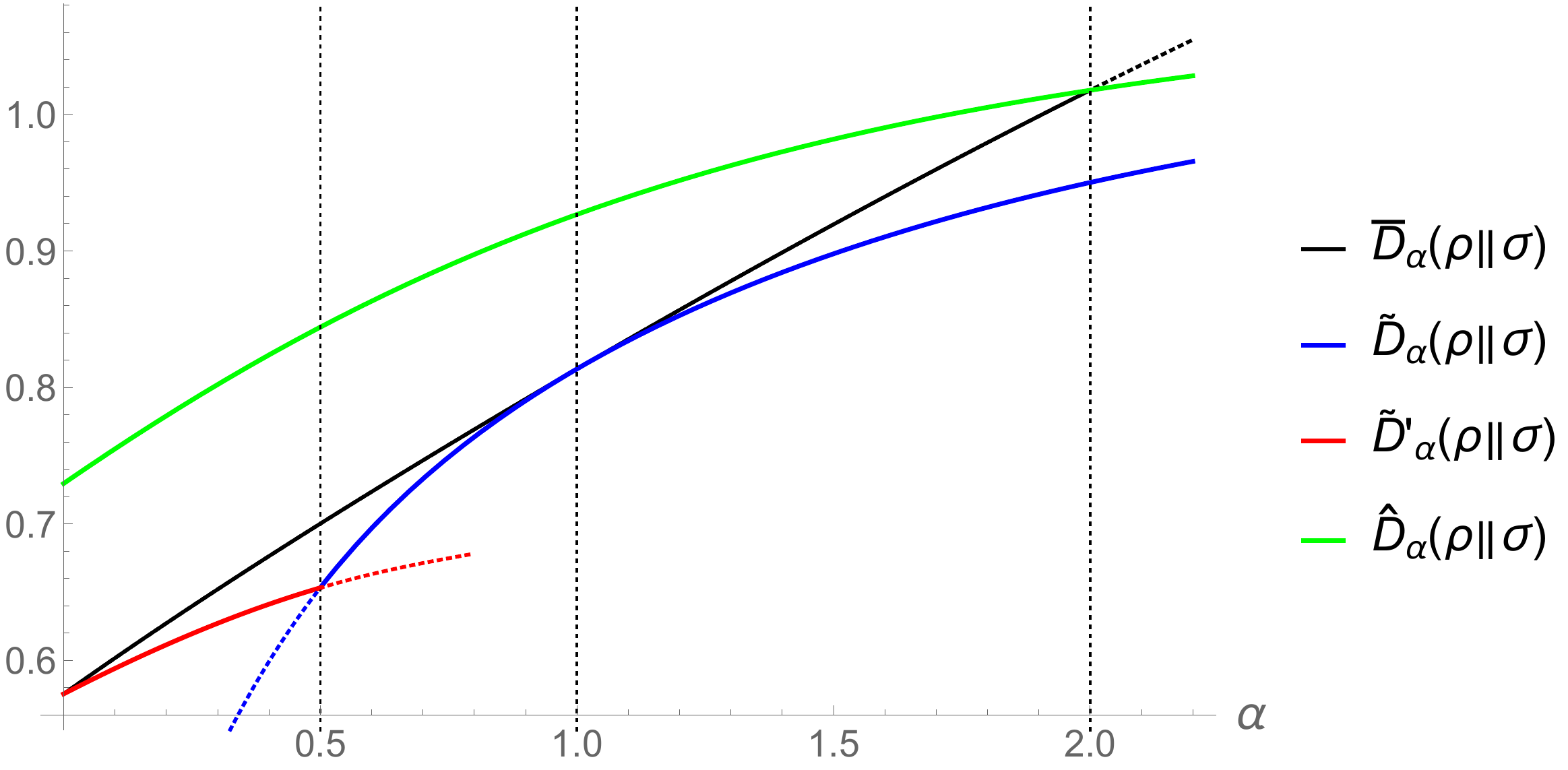}
\caption[Overview of divergence families]{Important families of divergences represented dependent on the parameter $\alpha$ for fixed density matrices $\rho$ and $\sigma$ (where we have chosen the matrices $\rho$ and $\sigma$ as in Figure~4.1of~\cite{tomamichel_quantum_2016}, for ease of comparison). The dashed line signalizes that the DPI is not satisfied for the corresponding $\alpha$.}
\label{fig: divergences}
\end{figure}

The \emph{minimal quantum R\'enyi divergence} (which is also called "sandwiched quantum R\'enyi relative entropy") on the other hand is defined by 
\begin{equation}  \label{eq:min_entropy} 
\widetilde{D}_{\alpha}(\rho \| \sigma):= \begin{cases}
\frac{1}{\alpha-1} \log  \frac{1}{\tr \rho}  \widetilde{Q}_{\alpha}(\rho \| \sigma) &\text{if $ \sigma \gg \rho \lor \alpha<1 $}\\
\infty &\text{otherwise} \, ,
\end{cases}
\end{equation}
where $\widetilde{Q}_{\alpha}(\rho\|\sigma):=\tr \left(\sigma^{\frac{1-\alpha}{2\alpha}}\rho\sigma^{\frac{1-\alpha}{2\alpha}}\right)^\alpha$. The minimal divergence $\widetilde{D}_{\alpha}$  satisfies the axioms~\eqref{enum:axiom1}-\eqref{enum:axiom6} and the DPI for $\alpha \in [\frac{1}{2}, \infty]$~\cite{frank_monotonicity_2013} (see also~\cite{beigi_sandwiched_2013}).

We will show in Section~\ref{sec:min_max_divergence} that the natural continuation of the minimal quantum R\'enyi divergence for $\alpha \in (0,\frac{1}{2})$ that satisfies the DPI is given by the \emph{reverse minimal  quantum R\'enyi divergence}, which is defines as follows 
\begin{equation}  \label{eq:min_reversed_entropy} 
\widetilde{D'}_{\alpha}(\rho \| \sigma):= \frac{1}{\alpha-1} \log  \frac{1}{\tr \rho}  \widetilde{Q'}_{\alpha}(\rho \| \sigma)  \, ,
\end{equation}
where $\widetilde{Q'}_{\alpha}(\rho\|\sigma):=\tr \left(\rho^{\frac{\alpha}{2(1-\alpha)}}\sigma\rho^{\frac{\alpha}{2(1-\alpha)}}\right)^{1-\alpha}$ for $\alpha <1$ and non-negative operators $\rho\neq 0$ and $\sigma$. This divergence was introduced in a different context in~\cite{audenaert_alpha-z-relative_2015} under the name "reverse sandwiched R\'enyi relative entropy", where it was also shown that it satisfies the axioms~\eqref{enum:axiom1}-\eqref{enum:axiom6} and the DPI~for $\alpha \in (0,\frac{1}{2}]$. Note that the name of the reverse minimal  quantum R\'enyi divergence is motivated by the following symmetry relation introduced in equation~(10) in~\cite{audenaert_alpha-z-relative_2015}: For any density operators $\rho\neq0$ and $\sigma \gg \rho \,$, we have that
\begin{equation}  \label{eq:sym_rel} 
\widetilde{D'}_{\alpha}(\rho \| \sigma)=\frac{\alpha}{1-\alpha} \widetilde{D}_{1-\alpha}(\sigma \| \rho  ).
\end{equation}

The \emph{maximal quantum R\'enyi divergence} on the other hand is defined by 
\begin{equation}  \label{eq:max_entropy}
\widehat{D}_{\alpha}(\rho \| \sigma):= \begin{cases}
\frac{1}{\alpha-1} \log  \frac{1}{\tr \rho}  \widehat{Q}_{\alpha}(\rho \| \sigma) &\text{if $ \sigma \gg \rho $}\\
\infty &\text{otherwise} \, ,
\end{cases}
\end{equation}
where $\widehat{Q}_{\alpha}(\rho\|\sigma):=\tr \, \sigma \left(\sigma^{-\frac{1}{2}}\rho\sigma^{-\frac{1}{2}}\right)^\alpha$. For $\alpha < 1$, this expression is the trace of a matrix mean $\widehat{Q}_{\alpha}(\rho\|\sigma)=\tr \, \sigma^{\frac{1}{2}} \left(\sigma^{-\frac{1}{2}}\rho\sigma^{-\frac{1}{2}}\right)^\alpha\sigma^{\frac{1}{2}}=:\tr \, \sigma \#_{\alpha} \rho$~\cite{kubo_means_1979} . In particular, $\sigma \#_{\alpha} \rho$ corresponds to the geometric mean of $\rho$ and $\sigma$ for $\alpha=\nicefrac{1}{2}$. The joint concavity of the matrix means $ \sigma \#_{\alpha} \rho$ leads then to the DPI for  $\widehat{D}_{\alpha}$ (cf. for example~\cite{tomamichel_quantum_2016}  for more details).

Moreover, we define $\mathbb{D}_{0}$, $\mathbb{D}_{1}$ and $\mathbb{D}_{\infty}$  as limits of $\mathbb{D}_{\alpha}$ for $\alpha \rightarrow 0$, $\alpha \rightarrow 1$ and $\alpha \rightarrow \infty$, respectively, for any family of quantum R\'enyi divergences  $\mathbb{D}_{\alpha}$.\\

\begin{myrmk}
By construction, all quantum  R\'enyi divergences $\mathbb{D}_{\alpha}(\rho \| \sigma)$ reduces to the corresponding classical R\'enyi divergence for non-negative diagonal operators $\rho$ and $\sigma$. In other words, we have that $\mathbb{D}_{\alpha}(\rho \| \sigma)=D_{\alpha}(\rho \| \sigma)$ for all $\alpha \in  (0,1) \cup  (1,\infty)$ and that $\mathbb{D}_{1}(\rho \| \sigma)=D(\rho \| \sigma)$ for  non-negative operators $\rho$ and $\sigma$ with $[\rho,\sigma]=0$, where $D_{\alpha}$ is given in~\eqref{eq:rel_reny_entropy} and $D$ is given in~\eqref{eq:rel_entr}.
\end{myrmk}


\section{Limits of quantum R\'enyi divergences} \label{sec:limit}
 
In this thesis, we are only interested in the limit cases of quantum  R\'enyi divergences for $\alpha \rightarrow 1$, and we refer to~\cite{tomamichel_quantum_2016} for a discussion of the limits $\alpha \rightarrow \infty$ and $\alpha \rightarrow 0$. The divergences $\widetilde{D}_{\alpha}$, $\widebar{D}_{\alpha}$ and $\widehat{D}_{\alpha}$ converge to interesting quantities in the limit $\alpha \rightarrow 1$. It is well known that $\widetilde{D}_{1}=\widebar{D}_{1}$. The derivation of the expressions for  $\widetilde{D}_{1}$ and $\widehat{D}_{1}$ follow quite directly by an application of the l'H\^{o}pital rule. We refer to~\cite{tomamichel_quantum_2016} for the proofs of the following propositions.
 
 \begin{myprop} [Proposition~4.5 in~\cite{tomamichel_quantum_2016}] \label{prop:limit_min}
  Let  $\rho \neq 0$ and $\sigma \gg \rho$. Then
 \begin{equation}  \label{eq:limit_min_div} 
\lim_{\alpha \rightarrow 1} \widetilde{D}_{\alpha}(\rho \| \sigma)=\frac{1}{\tr \, \rho} \tr \,  \rho ( \log  \rho  - \log  \sigma)  \, ,
\end{equation}
where the right hand side corresponds to the quantum divergence (which reduces to the (classica)  Kullback-Leibler divergence given in~\eqref{eq:rel_entr} in the commuting case).
 \end{myprop}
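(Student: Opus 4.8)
The plan is to treat $\alpha\mapsto\widetilde{D}_\alpha(\rho\|\sigma)$ as a $0/0$ indeterminate form at $\alpha=1$ and apply l'H\^{o}pital's rule. First I would record that $\widetilde{Q}_1(\rho\|\sigma)=\tr\rho$: indeed $\sigma^{(1-\alpha)/2\alpha}$ at $\alpha=1$ equals the support projection of $\sigma$, which acts as the identity on $\supp\rho$ because $\sigma\gg\rho$, so $f(1):=\sigma^{(1-\alpha)/2\alpha}\rho\,\sigma^{(1-\alpha)/2\alpha}\big|_{\alpha=1}=\rho$ and $\widetilde{Q}_1(\rho\|\sigma)=\tr(\rho)^1=\tr\rho$. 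Hence $\log\tfrac{1}{\tr\rho}\widetilde{Q}_\alpha(\rho\|\sigma)\to 0$ and $\alpha-1\to 0$ as $\alpha\to1$. Since $\alpha\mapsto\widetilde{Q}_\alpha(\rho\|\sigma)$ is smooth near $\alpha=1$ (it is built from matrix exponentials and matrix powers of fixed positive operators, whose nonzero eigenvalues stay bounded away from $0$ for $\alpha$ close to $1$), l'H\^{o}pital's rule gives
\[
\lim_{\alpha\to1}\widetilde{D}_\alpha(\rho\|\sigma)
=\lim_{\alpha\to1}\frac{d}{d\alpha}\log\!\Big(\tfrac{1}{\tr\rho}\widetilde{Q}_\alpha(\rho\|\sigma)\Big)
=\frac{1}{\tr\rho}\,\frac{d}{d\alpha}\widetilde{Q}_\alpha(\rho\|\sigma)\Big|_{\alpha=1},
\]
so the whole statement reduces to evaluating this single derivative.

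For the derivative, write $f(\alpha):=\sigma^{\frac{1-\alpha}{2\alpha}}\rho\,\sigma^{\frac{1-\alpha}{2\alpha}}$, so that $\widetilde{Q}_\alpha(\rho\|\sigma)=\tr f(\alpha)^\alpha=\tr\exp\!\big(\alpha\log f(\alpha)\big)$, where all functional calculus is understood on $\supp f(\alpha)=\supp\rho$ (constant for $\alpha$ near $1$). I would then use the standard identity $\frac{d}{d\alpha}\tr\exp A(\alpha)=\tr\!\big[A'(\alpha)\exp A(\alpha)\big]$, which follows from the integral formula for the derivative of the matrix exponential together with cyclicity of the trace, applied with $A(\alpha)=\alpha\log f(\alpha)$. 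Since $f(1)=\rho$ this yields
\[
\frac{d}{d\alpha}\widetilde{Q}_\alpha(\rho\|\sigma)\Big|_{\alpha=1}
=\tr\!\Big[\Big(\log\rho+\tfrac{d}{d\alpha}\log f(\alpha)\big|_{\alpha=1}\Big)\rho\Big]
=\tr[\rho\log\rho]+\tr\!\Big[\Big(\tfrac{d}{d\alpha}\log f(\alpha)\big|_{\alpha=1}\Big)\rho\Big].
\]

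It remains to handle the second trace and to compute $f'(1)$. For the trace I would use the integral representation $\log f=\int_0^\infty\big[(1+s)^{-1}-(f+s)^{-1}\big]\,ds$, differentiate under the integral sign to get $\frac{d}{d\alpha}\log f(\alpha)=\int_0^\infty(f(\alpha)+s)^{-1}f'(\alpha)(f(\alpha)+s)^{-1}\,ds$, and then use cyclicity of the trace together with $[(\rho+s)^{-1},\rho]=0$ and $\int_0^\infty\rho(\rho+s)^{-2}\,ds=\Pi_\rho$ (the support projection of $\rho$) to collapse the integral, obtaining $\tr\big[\big(\tfrac{d}{d\alpha}\log f(\alpha)|_{\alpha=1}\big)\rho\big]=\tr[f'(1)\Pi_\rho]=\tr f'(1)$. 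Finally, differentiating $\sigma^{\frac{1-\alpha}{2\alpha}}=\exp\!\big(\tfrac{1-\alpha}{2\alpha}\log\sigma\big)$, using $\tfrac{d}{d\alpha}\tfrac{1-\alpha}{2\alpha}\big|_{\alpha=1}=-\tfrac12$ and $\sigma^{0}\rho=\rho\sigma^{0}=\rho$ (again via $\sigma\gg\rho$), gives $f'(1)=-\tfrac12\big(\log\sigma\cdot\rho+\rho\cdot\log\sigma\big)$, hence $\tr f'(1)=-\tr[\rho\log\sigma]$. Combining the pieces, $\frac{d}{d\alpha}\widetilde{Q}_\alpha(\rho\|\sigma)|_{\alpha=1}=\tr[\rho(\log\rho-\log\sigma)]$, which gives the claim.

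I expect the main obstacle to be the careful, fully rigorous matrix calculus rather than any conceptual difficulty: justifying differentiation under the integral sign, the trace--derivative identity for the matrix exponential, and the smoothness of $\alpha\mapsto\widetilde{Q}_\alpha$ near $\alpha=1$; and, interwoven with this, the bookkeeping around supports --- since matrix powers are defined only on supports, one must invoke $\sigma\gg\rho$ at several points to replace $\sigma^{0}$ by the support projection of $\sigma$, to ensure $\log\sigma$ is well-defined on $\supp\rho$, and to identify the common support of $f(\alpha)$ with $\supp\rho$ for $\alpha$ near $1$.
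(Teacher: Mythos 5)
The paper itself contains no proof of this proposition: it cites Proposition~4.5 of~\cite{tomamichel_quantum_2016} and only remarks that the expression ``follows quite directly by an application of the l'H\^opital rule''. Your reduction via l'H\^opital to the single derivative $\frac{d}{d\alpha}\widetilde{Q}_\alpha(\rho\|\sigma)\big|_{\alpha=1}$ is exactly the route the paper points to, and your final value $\tr[\rho(\log\rho-\log\sigma)]$ is correct; for $\rho$ positive definite (equivalently, of full rank on $\supp\sigma$) your argument is essentially complete.

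There is, however, a genuine gap in the rank-deficient case, and it is localized in your claim that $\supp f(\alpha)=\supp\rho$ is constant for $\alpha$ near $1$. Restricting to $\supp\sigma$, where $X_\alpha:=\sigma^{\frac{1-\alpha}{2\alpha}}$ is invertible, one has $\ker f(\alpha)=X_\alpha^{-1}(\ker\rho)$, which rotates with $\alpha$ unless $\supp\rho$ is $\sigma$-invariant (take $\rho$ a rank-one projector not commuting with $\sigma$); only the \emph{rank} of $f(\alpha)$ is constant. Everything downstream leans on this false constancy: $\log f(\alpha)$ is finite only on a moving support, so the rewriting $\widetilde{Q}_\alpha=\tr\exp\big(\alpha\log f(\alpha)\big)$, the identity $\frac{d}{d\alpha}\tr e^{A(\alpha)}=\tr[A'(\alpha)e^{A(\alpha)}]$ and the evaluation $e^{A(1)}=\rho$ are not available as written; moreover the formula $\frac{d}{d\alpha}\log f(\alpha)=\int_0^\infty (f+s)^{-1}f'(f+s)^{-1}\,ds$ actually diverges when $f$ has a kernel (on kernel directions the integrand behaves like $s^{-2}$ near $s=0$), so the collapse to $\tr[f'(1)\Pi_\rho]$ is precisely the step needing justification rather than a consequence of differentiating under the integral sign. (Also, $\tr[f'(1)\Pi_\rho]=\tr f'(1)$ is not a generic identity; it holds here only because of the explicit form $f'(1)=-\tfrac12(\log\sigma\,\rho+\rho\log\sigma)$ together with $\rho\Pi_\rho=\rho$, so it should be invoked only after that computation.) The gap is repairable: keep your argument for the full-rank case, and for singular $\rho$ avoid $\log f(\alpha)$ on a moving support altogether, e.g.\ by noting that $\alpha\mapsto f(\alpha)$ is an analytic Hermitian family of constant rank near $\alpha=1$, so its nonzero eigenvalues $\lambda_i(\alpha)$ can be chosen analytic and bounded away from zero, whence $\widetilde{Q}_\alpha=\sum_i\lambda_i(\alpha)^{\alpha}$ differentiates termwise to $\tr[\rho\log\rho]+\tr f'(1)=\tr[\rho(\log\rho-\log\sigma)]$; this also substantiates the smoothness you invoked for l'H\^opital.
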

 
 \begin{myprop} [See Section~4.2.3 in~\cite{tomamichel_quantum_2016}] \label{prop:limit_max}
  Let  $\rho \neq 0$ and $\sigma \gg \rho$. Then
 \begin{equation}  \label{eq:limit_max_div} 
\lim_{\alpha \rightarrow 1} \widehat{D}_{\alpha}(\rho \| \sigma)=  \frac{1}{\tr \rho}  \tr \, \rho \log \left( \rho^{\frac{1}{2}} \sigma^{-1}\rho^{\frac{1}{2}} \right) \, ,
\end{equation}
where the expression on the right hand side is known under the name Belavkin-Staszewski relative entropy~\cite{Belvakin1982}.
 \end{myprop}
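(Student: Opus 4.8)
\ The plan is to evaluate the limit as a derivative at $\alpha=1$. For $\alpha$ near $1$ the hypothesis $\sigma\gg\rho$ puts us in the first branch of~\eqref{eq:max_entropy}, so
\[
  \widehat{D}_{\alpha}(\rho\|\sigma)=\frac{\log\widehat{Q}_{\alpha}(\rho\|\sigma)-\log\tr\rho}{\alpha-1}\,.
\]
Since $\widehat{Q}_{1}(\rho\|\sigma)=\tr\,\sigma\,\sigma^{-1/2}\rho\,\sigma^{-1/2}=\tr\rho$, both numerator and denominator vanish at $\alpha=1$, so — once we know $\alpha\mapsto\widehat{Q}_{\alpha}(\rho\|\sigma)$ is differentiable near $\alpha=1$ (checked below) — the limit equals the derivative of $\log\widehat{Q}_{\alpha}(\rho\|\sigma)$ at $\alpha=1$, i.e.\ $\widehat{Q}_{1}'(\rho\|\sigma)/\tr\rho$; this is a single l'H\^{o}pital step. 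Everything thus reduces to making the $\alpha$-dependence of $\widehat{Q}_{\alpha}$ explicit.

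The key step is the identity
\[
  \widehat{Q}_{\alpha}(\rho\|\sigma)=\tr\,\rho\,\bigl(\rho^{1/2}\sigma^{-1}\rho^{1/2}\bigr)^{\alpha-1}\,,
\]
which I would derive from the intertwining relation $f(LL^{\dagger})\,L=L\,f(L^{\dagger}L)$ — valid for any $f$ vanishing at $0$, in particular for $f(x)=x^{\alpha-1}$ with the convention that inverse powers live on the support — applied to $L:=\rho^{1/2}\sigma^{-1/2}$, so that $LL^{\dagger}=\rho^{1/2}\sigma^{-1}\rho^{1/2}=:B$ and $L^{\dagger}L=\sigma^{-1/2}\rho\,\sigma^{-1/2}=:A$. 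Writing $\widehat{Q}_{\alpha}(\rho\|\sigma)=\tr\,\sigma A^{\alpha}=\tr\,(\sigma A)A^{\alpha-1}$, using the factorisation $\sigma A=\sigma^{1/2}\rho^{1/2}L$ and then $LA^{\alpha-1}=B^{\alpha-1}L$, cyclicity of the trace collapses the expression to $\tr\,\rho\,B^{\alpha-1}$; this is the routine computation I will not spell out.

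Granting this, $B=\rho^{1/2}\sigma^{-1}\rho^{1/2}$ is a fixed positive operator on a finite-dimensional space, so $\alpha\mapsto B^{\alpha-1}=\exp\!\bigl((\alpha-1)\log B\bigr)$ is real-analytic; hence $\widehat{Q}_{\alpha}(\rho\|\sigma)=\tr\,\rho\,B^{\alpha-1}$ is analytic near $\alpha=1$, which supplies the differentiability invoked above, and $\tfrac{\d}{\d\alpha}\widehat{Q}_{\alpha}(\rho\|\sigma)=\tr\,\rho\,B^{\alpha-1}\log B$, so $\widehat{Q}_{1}'(\rho\|\sigma)=\tr\,\rho\log B=\tr\,\rho\log(\rho^{1/2}\sigma^{-1}\rho^{1/2})$. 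Combined with the first paragraph, this gives $\lim_{\alpha\to1}\widehat{D}_{\alpha}(\rho\|\sigma)=\tfrac{1}{\tr\rho}\tr\,\rho\log(\rho^{1/2}\sigma^{-1}\rho^{1/2})$, as desired.

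The part I expect to be most delicate is not the calculus but the support bookkeeping, since $\rho$ and $\sigma$ need not be invertible. I would handle it by restricting every operator to $\supp\sigma$ (legitimate because $\sigma\gg\rho$ forces $\supp\rho\subseteq\supp\sigma$), where $\sigma^{-1/2}$ and $L$ are honestly defined, and by checking that $\ker B=\ker\rho$ — so that $B^{0}$ acts as the identity on $\supp\rho$ (giving $\widehat{Q}_{1}(\rho\|\sigma)=\tr\rho$) and $\log B$ is well defined on the subspace where $\rho$ is supported. With these observations the intertwining identity and the trace manipulations go through verbatim, so the remaining work is purely mechanical.
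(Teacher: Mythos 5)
Your proposal is correct and takes essentially the route the paper itself indicates (the paper gives no proof of its own, deferring to the cited reference and noting only that the $\alpha\to 1$ limit follows by l'H\^opital): you make this precise by rewriting $\widehat{Q}_{\alpha}(\rho\|\sigma)=\tr\,\rho\,(\rho^{1/2}\sigma^{-1}\rho^{1/2})^{\alpha-1}$ via the intertwining identity and reading off the limit as the derivative of $\log\widehat{Q}_{\alpha}$ at $\alpha=1$. The support bookkeeping ($\supp\rho\subseteq\supp\sigma$, $\ker(\rho^{1/2}\sigma^{-1}\rho^{1/2})=\ker\rho$) is handled correctly, so I see no gaps.
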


\section{Minimal and maximal quantum R\'enyi  divergence} \label{sec:min_max_divergence}

In this section, we give lower and upper bounds on arbitrary quantum R\'enyi divergences, where we focus on the lower bound, which has turned out to be useful for many applications in quantum information theory. Using the construction of~\cite{Matsumoto_2016}, it was shown in~\cite{tomamichel_quantum_2016} that every  quantum R\'enyi divergence $\mathbb{D}_{\alpha}$ that satisfies the DPI is smaller than the maximal quantum R\'enyi divergence, i.e., $\mathbb{D}_{\alpha}(\rho \| \sigma) \leqslant \widehat{D}_{\alpha}(\rho \| \sigma)$ for any non-negative operators $\rho$ and $\sigma$ and any $\alpha \geqslant 0$. \\

In~\cite{tomamichel_quantum_2016}, it is shown that $\widetilde{D}_{\alpha}$ provides a lower bound on arbitrary quantum R\'enyi divergences $\mathbb{D}_{\alpha}$. Since  $\widetilde{D}_{\alpha}$ satisfies the DPI for $\alpha \in [\frac{1}{2}, \infty]$, we conclude that $\widetilde{D}_{\alpha}$ is the smallest quantum R\'enyi divergence in this $\alpha$-range. In the following, we show how to find the smallest quantum R\'enyi divergence for $\alpha \in (0,\frac{1}{2})$ that satisfies the DPI using the same proof techniques as used in~\cite{tomamichel_quantum_2016} for the case $\alpha \in [\frac{1}{2}, \infty]$. \\

Let us first recall an interesting characterization of the minimal quantum R\'enyi divergence. For this purpose, we  define the pinching map $\mathcal{P}_{H}$ corresponding to a Hermitian operator $H$. Every Hermitian operator $H$ can be decomposed into  $H=\sum_{\lambda} \lambda P_{\lambda}$, where $\lambda \in  \spec(H)$ are the eigenvalues of $H$ (without multiplicity) and $P_{\lambda}$ are mutually orthogonal projectors. Then, the pinching map $\mathcal{P}_{H}$ is defined as a superoperator on linear operators $L$ by sending $\mathcal{P}_{H}:L\rightarrow \sum_{\lambda} P_{\lambda} L  P_{\lambda}$. Note that $\mathcal{P}_{H}$ is a CPTP, unital and self-adjoint map, which can be viewed as a dephasing operation that remove off-diagonal blocks of a matrix. Clearly, we have that $[\mathcal{P}_H[P],H]=0$ for a non-negative operator $P$. 

\begin{myprop}[Proposition 4.4 in~\cite{tomamichel_quantum_2016}] \label{prop:char_min_div}
Let $\alpha \geqslant 0$ and $\rho\neq 0$ and $\sigma \gg \rho$  be two non-negative operators. Then
\begin{equation}  \label{eq:char_min_div} 
\widetilde{D}_{\alpha}(\rho \| \sigma)= \lim_{n \rightarrow \infty} \frac{1}{n} D_{\alpha}(\mathcal{P}_{\sigma^{\otimes n}}\left[ \rho^{\otimes n} \right] \| \sigma^{\otimes n}) \, ,
\end{equation}
where $D_{\alpha}$ denotes the classical R\'enyi divergence.\footnote{Note that $\sigma^{\otimes n}$ and $\mathcal{P}_{\sigma^{\otimes n}}\left[ \rho^{\otimes n} \right] $ are diagonal in the eigenbasis of $\sigma^{\otimes n}$, which ensures that $D_{\alpha}$ appearing on the right hand side of~\eqref{eq:char_min_div} can be considered to be classical.}
\end{myprop}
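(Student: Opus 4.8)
\section*{Proof proposal}

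Since both sides of~\eqref{eq:char_min_div} pick up $\log c$ under $\rho\mapsto c\rho$ (because of the $\tfrac1{\tr\rho}$ normalisation in the definitions), I may assume $\tr\rho=1$. Then $\mathcal{P}_{\sigma^{\otimes n}}[\rho^{\otimes n}]$ has unit trace and commutes with $\sigma^{\otimes n}$, so the classical divergence on the right of~\eqref{eq:char_min_div} can be written as $\tfrac1{\alpha-1}\log\tr\,\mathcal{P}_{\sigma^{\otimes n}}[\rho^{\otimes n}]^\alpha(\sigma^{\otimes n})^{1-\alpha}$. I would treat $\alpha\in(0,1)\cup(1,\infty)$ and recover $\alpha=1$ afterwards by continuity (Proposition~\ref{prop:limit_min}).

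The first step is an exact rewriting. Put $\tau:=\sigma^{\frac{1-\alpha}{2\alpha}}\rho\,\sigma^{\frac{1-\alpha}{2\alpha}}$, so that $\widetilde{Q}_\alpha(\rho\|\sigma)=\tr\,\tau^\alpha$ and hence $\widetilde{Q}_\alpha(\rho^{\otimes n}\|\sigma^{\otimes n})=(\tr\,\tau^\alpha)^n=\tr(\tau^{\otimes n})^\alpha$. Because $(\sigma^{\otimes n})^{\frac{1-\alpha}{2\alpha}}=(\sigma^{\frac{1-\alpha}{2\alpha}})^{\otimes n}$ is a function of $\sigma^{\otimes n}$, it commutes with every spectral projector of $\sigma^{\otimes n}$, so $\mathcal{P}_{\sigma^{\otimes n}}$ commutes with conjugation by it; combining this with $[\mathcal{P}_{\sigma^{\otimes n}}[\rho^{\otimes n}],\sigma^{\otimes n}]=0$ gives the identity $\tr\,\mathcal{P}_{\sigma^{\otimes n}}[\rho^{\otimes n}]^\alpha(\sigma^{\otimes n})^{1-\alpha}=\tr\,\mathcal{P}_{\sigma^{\otimes n}}[\tau^{\otimes n}]^\alpha$. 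Therefore
\begin{equation*}
D_\alpha\!\left(\mathcal{P}_{\sigma^{\otimes n}}[\rho^{\otimes n}]\,\big\|\,\sigma^{\otimes n}\right)-n\,\widetilde{D}_\alpha(\rho\|\sigma)=\frac{1}{\alpha-1}\,\log\frac{\tr\,\mathcal{P}_{\sigma^{\otimes n}}[\tau^{\otimes n}]^\alpha}{\tr(\tau^{\otimes n})^\alpha}\,,
\end{equation*}
so the proposition reduces to showing that the pinching $\mathcal{P}_{\sigma^{\otimes n}}$ changes $\tr(\cdot)^\alpha$ of the \emph{fixed} positive operator $\tau^{\otimes n}$ only by a factor that is subexponential in $n$.

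The second step supplies this. Write $v_n$ for the number of distinct eigenvalues of $\sigma^{\otimes n}$; then $v_n\leqslant(n+1)^{d-1}$ with $d=\dim\mathcal{H}$. For the lower bound on the ratio I use Hayashi's pinching inequality $\tau^{\otimes n}\leqslant v_n\,\mathcal{P}_{\sigma^{\otimes n}}[\tau^{\otimes n}]$ together with Weyl's monotonicity theorem and the monotonicity of $t\mapsto t^\alpha$, which yields $\tr(\tau^{\otimes n})^\alpha\leqslant v_n^{\alpha}\,\tr\,\mathcal{P}_{\sigma^{\otimes n}}[\tau^{\otimes n}]^\alpha$. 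For the upper bound I note that $\mathcal{P}_{\sigma^{\otimes n}}[\tau^{\otimes n}]$ is block-diagonal with $v_n$ blocks $P_j\tau^{\otimes n}P_j$, each a compression whose $k$-th largest eigenvalue is at most the $k$-th largest eigenvalue of $\tau^{\otimes n}$ (Courant--Fischer); summing the resulting inequalities $\tr(P_j\tau^{\otimes n}P_j)^\alpha\leqslant\tr(\tau^{\otimes n})^\alpha$ over $j$ gives $\tr\,\mathcal{P}_{\sigma^{\otimes n}}[\tau^{\otimes n}]^\alpha\leqslant v_n\,\tr(\tau^{\otimes n})^\alpha$. Hence the ratio lies in $[v_n^{-\alpha},v_n]$, the left-hand side of the displayed identity is $O(\log v_n)=O(\log n)$ in absolute value, and dividing by $n$ and letting $n\to\infty$ yields the claim.

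The main obstacle is pinpointing the correct tool in the second step. The tempting move --- extracting the $\leqslant$-direction from the data-processing inequality for $\widetilde{D}_\alpha$ --- only reproduces a trivial estimate, while operator monotonicity of $t\mapsto t^\alpha$ fails for $\alpha>1$ and is unhelpfully one-sided for $\alpha<1$. What actually works is the combination of Hayashi's pinching inequality with the eigenvalue domination of compressions, which bounds $\tr\,\mathcal{P}[\cdot]^\alpha$ both above and below simultaneously for every $\alpha>0$; recognising this, and that the polynomial factor $v_n\leqslant(n+1)^{d-1}$ is exactly mild enough to disappear after normalising by $n$, is the heart of the argument.
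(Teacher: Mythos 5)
The paper does not actually prove this proposition---it is imported verbatim as Proposition~4.4 of~\cite{tomamichel_quantum_2016}---and your argument is essentially the standard proof of that cited result: reduce everything to comparing $\tr\,\mathcal{P}_{\sigma^{\otimes n}}[\tau^{\otimes n}]^\alpha$ with $\tr\,(\tau^{\otimes n})^\alpha$ and absorb the discrepancy into a factor polynomial in $n$ via the pinching inequality and the bound $v_n\leqslant(n+1)^{d-1}$ on the number of distinct eigenvalues of $\sigma^{\otimes n}$. Your two-sided control of the ratio (Hayashi's pinching inequality with Weyl monotonicity for one direction, Cauchy interlacing of the $v_n$ compressions for the other) is correct and handles all $\alpha\in(0,1)\cup(1,\infty)$ at once, including the support subtleties for $\alpha>1$ since $\sigma\gg\rho$. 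The only soft spot is the treatment of the limiting cases: your error term is of order $\frac{\max(1,\alpha)}{|\alpha-1|}\frac{\log v_n}{n}$, which degenerates as $\alpha\to1$, so the claimed recovery of $\alpha=1$ ``by continuity'' does not follow from your own bound (one should instead rerun the same pinching argument directly with logarithms, which works), and $\alpha=0$, which the statement formally includes, is left untreated; both are minor and standard to fix.
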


Proposition~\ref{prop:char_min_div} leads directly to the minimization property of $\widetilde{D}_{\alpha}$.

\begin{mylem}[Section 4.2.2 in~\cite{tomamichel_quantum_2016}] \label{lem:min_of_min_div}
Let $\rho\neq 0$ and $\sigma \gg \rho$  be two non-negative operators. And let $\mathbb{D}_{\alpha}$ be an arbitrary family of quantum R\'enyi divergences that satisfies the DPI. Then
\begin{equation}  
\widetilde{D}_{\alpha}(\rho \| \sigma)\leqslant \mathbb{D}_{\alpha}(\rho \| \sigma) \qquad \text{ for  $\alpha \geqslant 0$}\, .
\end{equation}
\end{mylem}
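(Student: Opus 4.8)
The plan is to derive the bound from the variational characterization of the minimal divergence in Proposition~\ref{prop:char_min_div} together with the data-processing inequality~\eqref{eq:DPI} and the additivity axiom~\eqref{axiom:Add}. Fix $\alpha \in (0,1) \cup (1,\infty)$, non-negative operators $\rho \neq 0$ and $\sigma \gg \rho$, and an arbitrary quantum R\'enyi divergence $\mathbb{D}_\alpha$ satisfying the DPI; the endpoint values $\alpha \in \{0,1,\infty\}$ will be handled afterwards by a limiting argument.

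First I would apply the DPI to the pinching channel $\mathcal{P}_{\sigma^{\otimes n}}$, which is CPTP. Since $\sigma^{\otimes n}$ commutes with itself we have $\mathcal{P}_{\sigma^{\otimes n}}[\sigma^{\otimes n}] = \sigma^{\otimes n}$, so
\begin{equation*}
\mathbb{D}_\alpha(\rho^{\otimes n} \| \sigma^{\otimes n}) \geqslant \mathbb{D}_\alpha\big(\mathcal{P}_{\sigma^{\otimes n}}[\rho^{\otimes n}] \,\big\|\, \sigma^{\otimes n}\big) \, .
\end{equation*}
By construction of the pinching, $\mathcal{P}_{\sigma^{\otimes n}}[\rho^{\otimes n}]$ commutes with $\sigma^{\otimes n}$, and its support is contained in $\supp \sigma^{\otimes n}$ (the block supported on $\ker \sigma^{\otimes n}$ vanishes because $\sigma \gg \rho$); hence the two arguments on the right-hand side are diagonal in a common basis, and by the Remark following the definitions of the divergences they may be replaced by the classical R\'enyi divergence $D_\alpha(\mathcal{P}_{\sigma^{\otimes n}}[\rho^{\otimes n}] \| \sigma^{\otimes n})$. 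Using additivity~\eqref{axiom:Add} to write $\mathbb{D}_\alpha(\rho^{\otimes n} \| \sigma^{\otimes n}) = n\, \mathbb{D}_\alpha(\rho \| \sigma)$, dividing by $n$, and combining yields
\begin{equation*}
\mathbb{D}_\alpha(\rho \| \sigma) \geqslant \frac{1}{n}\, D_\alpha\big(\mathcal{P}_{\sigma^{\otimes n}}[\rho^{\otimes n}] \,\big\|\, \sigma^{\otimes n}\big) \qquad \text{for every } n \in \mathbb{N} \, .
\end{equation*}
Letting $n \to \infty$ and invoking Proposition~\ref{prop:char_min_div} then gives $\mathbb{D}_\alpha(\rho \| \sigma) \geqslant \widetilde{D}_\alpha(\rho \| \sigma)$ for all $\alpha \in (0,1) \cup (1,\infty)$, and the cases $\alpha \in \{0,1,\infty\}$ follow by taking the corresponding limits, since both sides are defined as such.

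The main point requiring care is the passage to the classical divergence. One must check that $\mathcal{P}_{\sigma^{\otimes n}}[\rho^{\otimes n}]$ is a legitimate argument — it is nonzero (pinching preserves the trace and $\rho \neq 0$) and supported inside $\supp \sigma^{\otimes n}$ — so that the classical $D_\alpha$ with its convention $\tfrac{0}{0} = 1$ is well defined; and one must justify that the axioms~\eqref{enum:axiom1}-\eqref{enum:axiom6} force $\mathbb{D}_\alpha$ to coincide with $D_\alpha$ on commuting operators. The latter is exactly R\'enyi's classical axiomatic uniqueness applied to the simultaneously diagonalizable case and is recorded in the Remark above; everything else is a routine chaining of the DPI, additivity, and the limit in Proposition~\ref{prop:char_min_div}.
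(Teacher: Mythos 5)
Your proposal is correct and follows essentially the same route as the paper: additivity to pass to $\rho^{\otimes n}\|\sigma^{\otimes n}$, the DPI applied to the pinching map $\mathcal{P}_{\sigma^{\otimes n}}$ (which leaves $\sigma^{\otimes n}$ invariant), reduction to the classical divergence for the commuting pinched pair, and then the limit $n\to\infty$ via Proposition~\ref{prop:char_min_div}. Your extra checks (support containment under $\sigma\gg\rho$, nonvanishing of the pinched state, and the limit definition of the boundary cases $\alpha\in\{0,1,\infty\}$) are sound refinements of the same argument rather than a different method.
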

\begin{proof}
By the DPI~\eqref{eq:DPI} for an arbitrary family of quantum R\'enyi divergences $\mathbb{D}_{\alpha}$, we find that for any non-negative operators $\rho\neq0$ and $\sigma \gg \rho$ 
\begin{equation}  \label{eq:DPI_min_div} 
\mathbb{D}_{\alpha}(\rho \| \sigma)=\frac{1}{n}\mathbb{D}_{\alpha}(\rho^{\otimes n} \| \sigma^{\otimes n})\geqslant \frac{1}{n}\mathbb{D}_{\alpha}(\mathcal{P}_{\sigma^{\otimes n}}\left[ \rho^{\otimes n} \right] \| \sigma^{\otimes n}) \, ,
\end{equation}
where we used the Additivity property~\eqref{axiom:Add} of R\'enyi divergences in the first equality. Noting that we can replace $\mathbb{D}_{\alpha}$ by the classical R\'enyi divergence  $D_{\alpha}$ on the right hand side of~\eqref{eq:DPI_min_div} , we find the statement of Lemma~\ref{lem:min_of_min_div} by taking the limit $n\rightarrow \infty$ and applying Proposition~\ref{prop:char_min_div}.
\end{proof}

In the following, we improve the lower bound given in Lemma~\ref{lem:min_of_min_div} for $\alpha \in (0,\frac{1}{2}]$. Essentially, the idea for the proof of the following lemma is to interchange the roles of $\rho$ and $\sigma$ in the proof  of Lemma~\ref{lem:min_of_min_div}. This leads very naturally to the form of the reverse minimal quantum R\'enyi divergence and motivates its introduction.

\begin{mylem} \label{lem:min_of_min_div2}
Let $\rho\neq 0$ and $\sigma \gg \rho$  be two non-negative operators. And let $\mathbb{D}_{\alpha}$ be an arbitrary family of quantum R\'enyi divergences that satisfies the DPI. Then
\begin{equation} 
\widetilde{D'}_{\alpha}(\rho \| \sigma)\leqslant \mathbb{D}_{\alpha}(\rho \| \sigma) \qquad \text{ for  $\alpha \in (0,1)$}\, .
\end{equation}
\end{mylem}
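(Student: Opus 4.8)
The plan is to mimic the proof of Lemma~\ref{lem:min_of_min_div}, but with the roles of $\rho$ and $\sigma$ swapped, and then to identify the resulting limit as $\widetilde{D'}_{\alpha}$. First I would invoke the Additivity axiom~\eqref{axiom:Add} to write $\mathbb{D}_{\alpha}(\rho\|\sigma) = \frac{1}{n}\mathbb{D}_{\alpha}(\rho^{\otimes n}\|\sigma^{\otimes n})$. The crucial new ingredient is that instead of pinching $\rho^{\otimes n}$ with respect to $\sigma^{\otimes n}$, I would pinch $\sigma^{\otimes n}$ with respect to $\rho^{\otimes n}$: applying the DPI to the CPTP pinching map $\mathcal{P}_{\rho^{\otimes n}}$ gives $\mathbb{D}_{\alpha}(\rho^{\otimes n}\|\sigma^{\otimes n}) \geqslant \mathbb{D}_{\alpha}(\mathcal{P}_{\rho^{\otimes n}}[\rho^{\otimes n}]\|\mathcal{P}_{\rho^{\otimes n}}[\sigma^{\otimes n}]) = \mathbb{D}_{\alpha}(\rho^{\otimes n}\|\mathcal{P}_{\rho^{\otimes n}}[\sigma^{\otimes n}])$, using that the pinching fixes $\rho^{\otimes n}$ since $[\mathcal{P}_{\rho^{\otimes n}}[\rho^{\otimes n}],\rho^{\otimes n}]=0$ and in fact $\mathcal{P}_{\rho^{\otimes n}}[\rho^{\otimes n}]=\rho^{\otimes n}$.

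Now both arguments $\rho^{\otimes n}$ and $\mathcal{P}_{\rho^{\otimes n}}[\sigma^{\otimes n}]$ are diagonal in the eigenbasis of $\rho^{\otimes n}$, so on this commuting pair every quantum R\'enyi divergence $\mathbb{D}_{\alpha}$ agrees with the classical R\'enyi divergence $D_{\alpha}$. Hence I get
\begin{equation}
\mathbb{D}_{\alpha}(\rho\|\sigma) \geqslant \frac{1}{n} D_{\alpha}\big(\rho^{\otimes n}\,\big\|\,\mathcal{P}_{\rho^{\otimes n}}[\sigma^{\otimes n}]\big)\, .
\end{equation}
Then I would use the symmetry of the classical R\'enyi divergence under $\alpha\leftrightarrow 1-\alpha$ with the arguments interchanged — explicitly, $D_{\alpha}(\rho\|\tau) = \frac{\alpha}{1-\alpha} D_{1-\alpha}(\tau\|\rho)$ in the commuting case (this is the classical analogue of~\eqref{eq:sym_rel}) — to rewrite the right-hand side in terms of $D_{1-\alpha}(\mathcal{P}_{\rho^{\otimes n}}[\sigma^{\otimes n}]\|\rho^{\otimes n})$, so that I can apply the pinching characterization of the minimal divergence (Proposition~\ref{prop:char_min_div}) with the roles of $\rho$ and $\sigma$ and the parameter $1-\alpha$. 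Taking $n\to\infty$ then yields $\frac{\alpha}{1-\alpha}\widetilde{D}_{1-\alpha}(\sigma\|\rho)$, which by the symmetry relation~\eqref{eq:sym_rel} equals $\widetilde{D'}_{\alpha}(\rho\|\sigma)$, giving the claim for $\alpha\in(0,1)$.

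The main obstacle I anticipate is bookkeeping around the supports and the direction of the $\alpha\leftrightarrow 1-\alpha$ reflection: Proposition~\ref{prop:char_min_div} is stated for $\widetilde{D}_{\alpha}(\rho\|\sigma)$ with $\sigma\gg\rho$ and parameter $\alpha\geqslant 0$, so to apply it to $\widetilde{D}_{1-\alpha}(\sigma\|\rho)$ I need $\rho\gg\sigma$, which generally fails — I would have to restrict the pinched $\sigma$ to the support of $\rho$ and check that this does not change any of the traces, using the convention that negative matrix powers act only on the support. I also need to be careful that the classical reflection identity and the convergence of $\frac{1}{n}D_{1-\alpha}(\mathcal{P}_{\rho^{\otimes n}}[\sigma^{\otimes n}]\|\rho^{\otimes n})$ are valid for the full range $\alpha\in(0,1)$, i.e. $1-\alpha\in(0,1)$, which is exactly the range covered by Proposition~\ref{prop:char_min_div}. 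Modulo these support technicalities, the argument is a direct transcription of the proof of Lemma~\ref{lem:min_of_min_div}.
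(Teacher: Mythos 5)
Your proposal is correct and follows essentially the same route as the paper: pinch $\sigma^{\otimes n}$ with respect to $\rho^{\otimes n}$, reduce to the classical divergence on the commuting pair, apply the classical $\alpha\leftrightarrow 1-\alpha$ reflection together with Proposition~\ref{prop:char_min_div} for $\widetilde{D}_{1-\alpha}(\sigma\|\rho)$, and finish with the symmetry relation~\eqref{eq:sym_rel}. The only (inessential) difference is in handling the support issue you flag: the paper first assumes $\rho\gg\sigma$ for normalized states and then removes this assumption by continuity (and the normalization by homogeneity), rather than restricting the pinched $\sigma$ to the support of $\rho$.
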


We conclude that the reverse minimal quantum R\'enyi divergence is the smallest quantum  R\'enyi divergence that satisfies the axioms~\eqref{enum:axiom1}-\eqref{enum:axiom6} and the DPI for $\alpha \in (0,\frac{1}{2}]$. Moreover, since $\widetilde{D'}_{\frac{1}{2}}(\rho \| \sigma)= \widetilde{D}_{\frac{1}{2}}(\rho \| \sigma)$ for any  non-negative operators $\rho\neq 0$ and $\sigma \gg \rho$, Lemma~\ref{lem:min_of_min_div2} suggest a natural continuation of $\widetilde{D}_{\alpha}(\rho \| \sigma)$ for $\alpha \in (0,\frac{1}{2}]$ (cf. also Figure~\ref{fig: divergences}).

\begin{proof}[Proof of Lemma~\ref{lem:min_of_min_div2}]
Let  $\alpha \in (0,1)$ and let us assume (for the moment) that  $\rho\neq0$ and $\sigma$ are non-negative density operators with $\sigma \gg \rho$ and also $\rho \gg \sigma$. By the DPI, we find that 
\begin{align} 
\mathbb{D}_{\alpha}(\rho \| \sigma)=\frac{1}{n}\mathbb{D}_{\alpha}(\rho^{\otimes n} \| \sigma^{\otimes n})
& \geqslant \frac{1}{n}\mathbb{D}_{\alpha}( \rho^{\otimes n} \| \mathcal{P}_{\rho^{\otimes n}}\left[\sigma^{\otimes n} \right]) \\
&=\frac{1}{n} D_{\alpha}( \rho^{\otimes n} \| \mathcal{P}_{\rho^{\otimes n}}\left[\sigma^{\otimes n} \right])\, .
\end{align}
By a simple substitution $\tilde{\alpha}:=1-\alpha \in (0,1)$, we then find
\begin{equation}  
\mathbb{D}_{\alpha}(\rho \| \sigma)\geqslant  \frac{1-\tilde{\alpha}}{n \tilde{\alpha}}  D_{\tilde{\alpha}}( \mathcal{P}_{\rho^{\otimes n}}\left[\sigma^{\otimes n} \right]\|  \rho^{\otimes n} )\, .
\end{equation}
Taking the limit $n \rightarrow \infty$ we find by Proposition~\ref{prop:char_min_div} that
\begin{equation}  \label{eq:min_2}
\mathbb{D}_{\alpha}(\rho \| \sigma)\geqslant \frac{1-\tilde{\alpha}}{\tilde{\alpha}} \widetilde{D}_{\tilde{\alpha}}(\sigma\| \rho  ) = \widetilde{D'}_{\alpha}(\rho \| \sigma)\, ,
\end{equation}
where we used the symmetry relation~\eqref{eq:sym_rel}  for the last equality. By continuity, we can drop the assumption  $\rho \gg \sigma$ and  inequality~\eqref{eq:min_2} still holds. Since $\mathbb{D}_{\alpha}(\rho \| \sigma)$ and $ \widetilde{D'}_{\alpha}(\rho \| \sigma)$ have the same degree of homogeneity scaling $\rho$ or $\sigma$ respectively, we can also drop the assumption that $\rho$ and $\sigma$ have unit trace.
\end{proof}

\section{Quantum conditional entropies and duality relations} \label{sec:cond_entropies}
Divergences can be used to define conditional entropies, which can be viewed as measures of uncertainty of a system $A$, given the information about a system $B$. Note that we label Hilbert spaces with capital letters $A$, $B$, etc.\ and denote their dimension\footnote{Throughout this theses, we consider finite-dimensional Hilbert spaces only.} by $|A|$, $|B|$, etc.. The set of density operators on $A$, i.e., non-negative operators $\rho_A$ with $\tr \rho_A=1$, is denoted $\cD(A)$.  Then, for any $\rho_{AB} \in \cD(A\otimes B)$ we define the following \emph{quantum conditional R\'enyi entropies} of $A$ given $B$ as
\begin{align} \label{eq:cond_entropies}
&\widebar{H}^{\downarrow}_{\alpha}(A|B)_{\rho}:=-\widebar{D}_{\alpha}(\rho_{AB}\| \id_A \otimes \rho_B) \, , \\*
&\widebar{H}^{\uparrow}_{\alpha}(A|B)_{\rho}:=\sup \limits_{\sigma_B \in \cD(B) }-\widebar{D}_{\alpha}(\rho_{AB}\| \id_A \otimes \sigma_B) \, , \\
&\widetilde{H}^{\downarrow}_{\alpha}(A|B)_{\rho}:=-\widetilde{D}_{\alpha}(\rho_{AB}\| \id_A \otimes \rho_B) \quad \text{and} \quad \\*
&\widetilde{H}^{\uparrow}_{\alpha}(A|B)_{\rho}:=\sup \limits_{\sigma_B \in \cD(B) }-\widetilde{D}_{\alpha}(\rho_{AB}\| \id_A \otimes \sigma_B) \, . 
\end{align}
Note that the special cases $\alpha\in \{0,1,\infty\}$ are defined by taking the limits inside the supremum.\footnote{We are following the notation in~\cite{tomamichel_quantum_2016}. Note that $H_{\text{min}}(A|B)_{\rho| \rho}=\widetilde{H}^{\downarrow}_{\infty}(A|B)_{\rho}$, $H_{\text{min}}(A|B)_{\rho}=\widetilde{H}^{\uparrow}_{\infty}(A|B)_{\rho}$ and $H_{\text{max}}(A|B)_{\rho}=\widetilde{H}^{\uparrow}_{\frac{1}{2}}(A|B)_{\rho}$ are also often used notations.
} 
We call the set of all conditional entropies with $\alpha \in (0,1)$ ``max-like'' and those with $\alpha \in (1,\infty)$ ``min-like'', owing to the fact that under small changes to the state the entropies in either class are approximately equal~\cite{renner_smooth_2004,tomamichel_fully_2009}. 
Moreover, min- and max-like entropies are related by some interesting duality relations, which are summarized in the following lemma.

\begin{mylem} [Duality relations~\cite{tomamichel_fully_2009, tomamichel_relating_2014, muller-lennert_quantum_2013, beigi_sandwiched_2013,konig_operational_2009, berta_single-shot_2008}] \label{lem_duality}

Let $\rho_{ABC}$ be a pure state on $A \otimes B \otimes C$. Then
\begin{align}
\widebar{H}^{\downarrow}_{\alpha}(A|B)_{\rho}+\widebar{H}^{\downarrow}_{\beta}(A|C)_{\rho} =0 &\quad \text{when} \quad \alpha+\beta=2 \, \text{ for } \alpha, \beta \in [0,2] \, \quad \text{and} \nonumber \\
\widetilde{H}^{\uparrow}_{\alpha}(A|B)_{\rho}+\widetilde{H}^{\uparrow}_{\beta}(A|C)_{\rho} =0 &\quad \text{when} \quad \frac{1}{\alpha}+\frac{1}{\beta}=2 \, \text{ for } \alpha, \beta \in [\frac{1}{2},\infty] \, \quad \text{and} \nonumber \\
\widebar{H}^{\uparrow}_{\alpha}(A|B)_{\rho}+\widetilde{H}^{\downarrow}_{\beta}(A|C)_{\rho} =0 &\quad \text{when} \quad \alpha \beta =1 \, \text{ for } \alpha, \beta \in [0,\infty] \, , \nonumber
\end{align} 
where we use the convention that $\frac{1}{\infty}=0$ and $\infty \cdot 0=1\,$.
\end{mylem}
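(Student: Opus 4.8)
The plan is to strip off the logarithms, reducing each of the three equalities to an algebraic identity between the underlying $Q$-functionals, and then to feed in the Schmidt decomposition of the pure state $\rho_{ABC}$. Throughout, by restricting to supports and invoking continuity at the boundary parameters, one may assume $\rho_A,\rho_B,\rho_C$ invertible and $\alpha$ in the open interior of the stated ranges; since the conditional entropies are scale-invariant in $\rho$, one may also normalise $\tr\rho_{ABC}=1$. For the first identity, put $\beta=2-\alpha$, so $1-\beta=\alpha-1=-(1-\alpha)$; since $\widebar H^{\downarrow}_{\alpha}(A|B)_\rho=\tfrac{1}{1-\alpha}\log\widebar Q_\alpha(\rho_{AB}\|\id_A\otimes\rho_B)$ and likewise for $C$, the claim $\widebar H^{\downarrow}_{\alpha}(A|B)_\rho+\widebar H^{\downarrow}_{\beta}(A|C)_\rho=0$ is equivalent to
\[
 \tr\!\bigl[\rho_{AB}^{\alpha}(\id_A\otimes\rho_B)^{1-\alpha}\bigr]
 =\tr\!\bigl[\rho_{AC}^{\,2-\alpha}(\id_A\otimes\rho_C)^{\alpha-1}\bigr].
\]
To establish this I would use the transpose trick: for a bipartite pure state with marginals $\tau,\tau'$ one has $(f(\tau)\otimes\id)|\psi\rangle=(\id\otimes f(\tau'))|\psi\rangle$ for every function $f$, which follows by Schmidt-decomposing $|\psi\rangle$. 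Writing the left-hand trace as $\langle\psi|\,\rho_{AB}^{\alpha-1}(\id_A\otimes\rho_B)^{1-\alpha}\otimes\id_C\,|\psi\rangle$ via $\rho_{AB}=\tr_C|\psi\rangle\langle\psi|$, then transporting the factor acting on $AB$ (from the $\langle\psi|$ side, Schmidt cut $AB{:}C$) to the equal factor $\id_{AB}\otimes\rho_C^{\alpha-1}$ and the factor acting on $B$ (from the $|\psi\rangle$ side, Schmidt cut $B{:}AC$) to the equal factor acting on $AC$, and finally closing up on the $B$ system, one re-expresses the expectation value as $\tr\!\bigl[\rho_{AC}(\id_A\otimes\rho_C^{\alpha-1})\rho_{AC}^{1-\alpha}\bigr]$, which equals the right-hand side by cyclicity. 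The factors must be moved one at a time, since $\rho_{AB}$ and $\id_A\otimes\rho_B$ do not commute.

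The second and third identities I would handle at the level of the $\sigma$-optimised quantities, using the Schatten-norm form $\widetilde Q_\alpha(\rho_{AB}\|\id_A\otimes\sigma_B)=\bigl\|\rho_{AB}^{1/2}(\id_A\otimes\sigma_B^{\frac{1-\alpha}{2\alpha}})\bigr\|_{2\alpha}^{2\alpha}$ together with the fact that, for pure $\rho_{ABC}$, an amplitude operator $\Psi$ with $\rho_{AB}=\Psi\Psi^\dagger$ simultaneously encodes $\rho_C$ (and, from the complementary cut, $\rho_{AC}$ and $\rho_B$). This yields a pointwise duality: for every $\sigma_B\in\cD(B)$ there is a $\sigma_C\in\cD(C)$, and conversely, with $\widetilde D_\alpha(\rho_{AB}\|\id_A\otimes\sigma_B)=-\widetilde D_\beta(\rho_{AC}\|\id_A\otimes\sigma_C)$ whenever $\tfrac1\alpha+\tfrac1\beta=2$; taking the supremum over $\sigma_B$ on the left then equals the supremum over $\sigma_C$ on the right, which is the $\widetilde H^{\uparrow}$ duality. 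The mixed relation ($\alpha\beta=1$) is the case interpolating the first two and comes from the same amplitude-operator computation, now pairing the $\sigma_B$-optimised Petz quantity on $AB$ against the sandwiched quantity on $AC$ evaluated at the particular choice $\sigma_C=\rho_C$.

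The hard part is the non-commutativity that surfaces as soon as the $Q$-functionals are expanded: unlike the classical case there is no single basis diagonalising both arguments, so the transpose-trick transport of matrix powers must proceed factor by factor with the exponent arithmetic tracked carefully. In the ``$\uparrow$'' cases there is the additional subtlety of interchanging the two suprema — one must verify that the optima over $\sigma_B$ and $\sigma_C$ are attained (by compactness of $\cD(B),\cD(C)$), that the partner $\sigma$ is genuinely a density operator, and that the endpoint parameters $\alpha\in\{0,\tfrac12,1,2,\infty\}$ (with the conventions $\tfrac1\infty=0$, $\infty\cdot0=1$) are recovered by continuity.
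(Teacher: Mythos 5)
Your treatment of the first identity is fine: reducing it to $\tr\bigl[\rho_{AB}^{\alpha}(\id_A\otimes\rho_B)^{1-\alpha}\bigr]=\tr\bigl[\rho_{AC}^{2-\alpha}(\id_A\otimes\rho_C)^{\alpha-1}\bigr]$ and transporting the two non-commuting factors one at a time with the transpose trick is exactly the standard argument (the thesis itself only cites this lemma, so there is no in-paper proof to compare with). The genuine gap is in your second identity. You posit a pointwise correspondence $\sigma_B\mapsto\sigma_C$ with $\widetilde D_\alpha(\rho_{AB}\|\id_A\otimes\sigma_B)=-\widetilde D_\beta(\rho_{AC}\|\id_A\otimes\sigma_C)$ and then ``take suprema on both sides''. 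But the statement to be proved is a sup--inf duality,
\begin{align*}
\sup_{\sigma_B\in\cD(B)}\bigl(-\widetilde D_\alpha(\rho_{AB}\|\id_A\otimes\sigma_B)\bigr)
=\inf_{\sigma_C\in\cD(C)}\widetilde D_\beta(\rho_{AC}\|\id_A\otimes\sigma_C)\, ,
\end{align*}
whereas a value-preserving correspondence running in both directions would give $\sup_{\sigma_B}(-\widetilde D_\alpha)=\sup_{\sigma_C}\widetilde D_\beta$, which is a different and generally false statement. Indeed, since the true duality equates the left side with the \emph{infimum} over $\sigma_C$, such a pointwise equality cannot exist unless $\sigma_C\mapsto\widetilde D_\beta(\rho_{AC}\|\id_A\otimes\sigma_C)$ is constant. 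What is missing is the actual mechanism of the known proofs: either establish the one-sided bound $-\widetilde D_\alpha(\rho_{AB}\|\id_A\otimes\sigma_B)\leqslant\widetilde D_\beta(\rho_{AC}\|\id_A\otimes\sigma_C)$ for \emph{all} pairs $(\sigma_B,\sigma_C)$ (a H\"older/Schatten-duality estimate using that $2\alpha$ and $2\beta$ are conjugate exponents, $\tfrac{1}{2\alpha}+\tfrac{1}{2\beta}=1$) and then exhibit one pair attaining equality, or follow Beigi's route through the Schatten-norm variational formula for $\widetilde Q_\alpha$ together with a minimax interchange (Sion's theorem). Your sketch contains neither ingredient, and compactness of $\cD(B),\cD(C)$ does not substitute for the minimax step.

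The third identity is likewise asserted rather than proved: calling it ``the case interpolating the first two'' does not handle the optimization over $\sigma_B$ still present in $\widebar H^{\uparrow}_{\alpha}(A|B)_\rho$. The standard argument first uses the closed-form optimizer \eqref{def:tilde_sigma}, equivalently $\widebar H^{\uparrow}_{\alpha}(A|B)_\rho=\tfrac{\alpha}{1-\alpha}\log\tr\,\bigl(\tr_A\rho_{AB}^{\alpha}\bigr)^{1/\alpha}$, and only then applies a transpose-trick/Schmidt-decomposition computation to identify this with $-\widetilde H^{\downarrow}_{1/\alpha}(A|C)_\rho$; that computation (and the boundary cases $\alpha\in\{0,1,\infty\}$ by taking limits) needs to be carried out explicitly rather than inferred from the other two relations.
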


\chapter{Trace inequalities} \label{chap:trace_ineq} 

\section{Introduction}
There are a lot of interesting inequalities between linear operators on Hilbert spaces. For quantum information theory, such operator inequalities that include a trace are especially interesting, since they often give bounds on different information measures. In this chapter, we review different trace inequalities. We consider finite-dimensional Hilbert spaces (and hence matrix inequalities) for simplicity, though most of the results can be extended to separable Hilbert spaces. First, we give a detailed proof of the generalized H\"older inequality for matrices (see e.g.,~\cite[Exercise~IV.2.7]{bhatia_matrix_1997}), and use it to derive a new reversed version of the celebrated Araki-Lieb-Thirring (ALT)  inequality. The reverse ALT inequality then leads to an interesting new relation between the minimal and the Petz  quantum R\'enyi  divergences (which is described in Chapter~\ref{chap:rel_div}). Moreover, we provide a new and elegant proof of the logarithmic form of the reverse Golden-Thompson inequality. Let us first introduce some notation.

\section{Schatten norms} \label{sec:Schatten}
The \emph{Schatten} $p$\emph{-norm} of any matrix $M \in \Mat(n,n)$ is given by 
\begin{align} \label{eq:defi_schatten_norm}
\norm{M}_p:=\big(\tr |M|^p\big)^{\frac{1}{p}} \quad \text{for} \quad p\geqslant 1 \ ,
\end{align}
where $|M|:=\sqrt{M^{*} M}$. 
We may extend this definition to all $p>0$, but note that $\norm{M}_p$ is not a norm for $p \in (0,1)$ since it does not satisfy the triangle inequality.
In the limit $p\to \infty$ we recover the \emph{operator norm} and for $p=1$ we obtain the \emph{trace norm}. 
Schatten norms are functions of the singular values and thus unitarily invariant. Moreover, they satisfy $\|M\|_p = \|M^{*}\|_p$ and $\|M\|_{2p}^2 = \|MM^{*}\|_p= \|M^{*} M\|_p\, $.

\section{H\"older inequality for Schatten norms}
In the following, we fill in the details of the proof of the generalized H\"older inequality, which was stated as Exercise~IV.2.7 in~\cite{bhatia_matrix_1997}.
\begin{mydef} \label{def:unitarily_inv_norm}
A norm $\normT{ \cdot}$ on $\Mat(n,n)$ is called \emph{unitarily invariant}  if $\normT{ UAV }=\normT{A }$ for all  $A \in \textnormal{Mat}(n,n)$ and $U,V \in \textnormal{U}(n)\,$.
\end{mydef}
\begin{mythm} \label{thm:Hoelder_UI_norm}
Let $\normT{ \cdot}$ be a \emph{unitarily invariant} norm on Mat$(n,n)$. Let $s$, $s_1,\dots,s_l$ be positive real numbers and $\{A_k\}_{k=1}^l$ be a collection of $n\times n$ matrices. Then
\begin{align} 
\normT{ \, | \prod_{k=1}^l A_k|^s \,}^{\frac{1}{s}} \leqslant \prod_{k=1}^l \normT{  |A_k|^{s_k} }^{\frac{1}{s_k}} \, ,\qquad \text{for}\quad \sum_{k=1}^l\frac1{s_k}=\frac1s \, .
\end{align}
\end{mythm}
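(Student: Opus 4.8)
The plan is to reduce the general statement to the classical Hölder inequality for unitarily invariant norms, which is the two-factor case: $\normT{AB} \leqslant \normT{|A|^{p}}^{1/p}\,\normT{|B|^{q}}^{1/q}$ whenever $\tfrac1p + \tfrac1q = 1$, $p,q > 1$. This in turn rests on the fact that a unitarily invariant norm is characterized by a symmetric gauge function applied to the singular values (von Neumann's theorem), together with the submultiplicativity/weak-majorization behaviour of singular values of products, $\prod_{j=1}^{k}s_j(AB)\leqslant \prod_{j=1}^{k}s_j(A)s_j(B)$ for all $k$ (Horn's inequality). Concretely, the singular values $s_j(|M|^t)=s_j(M)^t$, so all the $|{\cdot}|^{s_k}$-operations just amount to raising singular values to powers, and the whole inequality becomes a weighted-majorization statement about nonnegative sequences to which the classical Hölder inequality for symmetric gauge functions applies.

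The steps, in order, are: (1) Normalize $s=1$ by replacing $\normT{\cdot}$-arguments appropriately and relabelling $s_k \mapsto s_k/s$, so that the hypothesis becomes $\sum_k 1/s_k = 1$; this uses only that $t \mapsto |M|^t$ raises singular values to the $t$-th power. (2) Prove the two-factor case $l=2$: using von Neumann's characterization $\normT{M} = \Phi(s(M))$ for a symmetric gauge function $\Phi$, and Horn's log-majorization $s(A_1A_2) \prec_{\log} s(A_1)\odot s(A_2)$, deduce $\normT{|A_1A_2|} \leqslant \Phi\big(s(A_1)\odot s(A_2)\big)$; then apply the classical Hölder inequality for symmetric gauge functions, $\Phi(x\odot y) \leqslant \Phi(x^{p})^{1/p}\,\Phi(y^{q})^{1/q}$, with $x_j = s_j(A_1)$, $y_j = s_j(A_2)$, $p=s_1$, $q=s_2$. (3) Induct on $l$: write $\prod_{k=1}^l A_k = A_1 \big(\prod_{k=2}^l A_k\big)$, apply the two-factor case with exponents $p = s_1$ and $q$ defined by $1/q = \sum_{k=2}^l 1/s_k = 1 - 1/s_1$, to get $\normT{\,|\prod_{k=1}^l A_k|\,} \leqslant \normT{|A_1|^{s_1}}^{1/s_1}\,\normT{\,|\prod_{k=2}^l A_k|^{q}\,}^{1/q}$, and then apply the induction hypothesis (after the trivial rescaling from step (1) with $s=q$) to the second factor, since $\sum_{k=2}^l \tfrac{1}{s_k/q} = q\sum_{k=2}^l \tfrac1{s_k} = 1$.

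The main obstacle is step (2): getting the two-factor Hölder inequality cleanly for an arbitrary unitarily invariant norm rather than just for Schatten norms. For Schatten norms it is immediate from the scalar Hölder inequality applied to singular values, but in general one must invoke von Neumann's characterization via symmetric gauge functions and the fact that $\Phi$ is monotone with respect to weak majorization, combined with Horn's multiplicative majorization for singular values of products. Once that machinery is in place the induction in step (3) is routine bookkeeping with the exponents, and the reduction in step (1) is purely cosmetic. I would be careful about the case $s_k = 1$ for some $k$ (so the conjugate exponent for the remaining product becomes $\infty$, i.e.\ an operator-norm factor), and about degenerate situations where some $A_k$ is singular, but these are handled by the usual continuity/limiting arguments and by allowing the relevant exponent to be $\infty$ in the classical Hölder step.
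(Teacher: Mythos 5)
Your toolkit is the same as the paper's (von Neumann's characterization via symmetric gauge functions, Horn/weak majorization for singular values of products, a Hölder inequality at the gauge-function level, then induction on $l$), but there is a genuine gap in how you handle the outer exponent $s$: the claim in step (1) that the reduction to $s=1$ is ``purely cosmetic'' is not justified and, as described, fails precisely in the regime $s<1$, which is allowed by the theorem (with $l\geqslant 2$ one always has $s<\min_k s_k$, and e.g.\ $s_1=s_2=s_3=2$ forces $s=2/3$; this regime is also the one needed for the reverse ALT application, where $s=2rq$ can be small). Concretely, the two natural readings of your normalization both break down: (i) replacing the arguments by $|A_k|^{s}$ does not transform the left-hand side correctly, since $\bigl|\prod_k |A_k|^{s}\bigr|$ is not $\bigl|\prod_k A_k\bigr|^{s}$, nor is it comparable to it by any elementary identity on singular values; (ii) applying the $s=1$ statement to the derived functional $M\mapsto \normT{|M|^{s}}^{1/s}$ requires that functional to be a unitarily invariant norm (equivalently, that $x\mapsto\Phi(x^{s})^{1/s}$ be a symmetric gauge function, in particular convex and hence isotone under $\prec_w$), which is true for $s\geqslant 1$ but false for $s<1$ (already $\Phi=\ell_1$ gives the $\ell_s$ quasi-norm of singular values, which is not isotone under weak majorization). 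The same issue reappears, in milder form, inside your induction step (3), where you again invoke the rescaling with exponent $q$; there $q\geqslant1$ in the normalized setting, but you would still need to prove that the derived norm is a norm, which you do not mention.

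The paper closes exactly this hole by keeping the exponent general throughout: it proves the gauge-function Hölder inequality with an arbitrary exponent $r$, namely $\Phi(|x\cdot y|^{r})^{1/r}\leqslant \Phi(|x|^{p})^{1/p}\Phi(|y|^{q})^{1/q}$ for $\tfrac1p+\tfrac1q=\tfrac1r$ (via the weighted arithmetic--geometric mean inequality and optimizing a scale parameter $t$), and combines it with the weak majorization $\s^{r}(AB)\prec_w \s^{r}(A)\s^{r}(B)$ valid for every $r>0$, to obtain the two-factor matrix inequality $\normT{|AB|^{r}}^{1/r}\leqslant\normT{|A|^{p}}^{1/p}\normT{|B|^{q}}^{1/q}$; with the general-$r$ two-factor case in hand, the induction on $l$ is immediate and no rescaling of the norm is ever needed. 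Your plan becomes correct if you make the analogous repair: either prove your step (2) for a general exponent $r$ (not just $r=1$), or perform the power-raising at the singular-value level \emph{after} invoking log-majorization --- i.e.\ use $s(\prod_k A_k)^{s}\prec_w \prod_k s(A_k)^{s}$ for the given $s>0$, apply isotonicity of $\Phi$, and only then relabel $y_k=s(A_k)^{s}$, $t_k=s_k/s$, at which point the reduction to the $r=1$ vector Hölder inequality genuinely is a relabeling.
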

Setting $\normT{\cdot}=\norm{\cdot}_1$ in Theorem~\ref{thm:Hoelder_UI_norm}, we recover the generalized H\"older inequality for Schatten (quasi)-norms.
\begin{mycor} [Generalized  H\"older inequality] \label{cor:generalized_Holder}
Let $s$, $s_1,\dots,s_l$ be positive real numbers (where we also allow $\infty$ using the convention that $\frac{1}{\infty}=0$) and $\{A_k\}_{k=1}^l$ be a collection of $n\times n$ matrices. Then
\begin{align}
\label{eq:hoelder}
\norm{ \prod_{k=1}^l A_k}_s \leqslant \prod_{k=1}^l \norm{ A_k }_{s_k} \, ,\qquad \text{for}\quad \sum_{k=1}^l\frac1{s_k}=\frac1s \, .
\end{align}
\end{mycor}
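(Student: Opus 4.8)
The plan is to derive Corollary~\ref{cor:generalized_Holder} from Theorem~\ref{thm:Hoelder_UI_norm} by specializing to a particular unitarily invariant norm, and then to sketch how Theorem~\ref{thm:Hoelder_UI_norm} itself is proven, since the corollary is almost immediate once the theorem is in hand.

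\textbf{Deriving the corollary from the theorem.} First I would observe that for $p \geqslant 1$ the Schatten $p$-norm $\norm{\cdot}_p$ is genuinely a unitarily invariant norm on $\Mat(n,n)$ in the sense of Definition~\ref{def:unitarily_inv_norm}: it depends only on the singular values (hence $\norm{UAV}_p = \norm{A}_p$), and the triangle inequality holds for $p \geqslant 1$. So I would set $\normT{\cdot} = \norm{\cdot}_1$ in Theorem~\ref{thm:Hoelder_UI_norm}. Using $\normT{\,|M|^s\,}^{1/s} = \norm{\,|M|^s\,}_1^{1/s} = (\tr |M|^s)^{1/s} = \norm{M}_s$ and likewise $\normT{\,|A_k|^{s_k}\,}^{1/s_k} = \norm{A_k}_{s_k}$, the inequality of Theorem~\ref{thm:Hoelder_UI_norm} becomes exactly~\eqref{eq:hoelder}, valid for all finite positive $s, s_1, \dots, s_l$ with $\sum_k 1/s_k = 1/s$. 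The case where some $s_k = \infty$ is handled separately by a continuity/limiting argument: if $s_k \to \infty$ then $1/s_k \to 0$, the constraint becomes $\sum_{k \neq k_0} 1/s_k = 1/s$, and $\norm{A_{k_0}}_{s_k} \to \norm{A_{k_0}}_\infty$ (operator norm), while $\norm{\cdot}_\infty$ can also be absorbed directly since multiplying by an operator of norm $\norm{A_{k_0}}_\infty$ scales all singular values by at most that factor; alternatively one notes the inequality with $\infty$ follows from the finite case by taking $s_k$ large and passing to the limit.

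\textbf{Sketch of the proof of Theorem~\ref{thm:Hoelder_UI_norm}.} This is the substantive part. The standard route (as in Bhatia's Exercise~IV.2.7) proceeds in two stages. The first stage is the two-factor case $l = 2$: show $\normT{\,|AB|^s\,}^{1/s} \leqslant \normT{\,|A|^p\,}^{1/p}\normT{\,|B|^q\,}^{1/q}$ for $1/p + 1/q = 1/s$. The key tools here are (i) the fact that for unitarily invariant norms one has the submultiplicativity-type bound via weak majorization of singular values — specifically $s_j(AB) \leqslant s_j(A) \cdot \norm{B}_\infty$ and more usefully the log-majorization $\prod_{j \leqslant k} s_j(AB) \leqslant \prod_{j \leqslant k} s_j(A) s_j(B)$ (a classical result of Horn) — together with (ii) the fact that a unitarily invariant norm applied to $|M|^s$ is, up to the power $1/s$, monotone under weak majorization of $(s_j(M)^s)_j$. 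One combines the log-majorization for the singular values of the product with the ordinary scalar Hölder inequality applied termwise to $\log s_j$, then uses the fact that weak-majorization inequalities are preserved by symmetric gauge functions (equivalently by unitarily invariant norms). The second stage is induction on $l$: writing $\prod_{k=1}^l A_k = A_1 \cdot (\prod_{k=2}^l A_k)$ and applying the two-factor case with exponents $s_1$ and $r$ where $1/r = \sum_{k=2}^l 1/s_k$, then invoking the inductive hypothesis on the remaining product.

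\textbf{Main obstacle.} The delicate point is the two-factor inequality, and within it the passage from a log-majorization statement about singular values to a norm inequality. One must be careful that the relevant monotonicity — that $(a_j) \prec_{w(\log)} (b_j)$ implies $\normT{\diag(a_j^s)}^{1/s} \leqslant \normT{\diag(b_j^s)}^{1/s}$ — genuinely holds; this uses that $t \mapsto t^s$ composed with $\exp$ turns weak log-majorization into weak majorization, plus the Ky Fan dominance characterization of unitarily invariant norms (a unitarily invariant norm is monotone with respect to weak majorization of singular values). Getting these majorization lemmas lined up correctly, and correctly tracking which majorization (multiplicative/log versus additive) is available at each step, is where the real work lies; the scalar Hölder step and the induction on $l$ are then routine.
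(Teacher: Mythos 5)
Your proposal is correct and follows essentially the same route as the paper: the corollary is obtained by specializing Theorem~\ref{thm:Hoelder_UI_norm} to $\normT{\cdot}=\norm{\cdot}_1$ (with the $\infty$ cases as limits), and the theorem itself is proved via the two-factor case — singular-value (log-)majorization of the product, monotonicity of symmetric gauge functions under weak majorization, and a Hölder/AM--GM step — followed by induction on $l$. The only cosmetic difference is that you invoke Horn's log-majorization and Ky Fan dominance directly, whereas the paper packages the same facts as Bhatia's Theorem~IV.2.5 and the strong isotonicity of symmetric gauge functions.
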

Note that the cases where some parameters are equal to $\infty$ follow as limit cases of the finite ones.

To prove Theorem~\ref{thm:Hoelder_UI_norm}, we need some preparative results.
\begin{mydef} [Symmetric gauge function~\cite{bhatia_matrix_1997}] \label{def:sym_gauge_fct}
A function $\Phi: \mathbb{R}^n \rightarrow \mathbb{R}_{+}$ is called a \emph{symmetric gauge function} if
\begin{enumerate}[(i)]
\item $\Phi$ is a norm on the real vector space $\mathbb{R}^n\,$,
\item $\Phi(Px)=\Phi(x)$ for all $x \in \mathbb{R}^n, \,  P\in S_n\,$ and
\item $\Phi(\epsilon_1x_1,\dots, \epsilon_n x_n)=\Phi(x_1,\dots,x_n)$ if $\epsilon_j=\pm1 \,$,
\end{enumerate}
where $S_n$ denotes the group of all $n \times n$ permutation matrices. In addition, we will always assume that $\Phi$ is normalized
\begin{enumerate}[(i)]
 \setcounter{enumi}{3}
\item $\Phi(1,0,\dots,0)=1$.
\end{enumerate}
\end{mydef}
\begin{myprop} [Problem II.5.11 (iv)~\cite{bhatia_matrix_1997}] \label{prop:gauge_fct_monoton}
Every symmetric gauge function is monoton on $\mathbb{R}_{+}^n$, i.e., for $x,y \in \mathbb{R}_{+}^n$ with $x\leqslant y$,\footnote{Note that $x\leqslant y$ is to be understood on a per-element basis, i.e., $x_i\leqslant y_i$ for all $i=1,\dots,n \,$. } we have that $\Phi(x)\leqslant \Phi(y)$.
\end{myprop}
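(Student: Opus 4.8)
The statement to prove is Proposition~\ref{prop:gauge_fct_monoton}: every symmetric gauge function is monotone on $\mathbb{R}_{+}^n$.

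\medskip

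\textbf{Proof proposal.} The plan is to reduce the general case to a one-coordinate move and then to the two-dimensional case, where monotonicity follows from the defining properties of a symmetric gauge function together with convexity of a norm. First I would observe that it suffices to show $\Phi(x_1,\dots,x_n)\leqslant \Phi(y_1,x_2,\dots,x_n)$ whenever $0\leqslant x_1\leqslant y_1$, since a general inequality $x\leqslant y$ on $\mathbb{R}_{+}^n$ can be obtained by changing one coordinate at a time and chaining the resulting inequalities, using property~(ii) (permutation invariance) to move whichever coordinate we are currently increasing into the first slot.

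\medskip

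For the one-coordinate step, fix the remaining coordinates and write $z=(x_2,\dots,x_n)\in\mathbb{R}_{+}^{n-1}$. The key trick is to express $x_1$ as a convex combination of $y_1$ and $-y_1$: if $0\leqslant x_1\leqslant y_1$, set $\lambda=\tfrac{1}{2}\bigl(1+\tfrac{x_1}{y_1}\bigr)\in[\tfrac12,1]$, so that $x_1=\lambda y_1+(1-\lambda)(-y_1)$ (when $y_1=0$ there is nothing to prove). Then by the triangle inequality and homogeneity of the norm $\Phi$ (property~(i)),
\begin{align}
\Phi(x_1,z)=\Phi\bigl(\lambda(y_1,z)+(1-\lambda)(-y_1,z)\bigr)\leqslant \lambda\,\Phi(y_1,z)+(1-\lambda)\,\Phi(-y_1,z).
\end{align}
Now property~(iii) (sign invariance) gives $\Phi(-y_1,z)=\Phi(y_1,z)$, so the right-hand side equals $\Phi(y_1,z)$, which is exactly the desired inequality $\Phi(x_1,z)\leqslant\Phi(y_1,z)$.

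\medskip

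Combining the two steps: given $x\leqslant y$ in $\mathbb{R}_{+}^n$, define intermediate vectors $x=v^{(0)},v^{(1)},\dots,v^{(n)}=y$ where $v^{(k)}$ agrees with $y$ in the first $k$ coordinates and with $x$ in the rest. Each transition $v^{(k-1)}\to v^{(k)}$ increases a single coordinate from $x_k$ to $y_k\geqslant x_k$ while keeping the others fixed and nonnegative; applying permutation invariance to bring that coordinate to the front and then the one-coordinate step gives $\Phi(v^{(k-1)})\leqslant\Phi(v^{(k)})$. Chaining these yields $\Phi(x)\leqslant\Phi(y)$.

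\medskip

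I do not expect a serious obstacle here; the only point requiring mild care is the degenerate case $y_1=0$ (forcing $x_1=0$, so the inequality is trivial) and making sure the convex-combination coefficient $\lambda$ is well-defined and lies in $[0,1]$, which it does. The normalization property~(iv) is not needed for monotonicity. The essential ingredients are just: $\Phi$ is a norm (hence subadditive and absolutely homogeneous) and $\Phi$ is invariant under sign changes of coordinates.
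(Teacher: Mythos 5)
Your proof is correct, but it takes a different route from the paper. The paper's proof is a two-line reduction: it notes that properties (i) and (iii) make $\Phi$ an \emph{absolute} norm, i.e.\ $\Phi(|x|)=\Phi(x)$, and then invokes Proposition~IV.1.1 of~\cite{bhatia_matrix_1997}, which states that absolute norms are monotone. You instead prove the monotonicity from scratch: reduce to increasing one coordinate at a time (using permutation invariance to bring it to the front, though in fact the one-coordinate argument works in any slot, so property~(ii) is not strictly needed), and for the single-coordinate step write $x_1=\lambda y_1+(1-\lambda)(-y_1)$ with $\lambda=\tfrac12(1+x_1/y_1)\in[\tfrac12,1]$, apply the triangle inequality, and use sign invariance to identify $\Phi(-y_1,z)=\Phi(y_1,z)$. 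This is in essence the standard proof of the very result the paper cites (absolute $\Rightarrow$ monotone), so your argument is self-contained and elementary, at the cost of a bit more length; the paper's version is shorter but leans on the external reference. Your handling of the degenerate case $y_1=0$ and the range of $\lambda$ is fine, and you are right that the normalization axiom~(iv) plays no role.
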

\begin{proof}
By the property i) and iii) of Definition~\ref{def:sym_gauge_fct}, $\Phi$ is a norm that satisfies $\Phi(|x|)=\Phi(x)$.\footnote{Note that the absolute value in the expression $|x|$ is to be understood as taking the absolute value of each element of $x\,$.} The Proposition now follows directly from the real case of Proposition~IV.1.1 of~\cite{bhatia_matrix_1997}.
\end{proof}
\begin{mydef} [ $x \prec_w y$]
Let $x=(x_1,x_2,\dots,x_n)\in \mathbb{R}^n$ and let $x^{\downarrow}$ denote the vector that one gets by reordering the entries of $x$ in a decreasing order. We say that $x$ is \emph{weakly submajorised} by a vector $y \in \mathbb{R}^n$ (written as $x \prec_w y$), if for all $1\leqslant k\leqslant n \,$, we have that
\begin{align}
\sum_{i=1}^k x_i^{\downarrow}\leqslant \sum_{i=1}^k y_i^{\downarrow} \, .
\end{align}
\end{mydef}
Note that every symmetric gauge function is convex (because it satisfies the triangle inequality). Then, Proposition~\ref{prop:gauge_fct_monoton} together with Theorem~II.3.3 of~\cite{bhatia_matrix_1997} imply that a symmetric gauge function $\Phi$ is \emph{strongly isotone}, i.e., we have that
\begin{align} \label{eq:isotone}
x \prec_w y \implies \Phi(x)\leqslant \Phi(y) \, .
\end{align}
\begin{mythm} [Exercise IV.1.7~\cite{bhatia_matrix_1997}] \label{thm:gauge_fct}
Let $p,q,r$ be positive real numbers with $\frac{1}{p}+\frac{1}{q}=\frac{1}{r}$. Let $x,y \in \mathbb{R}^n$. Then, for every symmetric gauge function $\Phi$, we have
\begin{align}
\Phi(|x \cdot y|^r)^{\frac{1}{r}}\leqslant \Phi(|x|^p)^{\frac{1}{p}} \Phi(|x|^q)^{\frac{1}{q}}.
\end{align}
\end{mythm}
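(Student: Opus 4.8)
The plan is to reduce the claimed inequality, by a change of exponents, to the statement that every symmetric gauge function $\Phi$ is log-convex along geometric interpolations of nonnegative vectors, and then to prove that reduced statement by combining Young's inequality with the monotonicity of $\Phi$ on $\mathbb{R}^n_+$ (Proposition~\ref{prop:gauge_fct_monoton}) and the triangle inequality.

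Concretely, first I would set $\alpha := r/p$ and $\beta := r/q$, so that $\alpha,\beta \in (0,1)$ and $\alpha+\beta = r\bigl(\tfrac1p+\tfrac1q\bigr)=1$. Writing $v := |x|^p$, $w := |y|^q$ and $u := |x\cdot y|^r$, all understood entrywise (and reading the second factor on the right-hand side of the claimed inequality as $\Phi(|y|^q)^{1/q}$), one checks entrywise that $u_i = |x_i|^r|y_i|^r = v_i^{\alpha} w_i^{\beta}$. Raising the desired inequality to the $r$-th power, it becomes equivalent to $\Phi(u)\leqslant \Phi(v)^{\alpha}\Phi(w)^{\beta}$. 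If $\Phi(v)=0$ then $v=0$, hence $u=0$ and there is nothing to prove, and similarly if $\Phi(w)=0$; so I may assume $\Phi(v),\Phi(w)>0$ and, after replacing $v$ by $v/\Phi(v)$ and $w$ by $w/\Phi(w)$, reduce to the normalized case $\Phi(v)=\Phi(w)=1$, where the goal is just $\Phi(u)\leqslant 1$.

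The core estimate is then immediate: by the weighted arithmetic--geometric mean (Young) inequality applied entrywise to the nonnegative numbers $v_i,w_i$, we have $u_i = v_i^{\alpha} w_i^{\beta}\leqslant \alpha v_i + \beta w_i$ for every $i$, i.e.\ $0\leqslant u\leqslant \alpha v + \beta w$ in $\mathbb{R}^n_+$. Since $\Phi$ is monotone on $\mathbb{R}^n_+$ by Proposition~\ref{prop:gauge_fct_monoton} and is a norm, it follows that
\[
\Phi(u)\leqslant \Phi(\alpha v + \beta w)\leqslant \alpha\,\Phi(v) + \beta\,\Phi(w) = \alpha+\beta = 1 \, ,
\]
which completes the normalized case; undoing the normalization gives $\Phi(u)\leqslant \Phi(v)^{\alpha}\Phi(w)^{\beta}$ in general, hence the theorem.

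I do not expect a serious obstacle here. The only points needing a little care are the degenerate cases ($\Phi(v)=0$ or $\Phi(w)=0$), the bookkeeping that ensures $\alpha+\beta=1$ so that Young's inequality is applicable entrywise, and the observation that it is precisely \emph{monotonicity of $\Phi$ on the positive orthant} — not a majorization argument — that lets one pass from the pointwise bound $u\leqslant\alpha v+\beta w$ to the bound on $\Phi(u)$. One could alternatively argue via the dual gauge function $\Phi'$ together with the scalar H\"older inequality, but the AM--GM route above is shorter and relies only on results already available in the excerpt.
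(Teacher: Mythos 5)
Your proof is correct and follows essentially the same route as the paper's: the entrywise weighted arithmetic--geometric mean bound $u_i=v_i^{\alpha}w_i^{\beta}\leqslant \alpha v_i+\beta w_i$, then monotonicity of $\Phi$ on $\mathbb{R}^n_+$ (Proposition~\ref{prop:gauge_fct_monoton}) plus the triangle inequality. The only (cosmetic) difference is in converting the additive bound to the multiplicative one: you normalize $\Phi(v)=\Phi(w)=1$, while the paper exploits the same homogeneity by substituting $x\to tx$, $y\to t^{-1}y$ and optimizing over $t>0$; you also correctly read the second factor in the statement as $\Phi(|y|^q)^{1/q}$, fixing an evident typo.
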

The proof of Theorem~\ref{thm:gauge_fct} works similar to the proof of Theorem~IV.1.6 of~\cite{bhatia_matrix_1997}. First, note that for a convex function $f:I \rightarrow \mathbb{R}$ on an interval $I\subset \mathbb{R}$ and positive real numbers $a_1,\dots,a_n$ with $\sum_{i=1}^n a_i=1$ we have that
\begin{align}
f(\sum_{i=1}^n a_i t_i)\leqslant \sum_{i=1}^n a_i f(t_i) \quad \textnormal{for all }t_i\in I\, .
\end{align}
Setting $f(t)=-\log(t)$ and $I=(0,\infty)$, we find
\begin{align} \label{eq:ag_mean_inequality}
\prod_{i=1}^n t_i^{a_i} \leqslant \sum_{i=1}^n a_i t_i \quad \textnormal{for }t_i\geqslant0\, ,
\end{align}
which is called the (weighted) \emph{arithmetic-geometric mean inequality}.
\begin{proof} [Proof of Theorem~\ref{thm:gauge_fct}]
Let $p,q,r$ be positive real numbers with $\frac{1}{p}+\frac{1}{q}=\frac{1}{r}$. Let $x,y \in \mathbb{R}^n$ and let $\Phi$ be a symmetric gauge function. Setting $n=2$ and $a_1=\frac{r}{p}, a_2=\frac{r}{q}$ in~\eqref{eq:ag_mean_inequality}, we find
\begin{align} 
|x\cdot y|^r=|x^p|^{\frac{r}{p}}|y^q|^{\frac{r}{q}}\leqslant  \frac{r}{p} |x|^p +\frac{r}{q}  |y|^q \, ,
\end{align}
where we take the multiplication, the norm and the powers again element by element. Hence, using Proposition~\ref{prop:gauge_fct_monoton} (and that a symmetric gauge function is a norm), we find that
\begin{align} \label{eq:Phi_inequality}
\Phi(|x \cdot y|^r)\leqslant  \frac{r}{p} \Phi (|x|^p) +\frac{r}{q} \Phi( |y|^q) \, .
\end{align}
Let $t>0$ and note that the left hand side of~\eqref{eq:Phi_inequality} is invariant under the substitution $x \rightarrow tx$ and $y \rightarrow t^{-1}y$. Therefore
\begin{align}
\Phi(|x \cdot y|^r)\leqslant \inf_{t>0} \left[ \frac{r t^p }{p}  \Phi (|x|^p)  +\frac{r }{q t^q} \Phi( |y|^q)  \right] \, .
\end{align}
Searching for a local minimum by differentiation suggests to set $t=\left( \frac{\Phi( |y|^q) }{ \Phi (|x|^p) } \right)^{\frac{1}{p+q}}$. This yields 
\begin{align}
\Phi(|x \cdot y|^r)\leqslant \Phi (|x|^p)^{\frac{r}{p}}  \Phi( |y|^q)^{\frac{r}{q}},
\end{align}
which prooves the claim. \end{proof}
\begin{mythm}[Theorem~IV.2.5 of~\cite{bhatia_matrix_1997}] \label{thm:submojorised}
Let $A,B \in \Mat(n,n)$. Then
\begin{align}
\s^r(AB) \prec_w  \s^{r}(A)\s^{r}(B) \qquad \textnormal{for all } r>0 \, ,
\end{align}
where s$(C)$ denotes the vector whose entries correspond to the singular values of a matrix $C \in \Mat(n,n)$.
\end{mythm}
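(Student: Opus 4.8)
The plan is to deduce the inequality from the multiplicative (``log-majorisation'') form of Weyl's inequality for singular values and then push it forward through the increasing convex function $t\mapsto \ee^{rt}$. Denote by $\s_1(C)\ge\cdots\ge\s_n(C)\ge 0$ the singular values of $C\in\Mat(n,n)$, listed in decreasing order. The first step is to establish
\[
\prod_{i=1}^k\s_i(AB)\ \le\ \prod_{i=1}^k\s_i(A)\cdot\prod_{i=1}^k\s_i(B)\qquad\text{for }1\le k\le n.
\]
The cleanest argument uses $k$-th antisymmetric tensor powers: for $C\in\Mat(n,n)$ the operator $\wedge^k C$ on $\wedge^k(\mathbb C^n)$ has singular values $\{\s_{i_1}(C)\cdots\s_{i_k}(C):i_1<\cdots<i_k\}$, so its operator norm is $\|\wedge^k C\|_\infty=\prod_{i=1}^k\s_i(C)$; since $\wedge^k$ is multiplicative, $\wedge^k(AB)=(\wedge^k A)(\wedge^k B)$, and submultiplicativity of the operator norm yields the displayed chain of inequalities (which remains valid, with both sides possibly zero, when $A$ or $B$ is singular).

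Because $\s(A)$ and $\s(B)$ are each nonincreasing and nonnegative, the entrywise product $\bigl(\s_i(A)\s_i(B)\bigr)_i$ is again nonincreasing, hence equal to its own decreasing rearrangement, and $\prod_{i=1}^k\s_i(A)\prod_{i=1}^k\s_i(B)=\prod_{i=1}^k\bigl(\s_i(A)\s_i(B)\bigr)$. Taking logarithms, the displayed inequality says exactly that the vector $\bigl(\log\s_i(AB)\bigr)_i$ is weakly submajorised by $\bigl(\log(\s_i(A)\s_i(B))\bigr)_i$. Now I would apply the increasing convex function $\phi(t)=\ee^{rt}$, invoking the standard fact that if $p\prec_w q$ then $\bigl(\phi(p_i)\bigr)_i\prec_w\bigl(\phi(q_i)\bigr)_i$ for every nondecreasing convex $\phi$ — which one checks by fixing $k$, restricting to the truncated vectors $(p_1,\dots,p_k)\prec_w(q_1,\dots,q_k)$ in $\mathbb R^k$, and using the characterisation of weak submajorisation by the inequalities $\sum_i\phi(p_i)\le\sum_i\phi(q_i)$ over nondecreasing convex $\phi$. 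With $\phi(t)=\ee^{rt}$ this gives $\s^r(AB)\prec_w\s^r(A)\s^r(B)$, as claimed. Zero singular values are handled by a routine truncation argument: if $\s_m(A)\s_m(B)=0$ then $\s_i(AB)=0$ for $i\ge m$ as well, and one reduces to the strictly positive block.

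The main obstacle is the log-majorisation in the first step. Unlike the plain weak majorisation $\sum_{i=1}^k\s_i(AB)\le\sum_{i=1}^k\s_i(A)\s_i(B)$, it is genuinely a statement about the top-$k$ \emph{products} of singular values and does not follow from a single min--max argument; the exterior-algebra identity $\|\wedge^k C\|_\infty=\prod_{i=1}^k\s_i(C)$ (equivalently Horn's inequality) is really needed. It is also worth noting why the multiplicative form cannot be bypassed when $r\in(0,1)$: there $t\mapsto t^r$ is concave, so one cannot simply apply a convex function to a weak-majorisation relation between the singular values themselves — it is only on the logarithmic scale that the relevant map $t\mapsto\ee^{rt}$ becomes convex.
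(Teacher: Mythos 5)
Your proof is correct. Note, however, that the paper does not prove this statement at all: it is quoted verbatim as Theorem~IV.2.5 of Bhatia's book and used as a black box in the proof of the H\"older inequality (Lemma~\ref{thm:Exc_Hoelder}), so there is no internal proof to compare against. Your argument is essentially the textbook proof of the cited result: first the log-majorisation $\prod_{i=1}^k\s_i(AB)\leqslant\prod_{i=1}^k\s_i(A)\,\prod_{i=1}^k\s_i(B)$ via multiplicativity of the $k$-th antisymmetric tensor power and submultiplicativity of the operator norm, then the standard transfer lemma that weak submajorisation is preserved by nondecreasing convex functions, applied to $\phi(t)=\ee^{rt}$. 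One small imprecision: your truncation remark only treats indices $i\geqslant m$ where $\s_m(A)\s_m(B)=0$, but $\s_i(AB)$ can already vanish for some $i<m$ (e.g.\ two rank-one nilpotent matrices with $AB=0$), so inside your ``strictly positive block'' some left-hand logarithms may be $-\infty$ and the transfer lemma, stated for finite real vectors, does not literally apply. This is harmless --- either interpret $\ee^{r\cdot(-\infty)}=0$ and run the argument with logarithms truncated at a large negative level before passing to the limit, or prove the inequality for invertible $A,B$ and conclude by continuity of singular values --- but the reduction is not quite the one you describe.
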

\begin{mylem}[Exercise~IV.2.7 of~\cite{bhatia_matrix_1997}] \label{thm:Exc_Hoelder}
Let $p,q,r$ be positive real numbers with $\frac{1}{p}+\frac{1}{q}=\frac{1}{r}$ and $\normT{\cdot}$ be a unitarily invariant norm. Then
\begin{align}
\normT{|AB|^r}^{\frac{1}{r}}\leqslant \normT{|A|^p}^{\frac{1}{p}} \normT{|B|^q}^{\frac{1}{q}}.
\end{align} 
\end{mylem}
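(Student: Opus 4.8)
The plan is to reduce the operator inequality to the commutative (vector) inequality of Theorem~\ref{thm:gauge_fct}, using the classical correspondence between unitarily invariant norms and symmetric gauge functions, with the submajorisation bound of Theorem~\ref{thm:submojorised} serving as the bridge.

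First I would invoke von Neumann's theorem (Theorem~IV.2.1 of~\cite{bhatia_matrix_1997}): every unitarily invariant norm $\normT{\cdot}$ on $\Mat(n,n)$ can be written as $\normT{M}=\Phi(\s(M))$ for a symmetric gauge function $\Phi$ on $\mathbb{R}^n$, where $\s(M)$ is the vector of singular values. Since $|M|^{t}$ is positive semidefinite with eigenvalues equal to the entrywise $t$-th powers of the singular values of $M$, this gives $\normT{|M|^{t}}=\Phi(\s^{t}(M))$, where $\s^{t}(M)$ denotes the entrywise $t$-th power. Hence the asserted inequality is equivalent to $\Phi(\s^{r}(AB))^{1/r}\leqslant \Phi(\s^{p}(A))^{1/p}\,\Phi(\s^{q}(B))^{1/q}$.

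Next I would apply Theorem~\ref{thm:submojorised}, which yields $\s^{r}(AB)\prec_w \s^{r}(A)\,\s^{r}(B)$, together with the strong isotonicity~\eqref{eq:isotone} of symmetric gauge functions, to get $\Phi(\s^{r}(AB))\leqslant \Phi\!\big(\s^{r}(A)\,\s^{r}(B)\big)$. Because singular values are nonnegative, $\s^{r}(A)\,\s^{r}(B)=\big|\s(A)\cdot\s(B)\big|^{r}$ with products taken entrywise, so Theorem~\ref{thm:gauge_fct} applied with $x=\s(A)$, $y=\s(B)$ and the same exponents $p,q,r$ (whose relation $\tfrac1p+\tfrac1q=\tfrac1r$ is precisely the hypothesis here) gives $\Phi\!\big(\s^{r}(A)\,\s^{r}(B)\big)^{1/r}\leqslant \Phi(\s^{p}(A))^{1/p}\,\Phi(\s^{q}(B))^{1/q}$. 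Chaining the two estimates and translating back via $\Phi(\s^{p}(A))=\normT{|A|^{p}}$ and $\Phi(\s^{q}(B))=\normT{|B|^{q}}$ completes the proof.

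The work is essentially bookkeeping: one must be careful that $\s(|M|^{t})=\s^{t}(M)$ (valid since $|M|$ is positive semidefinite), that the exponent condition lines up exactly with what Theorem~\ref{thm:gauge_fct} requires, and one should read the right-hand side of Theorem~\ref{thm:gauge_fct} as $\Phi(|x|^{p})^{1/p}\Phi(|y|^{q})^{1/q}$ (the stated ``$\Phi(|x|^q)$'' being a typo). The only substantial external input is von Neumann's characterization of unitarily invariant norms, which is standard; I do not expect any genuine obstacle beyond assembling these pieces.
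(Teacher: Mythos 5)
Your proposal is correct and follows essentially the same route as the paper's own proof: the correspondence between unitarily invariant norms and symmetric gauge functions (Theorem~IV.2.1 of~\cite{bhatia_matrix_1997}), the weak-submajorisation bound of Theorem~\ref{thm:submojorised} together with strong isotonicity~\eqref{eq:isotone}, and Theorem~\ref{thm:gauge_fct} applied to the singular-value vectors, with the identification $\s(|M|^{t})=\s^{t}(M)$ doing the bookkeeping. Your reading of the right-hand side of Theorem~\ref{thm:gauge_fct} as $\Phi(|x|^p)^{1/p}\Phi(|y|^q)^{1/q}$ is also the intended one (the statement in the text contains a typo).
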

\begin{proof}
Let $p,q,r$ be positive real numbers with $\frac{1}{p}+\frac{1}{q}=\frac{1}{r}$ and $\normT{\cdot}$ be a unitarily invariant norm. From Theorem~\ref{thm:submojorised}, we have 
\begin{align}
\s^r(AB) \prec_w  \s^{r}(A)\s^{r}(B) \, .
\end{align} 
Let us define the function $\Phi(x)=\normT{\diag(x)}:\mathbb{R}^n\rightarrow \mathbb{R}_{+}$, where $\diag(x)$ denotes a diagonal matrix with diagonal entries $x_1,x_2,\dots,x_n$. By Theorem~IV.2.1 of~\cite{bhatia_matrix_1997}, $\Phi$ is a symmetric gauge function. Therefore, $\Phi$ is strongly isotone (cf.~\eqref{eq:isotone}), which leads to 
\begin{align} \label{eq:Phi_from_isotone}
\Phi(\s^r(AB))^{\frac{1}{r}} \leqslant  \Phi \left(\s^r(A) \s^r(B)\right)^{\frac{1}{r}} \,.
\end{align} 
Using Theorem~\ref{thm:gauge_fct}, we can bound the right hand side of \eqref{eq:Phi_from_isotone} by
\begin{align} \label{bound_rhs_Phi}
\Phi(\s^r(A)\s^r(B)^r)^{\frac{1}{r}}  \leqslant  \Phi\left(\s(A)^p \right)^{\frac{1}{p}} \Phi\left(\s(B)^q\right)^{\frac{1}{q}}.
\end{align} 
Rcall that for $C \in \Mat(n,n)$ and a positive real number $t$, we have that $\s^t(C)=\s^{\frac{t}{2}}(CC^{*})$. Since the singular values of a non negative matrix are equal to its eigenvalues, we find that  $\s^t(C)= \s(|C|^t)$. Applying this to the left hand side of~\eqref{eq:Phi_from_isotone} and the right hand side of~\eqref{bound_rhs_Phi}, we find
\begin{align} 
\Phi(\s(|AB|^r))^{\frac{1}{r}}  \leqslant  \Phi\left(\s(|A|^p) \right)^{\frac{1}{p}} \Phi\left(\s(|B|^q)\right)^{\frac{1}{q}}.
\end{align} 
The claim now follows by noting that $\Phi(\s(C))=\normT{\diag\left(\s(C)\right)}=\normT{C}$  for any $C \in \Mat(n,n)$, since $\normT{\cdot}$ is unitarily invarint.
\end{proof}
Theorem~\ref{thm:Hoelder_UI_norm} now follows directly from Lemma~\ref{thm:Exc_Hoelder} by induction.

\section{Reverse Araki-Lieb-Thirring inequality} \label{sec:reverse_ALT}

Let us first recall the statement of the celebrated  Araki-Lieb-Thirring (ALT) inequality. 

\begin{mythm} [ALT inequality~\cite{lieb_inequalities_1976,araki_inequality_1990}\footnote{See also~\cite{SBT15} for an intuitive proof of the ALT inequality and a multivariate form of it.}] \label{thm:ALT}
Let  $A$ and $B$ be  positive semi-definite matrices (of the same dimension) and  $q\geqslant 0$. Then
\begin{align}
 \tr\, (B^{\frac{r}{2}} A^r B^{\frac{r}{2}})^q \leqslant \tr \, (B^{\frac{1}{2}} A B^{\frac{1}{2}})^{rq}  & \quad \text{ for $r \in [0,1] $ and}  \label{eq:alt_smaller_one} \\
\tr\, (B^{\frac{r}{2}} A^r B^{\frac{r}{2}})^q \geqslant \tr \, (B^{\frac{1}{2}} A B^{\frac{1}{2}})^{rq} &\quad \text{ for $r \geqslant 1 \, .$ } \label{eq:alt_bigger_one}
\end{align}
\end{mythm}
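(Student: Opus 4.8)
The plan is to prove only inequality~\eqref{eq:alt_smaller_one} (the case $r\in[0,1]$), since~\eqref{eq:alt_bigger_one} follows from it by a change of variables. Indeed, granting~\eqref{eq:alt_smaller_one} for all positive semi-definite pairs and all exponents in $[0,1]$, fix $r\geqslant 1$ and apply it with $A^r$ in place of $A$, with $B^r$ in place of $B$, with exponent $1/r\in(0,1]$ in place of $r$, and with $rq$ in place of $q$. Using $(B^r)^{1/(2r)}=B^{1/2}$, $(A^r)^{1/r}=A$ and $\tfrac1r\cdot rq=q$, the left-hand side of~\eqref{eq:alt_smaller_one} turns into $\tr(B^{1/2}AB^{1/2})^{rq}$ and the right-hand side into $\tr(B^{r/2}A^rB^{r/2})^{q}$, which is exactly~\eqref{eq:alt_bigger_one}. (Powers are read on supports, or one first passes to $A+\epsilon\id$, $B+\epsilon\id$ and lets $\epsilon\to 0$.)

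To prove~\eqref{eq:alt_smaller_one} I would pass to singular values. Put $P:=B^{r/2}A^rB^{r/2}$ and $Q:=(B^{1/2}AB^{1/2})^{r}$, both positive semi-definite. It suffices to show the weak log-majorization $\prod_{i=1}^{k}\s_i(P)\leqslant\prod_{i=1}^{k}\s_i(Q)$ for every $k$: raising each coordinate to the power $q\geqslant 0$ preserves this relation, weak log-majorization implies ordinary weak majorization, and the $k=n$ instance then gives $\tr P^{q}=\sum_i\s_i(P)^q\leqslant\sum_i\s_i(Q)^q=\tr Q^{q}=\tr(B^{1/2}AB^{1/2})^{rq}$. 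By the antisymmetric-tensor-power trick it is in turn enough to treat $k=1$ for arbitrary positive semi-definite $A,B$: applying the $k=1$ bound to $\wedge^kA$ and $\wedge^kB$ — which are again positive semi-definite, satisfy $\wedge^k(B^{r/2}A^rB^{r/2})=(\wedge^kB)^{r/2}(\wedge^kA)^{r}(\wedge^kB)^{r/2}$ since $\wedge^k$ is multiplicative and commutes with positive powers, and obey $\norm{\wedge^kM}_\infty=\prod_{i=1}^k\s_i(M)$ — recovers the general $k$. Thus everything reduces to the operator-norm estimate
\[
\norm{B^{r/2}A^rB^{r/2}}_\infty\leqslant\norm{B^{1/2}AB^{1/2}}_\infty^{r}\qquad(r\in[0,1]).
\]

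For this last estimate, by homogeneity in $A$ I may assume $\norm{B^{1/2}AB^{1/2}}_\infty=1$ (the case $0$ being trivial), and by continuity that $B$ is invertible; then $B^{1/2}AB^{1/2}\leqslant\id$, which after conjugation by $B^{-1/2}$ is equivalent to $A\leqslant B^{-1}$. Since $t\mapsto t^{r}$ is operator monotone on $[0,\infty)$ for $r\in[0,1]$, this yields $A^{r}\leqslant B^{-r}$, hence $B^{r/2}A^{r}B^{r/2}\leqslant B^{r/2}B^{-r}B^{r/2}=\id$, i.e.\ $\norm{B^{r/2}A^rB^{r/2}}_\infty\leqslant 1$. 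The one genuinely nontrivial input is the operator monotonicity of $t\mapsto t^{r}$ (equivalently, the Cordes inequality $\norm{X^{r}Y^{r}}_\infty\leqslant\norm{XY}_\infty^{r}$), which I would invoke as a standard fact; everything else is elementary majorization theory. I expect the fiddliest part to be the bookkeeping in the tensor-power reduction — verifying that powers and products behave as claimed under $\wedge^k$ and that the top singular value of $\wedge^kM$ is $\prod_{i\le k}\s_i(M)$ — while the actual mathematical content sits entirely in that one-line operator-monotonicity argument. An alternative route, avoiding operator monotonicity at the cost of more analytic machinery, is the complex-interpolation proof of~\cite{SBT15}.
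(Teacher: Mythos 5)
Your proposal is correct. Note, however, that the thesis does not prove Theorem~\ref{thm:ALT} at all: it is quoted as a known result with citations to Lieb--Thirring, Araki, and to~\cite{SBT15} for an alternative proof, so there is no in-paper argument to compare against. What you have written is essentially Araki's classical log-majorization proof: the reduction of the case $r\geqslant 1$ to $r\in[0,1]$ by the substitution $A\to A^r$, $B\to B^r$, $r\to 1/r$, $q\to rq$ is exactly right; the reduction of the trace inequality to weak log-majorization of singular values, and of the latter to the $k=1$ operator-norm bound via antisymmetric tensor powers (using multiplicativity of $\wedge^k$, compatibility with positive powers of positive semi-definite matrices, and $\norm{\wedge^k M}_\infty=\prod_{i\leqslant k}\s_i(M)$), is the standard route; and the operator-norm estimate $\norm{B^{r/2}A^rB^{r/2}}_\infty\leqslant\norm{B^{1/2}AB^{1/2}}_\infty^{r}$ via normalization, $A\leqslant B^{-1}$, and operator monotonicity of $t\mapsto t^r$ on $[0,\infty)$ for $r\in[0,1]$ is precisely where the content lies. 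The only points worth tightening are cosmetic: the degenerate exponents ($r=0$ and $q=0$, where powers are read as support projections) deserve a sentence, and when you pass from weak log-majorization to weak majorization you should either assume strictly positive singular values or argue by a limiting/zero-padding remark. The interpolation or pinching proof of~\cite{SBT15} that you mention is a genuinely different technique, but your operator-monotonicity argument is the shorter and more elementary of the two.
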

Our main result of this chapter is a reversed version of the ALT inequality, which follows from the H\"older inequality.

\begin{mythm}[Reverse ALT inequality]  \label{thm:Inv_ALT}
Let $A$ and $B$ be positive semi-definite matrices and $q > 0$. Then, for $r \in (0,1]$ and $a,b\in (0,\infty]$ such that $\frac{1}{2rq}=\frac{1}{2q}+\frac{1}{a}+\frac{1}{b}$, we have
\begin{align}
\label{eq:Inv_ALT}
\tr\,\big(B^{\frac{1}{2}}A B^{\frac{1}{2}}\big)^{rq} \leqslant \Big(\tr\,\big(B^{\frac{r}{2}}A^r B^{\frac{r}{2}}\big)^q\Big)^r \norm{A^{\frac{1-r}{2}}}_a^{2rq}\norm{B^{\frac{1-r}{2}} }_b^{2rq} \, .
\end{align}
Meanwhile, for $r \in [1,\infty)$ and  $a,b\in (0,\infty]$ such that $\frac{1}{2q}=\frac{1}{2rq}+\frac{1}{a}+\frac{1}{b}$, we have 
\begin{align}
\label{eq:Inv_ALT_r_bigger_one}
\tr\,\big(B^{\frac{1}{2}}A B^{\frac{1}{2}}\big)^{rq} \geqslant \Big(\tr\,\big(B^{\frac{r}{2}}A^r B^{\frac{r}{2}}\big)^q\Big)^r \norm{A^{\frac{r-1}{2}}}_a^{-2rq}\norm{B^{\frac{r-1}{2}} }_b^{-2rq} \, .
\end{align}
\end{mythm}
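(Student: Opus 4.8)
The plan is to deduce both inequalities from the generalized H\"older inequality (Corollary~\ref{cor:generalized_Holder}) by rewriting everything in terms of Schatten (quasi-)norms of the products $A^{1/2}B^{1/2}$ and $A^{r/2}B^{r/2}$. The key elementary identities are $\tr\,(C^{1/2}DC^{1/2})^p=\norm{C^{1/2}DC^{1/2}}_p^p$ for positive semi-definite $C,D$ (since $C^{1/2}DC^{1/2}\geqslant 0$), together with $\norm{C^{1/2}DC^{1/2}}_p=\norm{D^{1/2}C^{1/2}}_{2p}^2$, which is just the relation $\norm{X}_{2p}^2=\norm{X^{*}X}_p$ from Section~\ref{sec:Schatten} applied to $X=D^{1/2}C^{1/2}$. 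Setting $M:=A^{1/2}B^{1/2}$ and $N:=A^{r/2}B^{r/2}$, these give $\tr\,(B^{1/2}AB^{1/2})^{rq}=\norm{M}_{2rq}^{2rq}$ and $\tr\,(B^{r/2}A^rB^{r/2})^q=\norm{N}_{2q}^{2q}$.

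For $r\in(0,1]$, since $A^{(1-r)/2}$ commutes with $A^{r/2}$ and $B^{(1-r)/2}$ commutes with $B^{r/2}$, we factor $M=A^{(1-r)/2}\,N\,B^{(1-r)/2}$ as a product of three matrices. Applying Corollary~\ref{cor:generalized_Holder} with Schatten exponents $s=2rq$ for $M$, $s_1=2q$ for $N$, and $s_2=a$, $s_3=b$ for $A^{(1-r)/2}$ and $B^{(1-r)/2}$ is legitimate exactly because the hypothesis $\frac1{2rq}=\frac1{2q}+\frac1a+\frac1b$ holds (and $\frac1{2rq}\geqslant\frac1{2q}$ in this range, so positive $a,b$ exist). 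This yields $\norm{M}_{2rq}\leqslant\norm{A^{(1-r)/2}}_a\,\norm{N}_{2q}\,\norm{B^{(1-r)/2}}_b$; raising to the power $2rq$ and using $\norm{N}_{2q}^{2rq}=(\norm{N}_{2q}^{2q})^r=\big(\tr\,(B^{r/2}A^rB^{r/2})^q\big)^r$ gives~\eqref{eq:Inv_ALT}.

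For $r\in[1,\infty)$ the same factorisation reads $N=A^{(r-1)/2}\,M\,B^{(r-1)/2}$, and Corollary~\ref{cor:generalized_Holder} with exponents $2q$ for $N$, $2rq$ for $M$, and $a,b$ for $A^{(r-1)/2},B^{(r-1)/2}$ — legitimate because $\frac1{2q}=\frac1{2rq}+\frac1a+\frac1b$ — gives $\norm{N}_{2q}\leqslant\norm{A^{(r-1)/2}}_a\,\norm{M}_{2rq}\,\norm{B^{(r-1)/2}}_b$. Dividing by the two factor norms, raising to the power $2rq$, and simplifying $\norm{N}_{2q}^{2rq}$ exactly as before produces~\eqref{eq:Inv_ALT_r_bigger_one}.

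The routine part is bookkeeping the exponents; the points needing a little care are: checking the Schatten identities above hold throughout the quasi-norm range $p<1$ (they do, being statements purely about singular values); noting that every fractional power appearing carries a nonnegative exponent, so the support conventions cause no difficulty; and handling the limiting cases $a=\infty$ or $b=\infty$ (in particular $r=1$, which forces $a=b=\infty$ and turns the bound into an equality), where $\norm{\cdot}_\infty$ is the operator norm and $\tfrac1\infty=0$ as in Corollary~\ref{cor:generalized_Holder}. I expect the only genuinely delicate point to be stating the commuting-factor identity $M=A^{(1-r)/2}NB^{(1-r)/2}$ (resp.\ $N=A^{(r-1)/2}MB^{(r-1)/2}$) carefully on the supports of $A$ and $B$.
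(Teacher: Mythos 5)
Your proof is correct and follows essentially the same route as the paper: rewrite both traces as Schatten quasi-norms of $A^{1/2}B^{1/2}$ and $A^{r/2}B^{r/2}$, insert the commuting factors $A^{(1\pm r)/2}$, $B^{(1\pm r)/2}$, and apply the three-factor generalized H\"older inequality (Corollary~\ref{cor:generalized_Holder}) with the stated exponents. The only cosmetic difference is that for $r\geqslant 1$ you apply H\"older directly to the factorization $A^{r/2}B^{r/2}=A^{\frac{r-1}{2}}\,A^{1/2}B^{1/2}\,B^{\frac{r-1}{2}}$ and rearrange, whereas the paper obtains~\eqref{eq:Inv_ALT_r_bigger_one} from~\eqref{eq:Inv_ALT} by the substitution $A\to A^r$, $B\to B^r$, $r\to\tfrac1r$, $q\to rq$ --- the two derivations are equivalent.
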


\begin{proof}
For $r=1$ the statement is trivial.
Let  $r \in (0,1)$ and $q > 0$. 
 We can rewrite the trace-terms in~\eqref{eq:Inv_ALT} as Schatten (quasi-)norms  
\begin{align}
&\tr\,\big(B^{\frac{1}{2}}A B^{\frac{1}{2}}\big)^{rq}=\norm{ B^{\frac{1}{2}}  A^{\frac{1}{2}}}_{2rq}^{2rq} \qquad \text{and}\\*
&\tr\,\big(B^{\frac{r}{2}}A^r B^{\frac{r}{2}}\big)^{q}=\norm{ B^{\frac{r}{2}}  A^{\frac{r}{2}}}_{2q}^{2q} \, .
\end{align}
Inequality~\eqref{eq:Inv_ALT} then follows by an application of the generalized H\"older inequality given in Corollary~\ref{cor:generalized_Holder} with $l=3$. Choosing $s=2rq$, and $s_1=b$, $s_2=2q$, and $s_3=a$ for some $a,b \in (0,\infty]$ with $\frac{1}{2rq}=\frac{1}{2q}+\frac{1}{a}+\frac{1}{b}$, we find
\begin{align}
\tr\,\big(B^{\frac{1}{2}}A B^{\frac{1}{2}}\big)^{rq}=\norm{  B^{\frac{1-r}{2}}  B^{\frac{r}{2}}  A^{\frac{r}{2}} A^{\frac{1-r}{2}} }_{2rq}^{2rq} \leqslant \norm{B^{\frac{1-r}{2}} }_b^{2rq} \norm{B^{\frac{r}{2}}  A^{\frac{r}{2}}}_{2q}^{2rq} \norm{A^{\frac{1-r}{2}} }_a^{2rq} \, .\nonumber
\end{align}
Inequality~\eqref{eq:Inv_ALT_r_bigger_one} now follows  from~\eqref{eq:Inv_ALT} by substituting  $A \rightarrow A^r$, $B \rightarrow B^r$, $r \rightarrow \frac{1}{r}\,$ and $q\rightarrow qr$. 
 \end{proof}

 \begin{myrmk} \label{rmk:Inv_ALT}
Another reverse ALT inequality was given in~\cite{audenaert_araki-lieb-thirring_2008}, where it was shown that for $r \in (0,1)$ and $q > 0$ we have
\begin{align}
\label{eq:adenauer}
\tr(B^{\frac{1}{2}}A B^{\frac{1}{2}})^{rq} \leqslant \big(\tr(B^{\frac{r}{2}}A^r B^{\frac{r}{2}})^q\big)^r \big(\tr \, A^{rq}\norm{B}_{\infty}^{rq}\big)^{1-r} \, ,
\end{align}
while for $r>1$ the inequality holds in the opposite direction. We recover these inequalities as a corollary of Theorem~\ref{thm:Inv_ALT} by setting $b=\infty$ and $a=\frac{2rq}{1-r}$ in~\eqref{eq:Inv_ALT}, and $b=\infty$ and $a=\frac{2rq}{r-1}$ in~\eqref{eq:Inv_ALT_r_bigger_one}. We note that there also exists a reverse ALT inequality in terms of matrix means (see e.g.~\cite{ando94}) that however is different to Theorem~\ref{thm:Inv_ALT}. 
\end{myrmk}

\section{Reverse Golden-Thompson inequality}

Let us first recall the celebrated Golden-Thompson (GT) inequality.

\begin{mythm} [GT inequality~\cite{golden65,thompson65}] \label{thm:GT}
Let  $H_1$ and $H_2$ be two Hermitian matrices. Then
\begin{align} \label{eq_GT}
\tr\, \ee^{H_1 + H_2} \leqslant \tr \, \ee^{H_1} \ee^{H_2} \, .
\end{align}
\end{mythm}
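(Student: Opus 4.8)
The plan is to derive the Golden--Thompson inequality directly from the Araki--Lieb--Thirring inequality (Theorem~\ref{thm:ALT}), which is already available to us, combined with the Lie product formula. Since $H_1$ and $H_2$ are Hermitian, the matrices $A := \ee^{H_1}$ and $B := \ee^{H_2}$ are positive definite, so Theorem~\ref{thm:ALT} applies with these choices.

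First I would fix $r \in (0,1]$ and apply inequality~\eqref{eq:alt_smaller_one} with $q = \tfrac1r$ (legitimate, since then $q \geqslant 1 \geqslant 0$), so that $rq = 1$. This yields
\begin{align}
\tr\,\big(B^{\frac{r}{2}} A^r B^{\frac{r}{2}}\big)^{\frac1r} \;\leqslant\; \tr\,\big(B^{\frac12} A B^{\frac12}\big) \;=\; \tr AB \;=\; \tr\,\ee^{H_1}\ee^{H_2}\,, \nonumber
\end{align}
where the last two equalities use cyclicity of the trace and the definitions of $A$ and $B$. Crucially, the right-hand side no longer depends on $r$.

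Next I would let $r \to 0^{+}$ on the left-hand side. Writing $B^{\frac{r}{2}} A^r B^{\frac{r}{2}} = \ee^{\frac{r}{2}H_2}\,\ee^{r H_1}\,\ee^{\frac{r}{2}H_2}$, the symmetric Lie--Trotter product formula gives
\begin{align}
\lim_{r\to 0^{+}} \big(\ee^{\frac{r}{2}H_2}\,\ee^{r H_1}\,\ee^{\frac{r}{2}H_2}\big)^{\frac1r} \;=\; \ee^{H_1+H_2}\,. \nonumber
\end{align}
Since the trace is continuous on the finite-dimensional matrix algebra, passing to the limit on both sides produces $\tr\,\ee^{H_1+H_2} \leqslant \tr\,\ee^{H_1}\ee^{H_2}$, which is the claim. (Equivalently, one may instead take $r = \tfrac1n$ and $q = n$ in~\eqref{eq:alt_smaller_one} and let $n \to \infty$, invoking the product formula in the form $\ee^{H_1+H_2} = \lim_{n\to\infty}\big(\ee^{H_2/2n}\,\ee^{H_1/n}\,\ee^{H_2/2n}\big)^{n}$.)

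The only real obstacle is the justification of the Lie--Trotter limit together with the interchange of that limit with the trace. The interchange is immediate by continuity of the trace in finite dimensions, while the product formula is a standard computation: expand $\ee^{\frac{r}{2}H_2}\,\ee^{r H_1}\,\ee^{\frac{r}{2}H_2} = \id + r(H_1+H_2) + O(r^2)$, take the matrix logarithm to get $r(H_1+H_2)+O(r^2)$, divide by $r$, and exponentiate. Every other step is a one-line substitution into results already established in the excerpt.
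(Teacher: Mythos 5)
Your proof is correct and follows essentially the same route the paper indicates: it notes that the GT inequality follows directly from the ALT inequality~\eqref{eq:alt_smaller_one} combined with the Lie--Trotter product formula (citing Reed--Simon), and your argument simply carries out that derivation in detail, with the choice $q=\tfrac1r$, $rq=1$ and the limit $r\to 0^{+}$ handled correctly in finite dimensions.
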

Note that we have equality in~\eqref{eq_GT} if and only if $H_1$ and $H_2$ commute. Interestingly, the GT inequality follows directly from the ALT inequality together with the Lie-Trotter product formula (cf.~\cite{reed_functional_1980}, page 295). On the other hand, there is a reverse GT inequality, which is related to matrix means.
\begin{mythm} [Reverse GT inequality~\cite{HP93}] \label{thm:reverse_GT}
Let   $p>0$ and $0\leqslant \alpha \leqslant 1$. Let $H_1$ and $H_2$ be two Hermitian matrices. Then
\begin{align} \label{eq_reverse_GT}
\tr \, \left( \ee^{ p H_1} \#_{\alpha} \ee^{p H_2} \right)^{\frac{1}{p}} \leqslant \tr\, \ee^{(1-\alpha)H_1 + \alpha H_2} \, .
\end{align}
\end{mythm}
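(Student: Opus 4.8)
The plan is to deduce the reverse Golden--Thompson inequality from the reverse ALT inequality of Theorem~\ref{thm:Inv_ALT} together with the Lie--Trotter product formula, mirroring the way the ordinary GT inequality follows from the ordinary ALT inequality. First I would reduce to positive definite $H_1,H_2$ by continuity (replacing $H_i$ by $H_i+c\,\id$ changes both sides only by a common factor $\ee^{((1-\alpha)+\alpha)c}=\ee^c$ up to the $\tfrac1p$ power bookkeeping, so it suffices to check this scaling carefully). Then I would set $A = \ee^{p H_1}$ and $B = \ee^{p H_2}$, and observe that the matrix mean on the left of~\eqref{eq_reverse_GT} can be rewritten: for $\alpha\in(0,1)$,
\begin{align}
\ee^{pH_1}\#_{\alpha}\ee^{pH_2} = B^{1/2}\bigl(B^{-1/2}AB^{-1/2}\bigr)^{\alpha}B^{1/2},
\end{align}
whose eigenvalues agree with those of $\bigl(B^{1/2}A B^{1/2}\bigr)$-type expressions only after a further reshuffle; the cleaner route is to note $\tr\,(A\#_{\alpha}B)^{1/p}$ is a function of the eigenvalues of $A^{\alpha}B^{1-\alpha}$-like products via the identity $\#_\alpha$ is the $\alpha$-weighted geometric mean, so I would instead target the form appearing in Theorem~\ref{thm:Inv_ALT}, namely $\tr\,(B^{r/2}A^r B^{r/2})^q$ with the roles chosen so that $A^r,B^r$ become $\ee^{H_1},\ee^{H_2}$ in the Lie--Trotter limit.

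Concretely, the key step is to apply the reverse ALT inequality~\eqref{eq:Inv_ALT} with the substitution $r = \tfrac1n$, $A\mapsto \ee^{pH_1}$, $B\mapsto\ee^{pH_2}$, and $q$ chosen appropriately (so that $rq$ tracks the exponent $\tfrac1p$), and then let $n\to\infty$. On the left, $\tr\,(B^{1/2}AB^{1/2})^{rq}$ is independent of $n$ after the substitution is arranged, and equals (up to constants) $\tr\,(\ee^{pH_2/2}\ee^{pH_1}\ee^{pH_2/2})^{1/p}$; on the right, $\bigl(\tr\,(B^{r/2}A^rB^{r/2})^q\bigr)^r$ becomes $\bigl(\tr\,(\ee^{H_2/(2n)}\ee^{H_1/n}\ee^{H_2/(2n)})^{np/p}\bigr)^{1/n}$ — wait, I need the Lie--Trotter formula $\lim_{n\to\infty}(\ee^{X/n}\ee^{Y/n})^n=\ee^{X+Y}$ in its symmetrized form $\lim_{n\to\infty}(\ee^{Y/2n}\ee^{X/n}\ee^{Y/2n})^n=\ee^{X+Y}$ to collapse the inner product to $\ee^{(1-\alpha)H_1+\alpha H_2}$. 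The correction norm factors $\norm{A^{(1-r)/2}}_a^{2rq}\norm{B^{(1-r)/2}}_b^{2rq}$ must be shown to tend to $1$ as $r\to1$ (i.e.\ $n\to\infty$): since $A^{(1-r)/2}\to\id$ and the exponents $2rq$ stay bounded while $a,b\to\infty$ at the matching rate dictated by $\tfrac{1}{2rq}=\tfrac{1}{2q}+\tfrac1a+\tfrac1b$, each Schatten norm $\norm{\id+O(1-r)}_a\to$ a power of $\tr\,\id$ that is neutralized by the vanishing exponent; I would verify $\norm{M^{(1-r)/2}}_a^{2rq}\to 1$ by writing it as $\exp\bigl(2rq\cdot\tfrac1a\log\tr|M|^{a(1-r)/2}\bigr)$ and checking the exponent $\to 0$.

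The main obstacle I anticipate is the bookkeeping in matching the geometric-mean parameter $\alpha$ to the ALT exponents: Theorem~\ref{thm:Inv_ALT} is phrased with a single power $r$ applied to both $A$ and $B$, whereas the reverse GT inequality has an asymmetric weight $\alpha$ versus $1-\alpha$. Bridging this requires either invoking the characterization $\widehat{Q}_\alpha(\rho\|\sigma)=\tr\,\sigma\#_\alpha\rho$ together with the relation between $\widetilde{D}_\alpha$ and $\widehat{D}_\alpha$ (which the thesis connects to ALT), or directly using a two-sided version where one tracks $A^{\alpha}$ and $B^{1-\alpha}$ separately through a suitably weighted Lie--Trotter limit $\lim_{n\to\infty}(\ee^{\alpha Y/n}\ee^{X/n})$-type expansion. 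I expect the cleanest exposition routes the argument through the logarithmic trace inequality equivalent to reverse GT (the one the abstract says is proved via ALT) rather than hitting~\eqref{eq_reverse_GT} head-on; so an alternative plan is: prove $\tr\,\rho\log(\rho^{1/2}\sigma^{-1}\rho^{1/2})\geqslant\widetilde{D}_1(\rho\|\sigma)$-style statement from ALT, then exponentiate/integrate to recover~\eqref{eq_reverse_GT}. Either way, the limiting argument and the vanishing of the norm corrections are routine; the parameter matching is where care is needed.
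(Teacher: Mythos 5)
There is a genuine gap in your primary plan. The Lie--Trotter mechanism requires $r=\tfrac1n\to 0$ (high powers of products of $n$-th roots of the exponentials), but in Theorem~\ref{thm:Inv_ALT} the Schatten-norm correction factors only become trivial when $r\to 1$; your sketch conflates the two limits (you set $r=\tfrac1n$ and then argue ``as $r\to1$ (i.e.\ $n\to\infty$)''). Concretely, to keep the outer exponent $rq=\tfrac1p$ fixed you must take $q\propto n$, whence $\tfrac1a+\tfrac1b=\tfrac{1}{2rq}-\tfrac{1}{2q}\to\tfrac{p}{2}$, so $a,b$ stay \emph{finite} and $\norm{A^{\frac{1-r}{2}}}_a^{2rq}\to\norm{A^{\frac12}}_a^{\frac2p}\neq 1$ (and similarly for $B$); your argument that the corrections vanish is only valid in the $r\to1$ regime, where no Lie--Trotter collapse occurs. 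In addition, the main factor $\big(\tr(B^{\frac r2}A^rB^{\frac r2})^q\big)^r$ carries the outer power $r\to0$, which degenerates in the limit, and nothing in the two-matrix inequality \eqref{eq:Inv_ALT} ever produces the asymmetric weights $\alpha$, $1-\alpha$ or the geometric mean $\ee^{pH_1}\#_{\alpha}\ee^{pH_2}$ --- the very obstacle you flag is left unresolved. So the reverse-ALT-plus-Lie--Trotter route, as written, does not close, and it is not a matter of routine bookkeeping.

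For comparison, the thesis does not prove Theorem~\ref{thm:reverse_GT} head-on: the statement is quoted from~\cite{HP93}. What is proved (and only for $p=1$) is the equivalent logarithmic form, Theorem~\ref{thm:rev_GT_log}, and the proof uses the \emph{ordinary} ALT inequality \eqref{eq:alt_bigger_one} (the $r\geqslant1$ branch), not the reverse one: it yields $\widetilde{D}_{\alpha}(A\|B)\leqslant\widehat{D}_{\alpha}(A\|B)$ for $\alpha>1$, and the limit $\alpha\downarrow1$ via Propositions~\ref{prop:limit_min} and~\ref{prop:limit_max} gives \eqref{eq_reverse_GT_log}; the passage back to \eqref{eq_reverse_GT} rests on the equivalence established in~\cite{HP93} for fixed $p$, which is a substantive result, not a simple ``exponentiate/integrate'' step. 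Your fallback plan points toward this route but is not carried out, would still need that equivalence (or a direct replacement for it), and would in any case only cover $p=1$. To repair the proposal, either develop the paper's route explicitly (ALT $\Rightarrow$ logarithmic form $\Rightarrow$ cite the \cite{HP93} equivalence), or give a genuinely new argument that produces the $\#_{\alpha}$ structure directly.
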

To proof the reverse GT inequality, it is translated into a logarithmic trace inequality in~\cite{HP93}. We give a new and simple proof of this logarithmic trace inequality in the next section. 

\subsection{Logarithmic form of the reverse GT inequality}
In~\cite{HP93}, it is shown that the reverse GT inequality stated in Theorem~\ref{thm:reverse_GT} for a fixed $p$ is equivalent to the following logarithmic trace inequality for the same $p$.

\begin{mythm} [Logarithmic form of reverse GT~\cite{HP93}] \label{thm:rev_GT_log}
Let  $A$ and $B$ be positive definite matrices.\footnote{For an extension of~\eqref{eq_reverse_GT_log} to positive semi-definite matrices see Section~4 of~\cite{HP93}. } Then
\begin{align} \label{eq_reverse_GT_log}
 \frac{1}{p} \tr \, A \log A^{\frac{p}{2}} B^p  A^{\frac{p}{2}} \geqslant \tr \, A \left(\log A + \log B \right) \, . 
\end{align}
\end{mythm}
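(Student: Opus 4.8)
The plan is to obtain \eqref{eq_reverse_GT_log} as a limiting case of the inequality $\widetilde{D}_{\alpha}\leqslant\widehat{D}_{\alpha}$ between the minimal and maximal quantum R\'enyi divergences, which I would prove directly from the ALT inequality (Theorem~\ref{thm:ALT}); the logarithmic trace inequality then emerges by differentiating this one-parameter family at an endpoint.

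The core step is: for positive definite $\rho,\sigma$ and $\alpha\in(0,1)$, one has $\widehat{Q}_{\alpha}(\rho\|\sigma)\leqslant\widetilde{Q}_{\alpha}(\rho\|\sigma)$, where $\widetilde{Q}_{\alpha},\widehat{Q}_{\alpha}$ are as in \eqref{eq:min_entropy}, \eqref{eq:max_entropy}. Writing $M:=\sigma^{-\frac12}\rho\sigma^{-\frac12}$, we have $\widehat{Q}_{\alpha}(\rho\|\sigma)=\tr\,\sigma M^{\alpha}$ by definition, and since $\tfrac{1-\alpha}{2\alpha}+\tfrac12=\tfrac1{2\alpha}$ also $\widetilde{Q}_{\alpha}(\rho\|\sigma)=\tr\big(\sigma^{\frac1{2\alpha}}M\sigma^{\frac1{2\alpha}}\big)^{\alpha}$. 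Applying \eqref{eq:alt_smaller_one} with the roles of its matrices $A,B$ played by $M$ and $\sigma^{1/\alpha}$, and with $r=\alpha\in(0,1]$, $q=1$, its right-hand side is precisely $\widetilde{Q}_{\alpha}$, while its left-hand side is $\tr\big(\sigma^{\frac12}M^{\alpha}\sigma^{\frac12}\big)=\tr\,\sigma M^{\alpha}=\widehat{Q}_{\alpha}$ by cyclicity of the trace. Thus $\widehat{Q}_{\alpha}\leqslant\widetilde{Q}_{\alpha}$, and dividing by the negative factor $\tfrac1{\alpha-1}$ after taking logarithms gives $\widetilde{D}_{\alpha}(\rho\|\sigma)\leqslant\widehat{D}_{\alpha}(\rho\|\sigma)$ for $\alpha\in(0,1)$.

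Letting $\alpha\to1$ and invoking Propositions~\ref{prop:limit_min} and~\ref{prop:limit_max} turns this into $\tr\,\rho(\log\rho-\log\sigma)\leqslant\tr\,\rho\log(\rho^{\frac12}\sigma^{-1}\rho^{\frac12})$, and taking $\rho=A$, $\sigma=B^{-1}$ (both sides scale identically under $A\mapsto cA$, so $\tr A=1$ may be assumed) yields exactly \eqref{eq_reverse_GT_log} for $p=1$. For general $p>0$ I would run the same scheme with $A^{p},B^{p}$ in place of $A,B$: using $A^{p/2}B^{p}A^{p/2}=A^{-p/2}(A^{p}B^{p})A^{p/2}$ together with $[A,A^{p/2}]=0$ gives $\tr\,A\log(A^{p/2}B^{p}A^{p/2})=\tr\,A\log(A^{p}B^{p})$, so \eqref{eq_reverse_GT_log} is equivalent to $\phi(p)\geqslant\tr\,A(\log A+\log B)$ for $\phi(t):=\tfrac1t\tr\,A\log(A^{t}B^{t})$; the Lie--Trotter expansion $\log(A^{t}B^{t})=t(\log A+\log B)+O(t^{2})$ gives $\lim_{t\to0^{+}}\phi(t)=\tr\,A(\log A+\log B)$, so it suffices to show $\phi$ is non-decreasing on $(0,\infty)$. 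This I would again extract from \eqref{eq:alt_smaller_one}: for $0<t_{1}\leqslant t_{2}$, applying it to the pair $A^{t_{2}},B^{t_{2}}$ with $r=t_{1}/t_{2}$ gives (an $A$-weighted form of) $\tr\,A\,(A^{t_{1}}B^{t_{1}})^{\beta}\leqslant\tr\,A\,(A^{t_{2}}B^{t_{2}})^{t_{1}\beta/t_{2}}$ for $\beta>0$; restricting to the curves $\{t\beta=u\}$, along which the right-hand exponent is $u/t_{2}$ and the quantity equals $\tr A$ at $u=0$ independently of $t$, and differentiating at $u=0^{+}$ yields $\phi(t_{1})\leqslant\phi(t_{2})$, and the case $t\to0^{+}$ closes the argument.

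I expect the delicate point to be exactly this last one, namely carrying the weight $A$ through the ALT step: ALT and the log-majorization behind it are unweighted statements, so the passage from $\tr(A^{t_{1}}B^{t_{1}})^{\beta}\leqslant\tr(A^{t_{2}}B^{t_{2}})^{t_{1}\beta/t_{2}}$ (which ALT yields at once) to the $A$-weighted inequality above, together with justifying the interchange of the $u\to0^{+}$ derivative with the trace, is where the real work lies. (Should a direct $A$-weighted ALT prove awkward, an alternative is to establish $\widetilde{D}_{1}\leqslant\widehat{D}_{1}$ first and then peel off one power of $A$ at a time: for $\rho=A$ and $\sigma=(A^{(v-1)/2}B^{w}A^{(v-1)/2})^{-1}$ that inequality reads $\tr\,A\log(A^{v/2}B^{w}A^{v/2})\geqslant\tr\,A\log A+\tr\,A\log(A^{(v-1)/2}B^{w}A^{(v-1)/2})$, which iterates down to a trivial identity as $v\to0$.)
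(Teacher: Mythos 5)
For the case $p=1$ --- which is the only case the paper's own proof actually covers --- your argument is correct and is essentially the paper's proof: both derive $\widetilde{D}_{\alpha}\leqslant\widehat{D}_{\alpha}$ near $\alpha=1$ directly from the ALT inequality and then take the limit $\alpha\to1$ via Propositions~\ref{prop:limit_min} and~\ref{prop:limit_max}, finally choosing $\sigma=B^{-1}$. The only difference is cosmetic: you approach $\alpha=1$ from below using \eqref{eq:alt_smaller_one} (your substitution $A\to\sigma^{-\frac12}\rho\sigma^{-\frac12}$, $B\to\sigma^{1/\alpha}$, $r=\alpha$, $q=1$ checks out and gives $\widehat{Q}_{\alpha}\leqslant\widetilde{Q}_{\alpha}$ for $\alpha\in(0,1)$), whereas the paper approaches from above using \eqref{eq:alt_bigger_one} with $r=\alpha>1$, $q=1$.

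Your extension to general $p>0$, however, contains a genuine gap, which you yourself partly flag. The monotonicity of $\phi(t)=\tfrac1t\tr\,A\log(A^tB^t)$ does not follow from the unweighted ALT inequality: ALT yields $\tr\,(A^{t_1}B^{t_1})^{\beta}\leqslant\tr\,(A^{t_2}B^{t_2})^{t_1\beta/t_2}$ (a statement about eigenvalues, resting on unitarily invariant norms and log-majorization), and inserting the weight $A$ destroys exactly the structure that proof uses; the $A$-weighted inequality you would differentiate at $u=0^{+}$ is essentially the Hiai--Petz/Ando--Hiai monotonicity result, i.e.\ the actual content of~\cite{HP93}, so invoking it here begs the question. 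Your parenthetical fallback is sound but only closes for integer $p$: each application of $\widetilde{D}_1\leqslant\widehat{D}_1$ with $\rho=A$ peels off exactly one full power of $A$, so starting from $v=p$ the telescoping terminates in a trivial identity only when $p\in\mathbb{N}$; for non-integer $p$ you are left after $\lfloor p\rfloor$ steps with $\tr\,A\log\big(A^{s/2}B^{p}A^{s/2}\big)\geqslant s\,\tr\,A\log A+p\,\tr\,A\log B$ for the fractional part $s=p-\lfloor p\rfloor\in(0,1)$, which is an inequality of exactly the type still to be proven, not a trivial identity. Since the paper itself only claims its new proof for $p=1$ and defers general $p$ to~\cite{HP93}, your $p=1$ argument matches the paper, but the general-$p$ part as written is not complete.
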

\begin{proof}[Alternative proof to~\cite{HP93} of Theorem~\ref{thm:rev_GT_log} (for $p=1$).]
First, recall from Proposition~\ref{prop:limit_min} that we have 
\begin{align}
\frac{1}{\tr\, A} \tr\, A \left( \log A -  \log B \right) &= \lim_{\alpha \downarrow 1} \tilde D_{\alpha}(A \| B) \nonumber \\
&= \lim_{\alpha \downarrow 1} \frac{1}{\alpha -1} \log \frac{1}{\tr\, A} \tr\left(B^{\frac{1-\alpha}{2 \alpha}} A B^{\frac{1-\alpha}{2 \alpha}} \right)^{\alpha} \, , \nonumber 
\end{align}
and from Proposition~\ref{prop:limit_max} that we have
\begin{align}
\frac{1}{\tr\, A}  \tr \, A \log A^{\frac{1}{2}} B^{-1}  A^{\frac{1}{2}}&=  \lim_{\alpha \downarrow 1} \widehat D_{\alpha}(A \| B) \nonumber \nonumber \\
&= \lim_{\alpha \downarrow 1} \frac{1}{\alpha -1} \log \frac{1}{\tr\, A} \tr \, B^{\frac{1}{2}}\left(B^{-\frac{1}{2}} A B^{-\frac{1}{2}} \right)^{\alpha} B^{\frac{1}{2}} \, . \nonumber 
\end{align}
Let  us set $\tilde{B}=B^{-1}$. Then we find
\begin{align}
&\frac{1}{\tr\, A} \left( \tr\, A \log A + \tr \, A \log B \right)\\*
&\hspace{8em}= \lim_{\alpha \downarrow 1} \frac{1}{\alpha -1} \log \frac{1}{\tr\, A} \tr\left(\tilde{B}^{\frac{1-\alpha}{2 \alpha}} A \tilde{B}^{\frac{1-\alpha}{2 \alpha}} \right)^{\alpha} \\*
&\hspace{8em}= \lim_{\alpha \downarrow 1} \frac{1}{\alpha -1} \log \frac{1}{\tr\, A} \tr\left(\tilde{B}^{\frac{1}{2 \alpha}}  \tilde{B}^{-\frac{1}{2}}  A \tilde{B}^{-\frac{1}{2}} \tilde{B}^{\frac{1}{2 \alpha}} \right)^{\alpha} \\*
&\hspace{8em} \leq \lim_{\alpha \downarrow 1} \frac{1}{\alpha -1} \log  \frac{1}{\tr\, A} \tr \, \tilde{B}^{\frac{1}{2}}\left(\tilde{B}^{-\frac{1}{2}} A \tilde{B}^{-\frac{1}{2}} \right)^{\alpha} \tilde{B}^{\frac{1}{2}}  \label{eq:ALT_rev_GT}\\*
&\hspace{8em}=\frac{1}{\tr\, A} \tr A \log A^{\frac{1}{2}} \tilde{B}^{-1} A^{\frac{1}{2}}\\*
&\hspace{8em}=\frac{1}{\tr\, A}  \tr A \log A^{\frac{1}{2}} B A^{\frac{1}{2}} \, ,
\end{align}
where we applied the ALT inequality~\eqref{eq:alt_bigger_one} with $q=1$, $r=\alpha$, $A':=\tilde{B}^{-\frac{1}{2}} A \tilde{B}^{-\frac{1}{2}}$ and $B':=\tilde{B}^{\frac{1}{ \alpha}}$. This proves Theorem~\ref{thm:rev_GT_log} for $p=1$.
\end{proof}
 We conclude that we found a new and elegant proof  (based on the ALT inequality) of the logarithmic form of the reverse GT inequality for $p=1$.
\begin{myrmk}
The inequality~\eqref{eq:ALT_rev_GT} shows that  $\widetilde{D}_{\alpha}(\rho \| \sigma) \leqslant \widehat{D}_{\alpha} (\rho \| \sigma)$ for $\alpha > 1$ and for positive operators $\rho$ and $\sigma$. This relation is well known (see for example~\cite{tomamichel_quantum_2016}), but the proof technique is new.
\end{myrmk}

\subsection{Open question: multivariate reverse GT inequality}
Recently, an elegant generalization of the ALT inequality for an arbitrary number of matrices was found by using pinching techniques (or results from complex analysis based on the maximum principle for holomorphic functions)~\cite{SBT15}. As in the two matrix case, the multivariate ALT inequality leads to an multivariate GT inequality by using a multivariate Lie-Trotter product formula, which leads to an interesting lower bound on the conditional quantum mutual information~\cite{SBT15}. Using similar techniques as in~\cite{SBT15}, a multivariate reverse GT inequality would lead to an upper bound on the conditional quantum mutual information. Alternatively, a multivariate generalization of the logarithmic form of the reverse GT inequality would lead even more directly to such a bound. \\

In 2009, a multivariate reverse GT inequality was derived in~\cite{HIAI20093105} and generalized in~\cite{hiai_log-majorization_2016} in 2016. It is based on a natural generalization of the geometric mean, the well studied \emph{Karcher mean}~\cite{lawson_monotonic_2011,Yamazaki2013}. The Karcher mean of positive definite matrices $A_1,A_2,\dots,A_n$ with a weight $(\omega_1,\dots,\omega_n)$ is defined as

\begin{align} 
\text{G}_{\omega} \left( A_1, A_2, \dots,  A_n \right):=\underset{Z>0}{\text{arg min}} \sum_{j=1}^n \omega_j \text{d}^2(Z,A_j) \, ,
\end{align}

where the Riemannian trace metric d$(A,B)$ on the set of positive definite matrices is given by

\begin{align} 
\text{d}(A,B):= \norm{\log A^{-\frac{1}{2}} B A^{-\frac{1}{2}}}_2 \, .
\end{align}

Then, the following theorem follows as a special case of the results derived in~\cite{HIAI20093105,hiai_log-majorization_2016}.

\begin{mythm} [Multivariate reverse GT inequality~\cite{HIAI20093105,hiai_log-majorization_2016}] \label{thm:multivariate_reverse_GT}
Let   $p>0$ and \linebreak $H_1,H_2, \dots H_n$ be Hermitian matrices. Then
\begin{align} \label{eq:multivariate_reverse_GT}
\tr \, \text{G}_{\omega} \left( \ee^{ p H_1}, \ee^{p H_2}, \dots,  \ee^{p H_n} \right)^{\frac{1}{p}} \leqslant \tr\, \ee^{ \sum_{j=1}^n \omega_j H_j } \, .
\end{align}
\end{mythm}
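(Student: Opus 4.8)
Since the $H_j$ are Hermitian, every $\ee^{pH_j}$ is automatically positive definite, so there is no positive-semidefinite boundary case to handle; setting $A_j:=\ee^{H_j}>0$ the claim becomes
\begin{align} \label{eq:mvgt_target}
\tr \, \text{G}_{\omega}\big(A_1^{p},\dots,A_n^{p}\big)^{\frac1p} \leqslant \tr \, \exp\Big(\textstyle\sum_{j=1}^n \omega_j \log A_j\Big) \, .
\end{align}
The plan is to obtain~\eqref{eq:mvgt_target} from the stronger, $p$-independent \emph{log-majorization}
\begin{align} \label{eq:mvgt_logmaj}
\text{G}_{\omega}\big(A_1^{p},\dots,A_n^{p}\big)^{\frac1p} \prec_{\log} \exp\Big(\textstyle\sum_{j=1}^n \omega_j \log A_j\Big) \qquad (p>0) \, .
\end{align}
The passage from~\eqref{eq:mvgt_logmaj} to~\eqref{eq:mvgt_target} is routine: log-majorization of the eigenvalue vectors of two positive operators implies weak submajorization of those vectors (apply the increasing convex map $t\mapsto\ee^{t}$ to the partial sums of the decreasingly sorted logarithms, cf.\ Theorem~II.3.3 of~\cite{bhatia_matrix_1997}), and the full-sum case of weak submajorization is precisely $\tr(\text{LHS})\leqslant\tr(\text{RHS})$. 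So the whole content sits in~\eqref{eq:mvgt_logmaj}.

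To prove~\eqref{eq:mvgt_logmaj} I would carry the two-variable Ando--Hiai/Hiai--Petz scheme over to the Karcher mean. First I would establish a \emph{monotonicity in $p$}: the family $p\mapsto\text{G}_{\omega}(A_1^{p},\dots,A_n^{p})^{1/p}$ is non-increasing in $p$ for the log-majorization order, equivalently the multivariate Ando--Hiai inequality
\begin{align} \label{eq:mvgt_ando_hiai}
\text{G}_{\omega}\big(A_1^{r},\dots,A_n^{r}\big) \prec_{\log} \text{G}_{\omega}\big(A_1,\dots,A_n\big)^{r} \qquad (r\geqslant 1) \, .
\end{align}
This would use the structural properties of $\text{G}_{\omega}$ --- congruence invariance $\text{G}_{\omega}(MX_jM^{*})=M\,\text{G}_{\omega}(X_j)\,M^{*}$, operator monotonicity in each argument, and the determinant identity $\det\text{G}_{\omega}(X_j)=\prod_j(\det X_j)^{\omega_j}$ --- together with the characterization of $\text{G}_{\omega}(X_1,\dots,X_n)$ as the unique positive-definite solution $Z$ of the Karcher equation $\sum_j\omega_j\log(Z^{-1/2}X_jZ^{-1/2})=0$; alternatively one can use the realization of $\text{G}_{\omega}$ as a limit of explicit iterated weighted two-variable geometric means (Lim--Pálfia, Holbrook) and apply the classical two-variable Ando--Hiai inequality at each step, checking that the limit preserves log-majorization. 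Second I would establish the \emph{multivariate Lie--Trotter limit}
\begin{align} \label{eq:mvgt_lietrotter}
\lim_{p\downarrow0}\text{G}_{\omega}\big(A_1^{p},\dots,A_n^{p}\big)^{\frac1p}=\exp\Big(\textstyle\sum_{j=1}^n\omega_j\log A_j\Big) \, .
\end{align}
Here, writing the left-hand side as $\ee^{W_p}$ and inserting $A_j^{p}=I+p\log A_j+O(p^2)$ into the Karcher equation, a first-order Baker--Campbell--Hausdorff expansion turns that equation into $\sum_j\omega_j(\log A_j-W_p)+O(p)=0$, so $W_p\to\sum_j\omega_j\log A_j$ as $p\downarrow0$ (using $\sum_j\omega_j=1$). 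Combining the two ingredients: being monotone non-increasing in $p$ and converging as $p\downarrow0$ to the right-hand side of~\eqref{eq:mvgt_lietrotter} means that right-hand side is the log-majorization supremum of the family, which is exactly~\eqref{eq:mvgt_logmaj}.

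The hard part is the monotonicity step~\eqref{eq:mvgt_ando_hiai}. Unlike the two-variable weighted geometric mean $\#_{\alpha}$ used in Theorem~\ref{thm:reverse_GT}, the Karcher mean has no closed form, so one cannot simply push powers through an explicit expression; one must instead develop (or invoke) the contraction and approximation theory of the Karcher mean --- Lawson--Lim, Bhatia--Holbrook, Lim--Pálfia --- together with the accompanying determinant/log-majorization estimates. This is precisely what is carried out in~\cite{HIAI20093105,hiai_log-majorization_2016}, and once~\eqref{eq:mvgt_logmaj} is available, Theorem~\ref{thm:multivariate_reverse_GT} is just its specialization to $A_j=\ee^{H_j}$. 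Everything else --- the limit~\eqref{eq:mvgt_lietrotter} and the implication log-majorization $\Rightarrow$ trace inequality --- is elementary.
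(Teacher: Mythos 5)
The paper does not actually prove Theorem~\ref{thm:multivariate_reverse_GT}; it imports it as a special case of the results in~\cite{HIAI20093105,hiai_log-majorization_2016}, and your outline reproduces precisely the route taken in that literature: an Ando--Hiai-type monotonicity/log-majorization in $p$ for the Karcher mean, the Lie--Trotter limit $p\downarrow 0$, and the standard passage from log-majorization to weak submajorization to the trace inequality. Since the one genuinely hard ingredient -- the multivariate Ando--Hiai inequality for the Karcher mean -- is explicitly deferred by you to the very references the paper cites, your proposal is sound in structure and no less complete than the paper's own treatment of this statement.
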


Unfortunately, there is no explicit formula known for the Karcher mean for more than two matrices, which seems to restrict the usefulness of inequality~\eqref{eq:multivariate_reverse_GT} (for more than two matrices) for applications in quantum information theory. \\

It would be very interesting to find an explicit multivariate version of the reverse GT inequality or of its logarithmic form. Our simple proof of~\eqref{eq_reverse_GT_log} looks promising to achieve this goal, since it is based on the ALT inequality, for which a multivariate version is known~\cite{SBT15}. Moreover, it might be possible to use similar pinching techniques as used in~\cite{SBT15} to generalize~\eqref{eq:ALT_rev_GT} to more than two matrices.

\chapter{Relations between quantum R\'enyi divergences} \label{chap:rel_div}

\section{Introduction} \label{sec: intro}
A natural and important question is how the different families of quantum R\'enyi divergences introduced in Section~\ref{sec:important_divergences} are related to each other. As we will see in this chapter, this question is strongly related to mathematical trace inequalities.

The ALT inequality~\cite{lieb_inequalities_1976,araki_inequality_1990} implies that the Petz divergence is larger than or equal to the minimal divergence, i.e., $\widebar{D}_{\alpha}(\rho \| \sigma) \geqslant \widetilde{D}_{\alpha}(\rho \| \sigma)$. 
But what remains unanswered is how much bigger than the minimal divergence the Petz divergence can be. 
In Section~\ref{sec:Petz_Min}, we settle this question for $\alpha \leq1$ by showing that $\widebar{D}_{\alpha}(\rho \| \sigma) \leqslant \frac{1}{\alpha} \widetilde{D}_{\alpha}(\rho \| \sigma)$ if $\rho$ and $\sigma$ are normalized. (Note that we shall make use of the convention $\frac{1}{0}=\infty$ in this chapter.)
This result follows from the reverse ALT inequality stated in Theorem~\ref{thm:Inv_ALT}. The reverse bound between the minimal and the Petz quantum R\'enyi divergence  leads then to new relations between quantum conditional R\'enyi entropies, which are discussed in Section~\ref{sec:rel_entropies}.

Moreover, the new bound between the minimal and the Petz divergence leads to a unified picture of the relationship between pretty good quantities and their optimal versions in quantum information theory. We will discuss this implication in detail in Chapter~\ref{cha:pgm}.

\vspace{3mm}

\section{Relation between Petz and minimal quantum R\'enyi divergence} \label{sec:Petz_Min}
We have seen in Section~\ref{sec:min_max_divergence} that the minimal  quantum R\'enyi divergence provides a lower bound for all other quantum R\'enyi divergences satisfying the DPI. Hence, in particular, we have $\widetilde{D}_{\alpha}(\rho \| \sigma)\leqslant \widebar{D}_{\alpha}(\rho \| \sigma)$ for all $\alpha \in [0,\infty]$ and non-negative operators $\rho$ and $\sigma$.\footnote{Alternatively, this follows directly from the ALT inequality.} Theorem~\ref{thm:Inv_ALT} leads to reversed relations between these two divergences. In the case where $\alpha \in [0,1]$, we find a particularly  useful relation of a simple form.  
\begin{mycor}  \label{cor:Upper_bound}
Let $\rho \neq 0$ and $\sigma$ be two non-negative operators and $\alpha \in [0,1]$. Then
\begin{align}
\label{eq:Upper_bound}
\alpha \widebar{D}_{\alpha}(\rho\|\sigma)+(1-\alpha) (\log \tr \rho - \log \tr \sigma) \leqslant \widetilde{D}_{\alpha}(\rho\|\sigma) \leqslant \widebar {D}_{\alpha}(\rho\|\sigma) \, .
\end{align}
\end{mycor}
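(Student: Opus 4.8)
The plan is to derive both inequalities in~\eqref{eq:Upper_bound} from the trace inequalities already established, working directly with the $Q$-quantities $\widebar{Q}_\alpha$ and $\widetilde{Q}_\alpha$ and then taking logarithms. The right-hand inequality $\widetilde{D}_\alpha(\rho\|\sigma) \leqslant \widebar{D}_\alpha(\rho\|\sigma)$ is already noted in the text as a consequence of the ALT inequality (Theorem~\ref{thm:ALT}): applying~\eqref{eq:alt_smaller_one} with $A = \sigma^{(1-\alpha)/\alpha}$ (or a suitable rearrangement), $B = \rho$, $r = \alpha \in [0,1]$ and $q = 1$ gives $\tr(\sigma^{\frac{1-\alpha}{2\alpha}}\rho\,\sigma^{\frac{1-\alpha}{2\alpha}})^\alpha \geqslant$ a term equal to $\tr \rho^\alpha \sigma^{1-\alpha}$ after using cyclicity and $\tr(B^{1/2}AB^{1/2})^{\alpha} = \tr (AB)^\alpha$-type identities. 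Since $\alpha - 1 < 0$, the direction of the logarithmic inequality flips appropriately and yields $\widetilde{D}_\alpha \leqslant \widebar{D}_\alpha$. So the real content is the left-hand inequality.

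For the left inequality I would start from the reverse ALT inequality, Theorem~\ref{thm:Inv_ALT}, equation~\eqref{eq:Inv_ALT}, with the substitutions $A \rightsquigarrow \sigma^{(1-\alpha)/\alpha}$, $B \rightsquigarrow \rho$, $r \rightsquigarrow \alpha$, $q \rightsquigarrow 1$, so that $(B^{1/2}AB^{1/2})^{rq}$ becomes (up to the identity $\tr(B^{1/2}AB^{1/2})^{\alpha} = \tr(B^{\alpha/2}A^\alpha\cdots)$, or rather directly) the term whose trace is $\widetilde{Q}_\alpha(\rho\|\sigma)$, while $(B^{r/2}A^rB^{r/2})^q$ becomes $\rho^{\alpha/2}\sigma^{1-\alpha}\rho^{\alpha/2}$-type with trace $\widebar{Q}_\alpha(\rho\|\sigma)$. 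The cleanest choice is to push the free parameters $a,b$ to their extreme so that the Schatten-norm factors collapse: take $b = \infty$, which kills $\|B^{(1-r)/2}\|_b = \|\rho^{(1-\alpha)/2}\|_\infty$... no — I want those norms to produce exactly the traces of $\rho$ and $\sigma$, so instead I would choose $a$ and $b$ so that $\|A^{(1-r)/2}\|_a^{2rq}$ and $\|B^{(1-r)/2}\|_b^{2rq}$ become powers of $\tr\rho$ and $\tr\sigma$. Concretely, with $r = \alpha$, $A^{(1-r)/2} = \sigma^{(1-\alpha)^2/(2\alpha)}$ and $B^{(1-r)/2} = \rho^{(1-\alpha)/2}$; choosing the exponents $a = \tfrac{2\alpha}{1-\alpha}\cdot(\text{something})$ so that $\|A^{(1-r)/2}\|_a = (\tr\sigma)^{\#}$ and similarly for $B$, the constraint $\frac{1}{2rq} = \frac{1}{2q} + \frac1a + \frac1b$ pins down the admissible pair. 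Then~\eqref{eq:Inv_ALT} reads $\widetilde{Q}_\alpha(\rho\|\sigma) \leqslant \widebar{Q}_\alpha(\rho\|\sigma)^\alpha \,(\tr\rho)^{1-\alpha}\,(\tr\sigma)^{(1-\alpha)\cdot(\text{exponent})}$ — I would match the exponent on $\tr\sigma$ to be exactly $1-\alpha$ with opposite... actually $(\tr\sigma)^{-(1-\alpha)}$, consistent with the $-\log\tr\sigma$ term. Taking $\log$, dividing by $\alpha - 1 < 0$ (which reverses the inequality), and rearranging the $\tr\rho$, $\tr\sigma$ terms yields exactly $\alpha\widebar{D}_\alpha(\rho\|\sigma) + (1-\alpha)(\log\tr\rho - \log\tr\sigma) \leqslant \widetilde{D}_\alpha(\rho\|\sigma)$.

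The main obstacle I anticipate is the bookkeeping of the Schatten-norm exponents: getting the free parameters $a,b$ in Theorem~\ref{thm:Inv_ALT} chosen so that the two norm factors reduce precisely to $(\tr\rho)^{1-\alpha}$ and $(\tr\sigma)^{-(1-\alpha)}$ (and checking the constraint $\frac{1}{2rq} = \frac{1}{2q}+\frac1a+\frac1b$ is satisfied and that $a,b \in (0,\infty]$), together with correctly tracking which operator plays the role of $A$ versus $B$ and handling the boundary cases $\alpha = 0$ (where the $\frac10 = \infty$ convention enters) and $\alpha = 1$ (trivial) by continuity. A secondary subtlety is the support condition: for $\alpha < 1$ both divergences are finite regardless of $\sigma \gg \rho$, so one may need a perturbation argument $\sigma \to \sigma + \varepsilon\,\id$ followed by $\varepsilon \downarrow 0$, exactly as in the proof of Lemma~\ref{lem:min_of_min_div2}, to justify manipulations with negative matrix powers; the homogeneity of all terms under independent rescaling of $\rho$ and $\sigma$ then lets one reduce to the normalized case if desired, though here the $\log\tr\rho$, $\log\tr\sigma$ corrections are kept explicitly so no normalization is needed.
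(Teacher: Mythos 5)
Your overall route is exactly the paper's: the right-hand inequality from the ALT inequality, and the left-hand inequality from Theorem~\ref{thm:Inv_ALT} with $q=1$, $r=\alpha$ and $\{A,B\}=\{\rho,\sigma^{\frac{1-\alpha}{\alpha}}\}$ (your swap of the roles of $A$ and $B$ relative to the paper is harmless, since the identity $\tr\,(B^{\frac12}AB^{\frac12})^{\alpha}=\tr\,(A^{\frac12}BA^{\frac12})^{\alpha}$ and the constraint on $a,b$ are symmetric). However, the step you yourself flag as the "main obstacle" is where the proposal actually breaks. The divergence inequality does \emph{not} reformulate as $\widetilde{Q}_{\alpha}(\rho\|\sigma)\leqslant \widebar{Q}_{\alpha}(\rho\|\sigma)^{\alpha}(\tr\rho)^{1-\alpha}(\tr\sigma)^{-(1-\alpha)}$: because you divide by $\alpha-1<0$, the $(1-\alpha)(\log\tr\rho-\log\tr\sigma)$ term flips sign and picks up a factor $(1-\alpha)$, so the correct target is $\widetilde{Q}_{\alpha}(\rho\|\sigma)\leqslant \widebar{Q}_{\alpha}(\rho\|\sigma)^{\alpha}\,(\tr\rho)^{\alpha(1-\alpha)}\,(\tr\sigma)^{(1-\alpha)^2}$, with \emph{both} correction exponents positive. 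The version you aim for is unreachable from~\eqref{eq:Inv_ALT} — the factors $\norm{A^{\frac{1-r}{2}}}_a^{2rq}$ and $\norm{B^{\frac{1-r}{2}}}_b^{2rq}$ are nonnegative powers of Schatten norms and can never produce a negative power of $\tr\sigma$ — and it is in fact false for unnormalized $\sigma$: replacing $\sigma\to\lambda\sigma$ scales the left side as $\lambda^{1-\alpha}$ but your right side as $\lambda^{\alpha(1-\alpha)-(1-\alpha)}$, a contradiction as $\lambda\to\infty$.

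With the correct target the parameters are pinned down exactly as in the paper: taking $A=\rho$, $B=\sigma^{\frac{1-\alpha}{\alpha}}$ one needs $a=\frac{2}{1-\alpha}$ and $b=\frac{2\alpha}{(1-\alpha)^2}$ (in your swapped assignment, $a=\frac{2\alpha}{(1-\alpha)^2}$ for the $\sigma$-factor and $b=\frac{2}{1-\alpha}$ for the $\rho$-factor), which gives $\norm{\rho^{\frac{1-\alpha}{2}}}_{2/(1-\alpha)}^{2\alpha}=(\tr\rho)^{\alpha(1-\alpha)}$ and $\norm{\sigma^{\frac{(1-\alpha)^2}{2\alpha}}}_{2\alpha/(1-\alpha)^2}^{2\alpha}=(\tr\sigma)^{(1-\alpha)^2}$, and the constraint $\frac{1}{2\alpha}=\frac12+\frac{1-\alpha}{2}+\frac{(1-\alpha)^2}{2\alpha}$ is satisfied; taking logarithms and dividing by $\alpha-1<0$ then gives the first inequality for $\alpha\in(0,1)$, with $\alpha\in\{0,1\}$ by continuity. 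Two minor remarks: no perturbation $\sigma\to\sigma+\varepsilon\,\id$ is needed, since for $\alpha\in(0,1)$ only nonnegative powers of $\sigma$ appear in $\widetilde{Q}_{\alpha}$, $\widebar{Q}_{\alpha}$ and in the substituted $B$; and your treatment of the right-hand inequality via~\eqref{eq:alt_smaller_one} is fine and matches the paper's one-line justification.
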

\begin{proof}
The second inequality is a direct consequence of the ALT inequality. 
It thus remains to show the first inequality. We note that it suffices to consider the case $\alpha \in (0,1)$, as $\alpha \in \{0,1\}$ then follows by continuity. By definition, we can reformulate the first inequality of~\eqref{eq:Upper_bound} as 
\begin{align} \label{eq:reformulated_thm}
\widetilde{Q}_{\alpha}(\rho\|\sigma) \leqslant \widebar{Q}_{\alpha}(\rho\|\sigma)^{\alpha}  (\tr \rho )^{\alpha(1-\alpha)}  (\tr \sigma )^{(1-\alpha)^2} \, . 
\end{align}
This follows from Theorem~\ref{thm:Inv_ALT} with $q=1$, $r=\alpha$, $A=\rho$,  $B=\sigma^{\frac{1-\alpha}{\alpha}}$, $a=\frac{2}{1-\alpha}$, and $b=\frac{2\alpha}{(1-\alpha)^2}$.
 \end{proof}

There is a well known equality condition for the ALT inequality, which leads to an equality condition for the second inequality of~\eqref{eq:Upper_bound}.
 \begin{mylem} \label{lem:eq_cond_ALT}
For $\alpha \in (0,1)$, we have  $\widetilde{D}_{\alpha}(\rho\|\sigma) = \widebar {D}_{\alpha}(\rho\|\sigma)$ if and only if $\rho$ and $\sigma$ commute. 
\end{mylem}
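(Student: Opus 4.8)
The plan is to derive the equality condition by isolating the single trace inequality that produces $\widetilde D_\alpha(\rho\|\sigma)\leqslant\widebar D_\alpha(\rho\|\sigma)$ and asking when it is tight. As recorded in the proof of Corollary~\ref{cor:Upper_bound}, this inequality is precisely the Araki-Lieb-Thirring inequality~\eqref{eq:alt_smaller_one} of Theorem~\ref{thm:ALT} with the substitution $q=1$, $r=\alpha$, $A=\rho$, $B=\sigma^{\frac{1-\alpha}{\alpha}}$: a short computation gives $\tr\big(B^{r/2}A^rB^{r/2}\big)^q=\tr\sigma^{\frac{1-\alpha}{2}}\rho^\alpha\sigma^{\frac{1-\alpha}{2}}=\widebar Q_\alpha(\rho\|\sigma)$ and $\tr\big(B^{1/2}AB^{1/2}\big)^{rq}=\tr\big(\sigma^{\frac{1-\alpha}{2\alpha}}\rho\,\sigma^{\frac{1-\alpha}{2\alpha}}\big)^\alpha=\widetilde Q_\alpha(\rho\|\sigma)$, so~\eqref{eq:alt_smaller_one} reads $\widebar Q_\alpha(\rho\|\sigma)\leqslant\widetilde Q_\alpha(\rho\|\sigma)$.

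First I would dispatch the easy implication. If $\rho$ and $\sigma$ commute, then $\rho$ commutes with the positive power $\sigma^{\frac{1-\alpha}{\alpha}}$, so $\widetilde Q_\alpha(\rho\|\sigma)=\tr\big(\rho\,\sigma^{\frac{1-\alpha}{\alpha}}\big)^\alpha=\tr\rho^\alpha\sigma^{1-\alpha}=\widebar Q_\alpha(\rho\|\sigma)$, whence $\widetilde D_\alpha(\rho\|\sigma)=\widebar D_\alpha(\rho\|\sigma)$ by the definitions~\eqref{eq:petz_entropy} and~\eqref{eq:min_entropy}. (Equivalently, in the commuting case both quantities collapse to the classical R\'enyi divergence $D_\alpha(\rho\|\sigma)$.)

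For the other implication I would argue that equality of the divergences forces equality in~\eqref{eq:alt_smaller_one}. Since $\rho\neq0$ and $\alpha\in(0,1)$, the map $x\mapsto\tfrac{1}{\alpha-1}\log\tfrac{x}{\tr\rho}$ is strictly decreasing on $(0,\infty)$, so $\widetilde D_\alpha(\rho\|\sigma)=\widebar D_\alpha(\rho\|\sigma)$ is equivalent to $\widetilde Q_\alpha(\rho\|\sigma)=\widebar Q_\alpha(\rho\|\sigma)$; the only degenerate case, in which $\rho$ and $\sigma$ have orthogonal supports, makes both $Q$'s vanish, both divergences equal $+\infty$, and $\rho,\sigma$ commute, so the claim holds there as well. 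I then invoke the well-known equality condition for the ALT inequality: for $r\in(0,1)$ and $q>0$, equality holds in~\eqref{eq:alt_smaller_one} if and only if $A$ and $B$ commute. This gives $[\rho,\sigma^{\frac{1-\alpha}{\alpha}}]=0$, and since $\tfrac{1-\alpha}{\alpha}>0$ the map $t\mapsto t^{\frac{1-\alpha}{\alpha}}$ is injective on $[0,\infty)$, so $\sigma$ and $\sigma^{\frac{1-\alpha}{\alpha}}$ have the same spectral projections; hence $[\rho,\sigma^{\frac{1-\alpha}{\alpha}}]=0$ is equivalent to $[\rho,\sigma]=0$, which finishes the proof.

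The main obstacle is the equality case of the ALT inequality~\eqref{eq:alt_smaller_one} itself: I would either cite it as standard, or, if a self-contained argument is wanted, recover it by tracking equality through a proof of ALT and using strict concavity of $t\mapsto t^{r}$ for $r\in(0,1)$ to force $[A,B]=0$. Everything else --- the reduction from equality of $\widetilde D_\alpha,\widebar D_\alpha$ to equality of $\widetilde Q_\alpha,\widebar Q_\alpha$ (including the support-disjoint degeneracy) and the interchange of ``commutes with $\sigma$'' and ``commutes with a positive power of $\sigma$'' --- is routine functional calculus.
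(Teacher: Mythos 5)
Your overall route is the same as the paper's: both proofs reduce the lemma to the equality case of the very same instance of the ALT inequality (your choice $q=1$, $r=\alpha$, $A=\rho$, $B=\sigma^{\frac{1-\alpha}{\alpha}}$ in \eqref{eq:alt_smaller_one} is just a reparametrization of the paper's choice $r=\nicefrac{1}{\alpha}$, $q=\alpha$, $A=\rho^{\alpha}$, $B=\sigma^{1-\alpha}$ in \eqref{eq:alt_bigger_one}), and the easy direction as well as the passage from $[\rho,\sigma^{\frac{1-\alpha}{\alpha}}]=0$ to $[\rho,\sigma]=0$ are handled identically. The one place where you are essentially silent is exactly the crux. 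You invoke as ``standard'' the equality condition for \eqref{eq:alt_smaller_one} for all $r\in(0,1)$ and all $q>0$; in your parametrization the relevant Schatten exponent is $rq=\alpha<1$, i.e.\ the quasi-norm regime, whereas the equality results in the literature --- and the one the paper actually uses, Theorem~2.1 of \cite{hiai_equality_1994} --- are formulated for strictly increasing unitarily invariant norms, which for Schatten norms means exponent at least $1$. This is precisely why the paper works with $r=\nicefrac{1}{\alpha}>1$ and $q=\alpha$, so that $q'=rq=1$: equality then says that $r\mapsto \norm{(B^{\frac{r}{2}}A^rB^{\frac{r}{2}})^{\frac{1}{r}}}_{q'}$ is not strictly increasing, and Hiai's theorem gives $[A,B]=0$. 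Your fallback sketch (``tracking equality through a proof of ALT and using strict concavity of $t\mapsto t^r$'') is not a substitute; the equality analysis of ALT is genuinely delicate and is the entire content of the paper's proof. The gap is repairable without new ideas: since your instance and the paper's are the same inequality between the same two numbers, substitute $A\to A^{\alpha}$, $B\to B^{\alpha}$ (equivalently $r\to\nicefrac{1}{\alpha}$, $q\to\alpha$) to land in the $rq\geqslant 1$ regime and cite \cite{hiai_equality_1994}; alternatively one could extend the equality condition to $rq<1$ via Araki's log-majorization together with strict convexity of $t\mapsto \ee^{rqt}$, but that argument must be written out rather than asserted as well known.
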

\begin{proof}
To see this, note that for  $r \in (1,\infty)$ and $rq \geqslant 1$, we have equality in the ALT inequality~\eqref{eq:alt_bigger_one} if and only if $A$ and $B$ commute. Equality for commuting states is obvious; for the other direction, note that we can rewrite~\eqref{eq:alt_bigger_one} using the substitution $rq = q'$ as
 \begin{align}
 \label{eq:alt_norm_version}
\norm{ (B^{\frac{r}{2}} A^r B^{\frac{r}{2}})^{\frac{1}{r}}}_{q'}\ \geqslant \norm{ (B^{\frac{1}{2}} A B^{\frac{1}{2}})}_{q'}\, .
\end{align}
 Equality in the inequality~\eqref{eq:alt_norm_version} for some  $r \in (1,\infty)$ (and noting that we have also equality for $r=1$) implies that the function $r \mapsto \| (B^{\frac{r}{2}} A^r B^{\frac{r}{2}})^{\frac{1}{r}}\|_{q'} $ is not strictly increasing.  Therefore, by \cite[Theorem 2.1]{hiai_equality_1994}, it follows\footnote{Here we use our assumption that  $q' \geqslant 1$, since in this case $\norm{\cdot}_{q'}$ is a strictly increasing norm.} that $[A,B]=0$. Let  $\rho,\sigma$ be non negative. Setting $r=\nicefrac{1}{\alpha},q=\alpha$ and $A=\rho^{\alpha}$, $B=\sigma^{1-\alpha}$ in~\eqref{eq:alt_bigger_one}, we conclude that for $\alpha \in (0,1)$ we have that $\widetilde{D}_{\alpha}(\rho\|\sigma) = \widebar {D}_{\alpha}(\rho\|\sigma)$ if and only if $[\rho,\sigma]=0$.
 \end{proof}

For density operators $\rho$ and $\sigma$ the first inequality of Corollary~\ref{cor:Upper_bound} simplifies to 
\begin{align}
\alpha \widebar{D}_{\alpha}(\rho \| \sigma) \leqslant \widetilde{D}_{\alpha}(\rho \| \sigma) \quad \text{for} \quad \alpha \in [0,1]\, .
\end{align}
This bound is simpler than an alternative bound given in~\cite{mosonyi_coding_2015}, which is based on the earlier reversed ALT inequality in \eqref{eq:adenauer} and states that 
$ \alpha \widebar{D}_{\alpha}(\rho \| \sigma) -\log \tr \rho^{\alpha} +(\alpha-1) \log \norm{\sigma}_{\infty} \leqslant \widetilde{D}_{\alpha}(\rho \| \sigma)$ for density operators $\rho$ and $\sigma$.

\section{Relations between different quantum entropies} \label{sec:rel_entropies}

The new bound between the minimal and the Petz divergence given in Corollary~\ref{cor:Upper_bound} leads to interesting new relations between max-like entropies, which are described in Section~\ref{sec:max_like}. In the following, we use the notation introduced in Section~\ref {sec:cond_entropies}.\footnote{In particular $A$ and $B$ refer to quantum systems (Hilbert spaces) in the following, and do not denote matrices (as this was the case in Chapter~\ref{chap:trace_ineq}).} In particular, recall that we call the set of all conditional entropies with $\alpha \in (0,1)$ ``max-like'' and those with $\alpha \in (1,\infty)$ ``min-like''. By duality relations (cf. Lemma~\ref{lem_duality}), the relations between the max-like entropies lead to new bounds for min-like entropies (cf. Section~\ref{sec:min_like}). In addition, we introduce an equality condition for quantum R\'enyi entropies in Section~\ref{sec:equality_cond_entropy}, which will lead to an  equality condition for pretty good measures of a simpel form (cf. Section~\ref{sec:opt_cond_pgm}).

\subsection{Relations between max-like entropies} \label{sec:max_like}
As a direct consequence of Corollary~\ref{cor:Upper_bound}, we find the following relation between conditional max-like entropies.

\begin{mycor} \label{cor:entropies}
For $\alpha \in [0,1]$ and $\rho_{AB}\in \cD(A\otimes B)\,$, we have that
\begin{align}
&\widebar{H}^{\downarrow}_{\alpha}(A|B)_{\rho} \leqslant \widetilde{H}^{\downarrow}_{\alpha}(A|B)_{\rho}\leqslant \alpha \widebar{H}^{\downarrow}_{\alpha}(A|B)_{\rho}+(1-\alpha)\log |A|  \label{eq:Bounds_entropies1} \qquad \text{and}  \\
&\widebar{H}^{\uparrow}_{\alpha} (A|B)_{\rho}\leqslant \widetilde{H}^{\uparrow}_{\alpha}(A|B)_{\rho}\leqslant \alpha \widebar{H}^{\uparrow}_{\alpha}(A|B)_{\rho}+(1-\alpha)\log |A| \label{eq:Bounds_entropies2} \, .
\end{align}
\end{mycor}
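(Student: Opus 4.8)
The plan is to deduce Corollary~\ref{cor:entropies} directly from Corollary~\ref{cor:Upper_bound} by choosing the second argument of the divergence appropriately and exploiting the definitions of the conditional R\'enyi entropies in~\eqref{eq:cond_entropies}. Throughout, $\rho_{AB}$ is a density operator, so $\tr \rho_{AB}=1$, and each $\sigma$ we feed into the divergence will be of the form $\id_A\otimes\tau_B$ with $\tau_B$ a density operator, hence also of unit trace. This means the correction term $(1-\alpha)(\log\tr\rho-\log\tr\sigma)$ in~\eqref{eq:Upper_bound} vanishes, and Corollary~\ref{cor:Upper_bound} reduces to the clean two-sided bound $\alpha\widebar D_\alpha(\rho\|\sigma)\leqslant\widetilde D_\alpha(\rho\|\sigma)\leqslant\widebar D_\alpha(\rho\|\sigma)$.

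For the ``$\downarrow$'' inequality~\eqref{eq:Bounds_entropies1}, I would set $\rho=\rho_{AB}$ and $\sigma=\id_A\otimes\rho_B$ in Corollary~\ref{cor:Upper_bound}. By definition $\widebar H^{\downarrow}_\alpha(A|B)_\rho=-\widebar D_\alpha(\rho_{AB}\|\id_A\otimes\rho_B)$ and likewise for $\widetilde H^{\downarrow}_\alpha$, so multiplying the chain of inequalities by $-1$ (which reverses them) gives
\begin{align}
\widebar H^{\downarrow}_\alpha(A|B)_\rho \leqslant \widetilde H^{\downarrow}_\alpha(A|B)_\rho \leqslant \alpha\widebar H^{\downarrow}_\alpha(A|B)_\rho \, . \nonumber
\end{align}
This is almost the claim, except the upper bound is missing the additive term $(1-\alpha)\log|A|$; but since $\widebar H^{\downarrow}_\alpha(A|B)_\rho$ can be negative, I cannot simply drop that term — rather, the term $(1-\alpha)\log|A|$ is nonnegative (as $\alpha\leqslant 1$), so $\alpha\widebar H^{\downarrow}_\alpha(A|B)_\rho\leqslant\alpha\widebar H^{\downarrow}_\alpha(A|B)_\rho+(1-\alpha)\log|A|$ trivially, and the desired inequality follows a fortiori. (One could also note the sharper statement that the $\log|A|$ term is genuinely needed only for the ``$\uparrow$'' version or for non-normalized states, but for a clean uniform statement the weaker bound suffices.)

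For the ``$\uparrow$'' inequality~\eqref{eq:Bounds_entropies2}, I would argue via the supremum over $\sigma_B\in\cD(B)$. For the lower bound, for every fixed $\sigma_B$ the ALT inequality gives $-\widebar D_\alpha(\rho_{AB}\|\id_A\otimes\sigma_B)\leqslant-\widetilde D_\alpha(\rho_{AB}\|\id_A\otimes\sigma_B)$; taking the supremum on both sides over $\sigma_B$ preserves the inequality, yielding $\widebar H^{\uparrow}_\alpha(A|B)_\rho\leqslant\widetilde H^{\uparrow}_\alpha(A|B)_\rho$. For the upper bound, fix $\sigma_B$ and apply the first inequality of Corollary~\ref{cor:Upper_bound} (in the normalized form) to get $-\widetilde D_\alpha(\rho_{AB}\|\id_A\otimes\sigma_B)\leqslant -\alpha\widebar D_\alpha(\rho_{AB}\|\id_A\otimes\sigma_B)$, i.e. $-\widetilde D_\alpha(\rho_{AB}\|\id_A\otimes\sigma_B)\leqslant \alpha\big({-}\widebar D_\alpha(\rho_{AB}\|\id_A\otimes\sigma_B)\big)\leqslant \alpha\,\widebar H^{\uparrow}_\alpha(A|B)_\rho$, where the last step uses that $\alpha\geqslant 0$ together with the definition of the supremum defining $\widebar H^{\uparrow}_\alpha$. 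Taking the supremum over $\sigma_B$ on the left then gives $\widetilde H^{\uparrow}_\alpha(A|B)_\rho\leqslant\alpha\widebar H^{\uparrow}_\alpha(A|B)_\rho\leqslant\alpha\widebar H^{\uparrow}_\alpha(A|B)_\rho+(1-\alpha)\log|A|$, as desired.

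I do not expect any serious obstacle here; the only points requiring care are the sign bookkeeping when negating inequalities to pass from divergences to entropies, the fact that the $(1-\alpha)(\log\tr\rho-\log\tr\sigma)$ correction vanishes because both arguments are normalized, and — if one wanted to match the precise constant $\log|A|$ rather than just prove the stated (weaker-looking but still correct) bound — observing that $\widebar H^{\downarrow}_\alpha(A|B)_\rho\leqslant \log|A|$ and similarly for the $\uparrow$ quantity, so that $(1-\alpha)\widebar H_\alpha\leqslant(1-\alpha)\log|A|$ and hence $\widebar H_\alpha=\alpha\widebar H_\alpha+(1-\alpha)\widebar H_\alpha\leqslant\alpha\widebar H_\alpha+(1-\alpha)\log|A|$, which is the shape actually claimed. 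The main ``obstacle'', such as it is, is simply deciding how much of this sharpening to include; the logical content is immediate from Corollary~\ref{cor:Upper_bound}.
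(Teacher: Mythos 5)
Your overall route is the intended one (plug $\sigma=\id_A\otimes\sigma_B$ into Corollary~\ref{cor:Upper_bound}, negate, and take suprema for the $\uparrow$ quantities), but there is a concrete error at the very first step: the operator $\id_A\otimes\tau_B$ is \emph{not} of unit trace. Its trace is $\tr(\id_A)\,\tr(\tau_B)=|A|$, so the correction term in \eqref{eq:Upper_bound} does not vanish; it equals $(1-\alpha)\bigl(\log\tr\rho_{AB}-\log\tr(\id_A\otimes\tau_B)\bigr)=-(1-\alpha)\log|A|$. Carrying this term through and negating yields exactly $\widetilde{H}^{\downarrow}_{\alpha}(A|B)_{\rho}\leqslant\alpha\,\widebar{H}^{\downarrow}_{\alpha}(A|B)_{\rho}+(1-\alpha)\log|A|$; in other words, the $(1-\alpha)\log|A|$ in the corollary is precisely the footprint of this trace factor, not a gratuitous weakening that can be added ``a fortiori'' afterwards.

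Because of this, your intermediate statements $\widetilde{H}^{\downarrow}_{\alpha}\leqslant\alpha\,\widebar{H}^{\downarrow}_{\alpha}$ and $\widetilde{H}^{\uparrow}_{\alpha}\leqslant\alpha\,\widebar{H}^{\uparrow}_{\alpha}$ are false in general, so the a fortiori step starts from an untrue premise. A counterexample is $\rho_{AB}=\frac{\id_A}{|A|}\otimes\rho_B$: all four conditional entropies then equal $\log|A|$, and $\log|A|\leqslant\alpha\log|A|$ fails for every $\alpha<1$ and $|A|\geqslant 2$. Dropping the $\log|A|$ term is possible only for states with the classically coherent structure of Proposition~\ref{ccq_states} (which is exactly why that proposition is stated separately), so your parenthetical remark that the term is ``genuinely needed only for the $\uparrow$ version or for non-normalized states'' is likewise incorrect. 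The parts of your argument that do not touch the trace term are fine: the lower bounds $\widebar{H}^{\downarrow}_{\alpha}\leqslant\widetilde{H}^{\downarrow}_{\alpha}$ and $\widebar{H}^{\uparrow}_{\alpha}\leqslant\widetilde{H}^{\uparrow}_{\alpha}$, and the supremum bookkeeping in the $\uparrow$ case. The fix is one line: compute $\tr(\id_A\otimes\sigma_B)=|A|$, keep the resulting $-(1-\alpha)\log|A|$ on the divergence side, and the corollary as stated falls out directly.
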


We can further improve the upper bounds in~\eqref{eq:Bounds_entropies1} and~\eqref{eq:Bounds_entropies2} by removing the second term if $\rho_{AB}$ has a special structure consisting of a quantum and a  classical part that is handled coherently.

\begin{myprop} \label{ccq_states}
Let $\ket{\rho}_{XX'BB'}=\sum_x \sqrt{p_x}\ket{x}_X\ket{x}_{X'}\ket{\xi_x}_{BB'}$ be a pure state on $X\otimes X'\otimes B\otimes B'$, where $X'\simeq X$, $p_x \in [0,1]$ with $\sum_x p_x=1\,$, and the pure states $\ket{\xi_x}_{BB'}$ are arbitrary. 
Then
\begin{align}
&\widetilde{H}^{\downarrow}_{\alpha}(X|X'B)_{\rho}\leqslant \alpha \widebar{H}^{\downarrow}_{\alpha}(X|X'B)_{\rho} \quad \text{ for  $\alpha \in [0,1]$}  \label{eq:Bounds_ccq1}  \qquad \text{and}\\
& \widetilde{H}^{\uparrow}_{\alpha}(X|X'B)_{\rho}\leqslant \alpha \widebar{H}^{\uparrow}_{\alpha}(X|X'B)_{\rho}  \quad \text{ for  $\alpha \in [\tfrac{1}{2},1]$}  \label{eq:Bounds_ccq2} \, .
\end{align}
\end{myprop}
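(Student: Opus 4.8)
The plan is to reduce the claim to Corollary~\ref{cor:Upper_bound} (equivalently Corollary~\ref{cor:entropies}) by exploiting the special block structure of the marginal state, which should make the "extra" term $(1-\alpha)\log|A|$ disappear. The key observation is that for the state $\ket{\rho}_{XX'BB'}$ given, the reduced state $\rho_{XX'B}$ is \emph{block diagonal} in the $X$ register: since $\ket{\rho}=\sum_x\sqrt{p_x}\ket{x}_X\ket{x}_{X'}\ket{\xi_x}_{BB'}$, tracing out $B'$ gives $\rho_{XX'B}=\sum_{x,x'}\sqrt{p_xp_{x'}}\ketbra{x}{x'}_X\otimes\ketbra{x}{x'}_{X'}\otimes (\text{something})_{B}$, but the $X'$ registers enforce $x=x'$ after one takes the partial trace appropriately — more precisely, because the $X'$ labels are copied, $\rho_{XX'B}=\sum_x p_x \ketbra{x}{x}_X\otimes\ketbra{x}{x}_{X'}\otimes\tau_x^B$ where $\tau_x^B=\tr_{B'}\ketbra{\xi_x}{\xi_x}_{BB'}$. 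So $\rho_{XX'B}$ is supported on the "diagonal" subspace and is classical on $X$. First I would write this out carefully and record that both $\rho_{XX'B}$ and the relevant $\sigma$ (either $\id_X\otimes\rho_{X'B}$ for the $\downarrow$ case or $\id_X\otimes\sigma_{X'B}$ for the $\uparrow$ case) are simultaneously block-diagonal with respect to the orthogonal projectors $\{\ketbra{x}{x}_X\}_x$, or can be taken so for the optimal $\sigma_{X'B}$ in the $\uparrow$ case.

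Next I would use the direct-sum / "general mean" structure. For operators that are block diagonal with blocks indexed by $x$, both $\widetilde{Q}_\alpha$ and $\widebar{Q}_\alpha$ decompose as sums over blocks: $\widetilde{Q}_\alpha(\rho_{XX'B}\|\sigma)=\sum_x \widetilde{Q}_\alpha(\rho_x\|\sigma_x)$ and likewise for $\widebar{Q}_\alpha$, where $\rho_x$ is the $x$-block (with $\tr\rho_x=p_x$). Now apply the first inequality of Corollary~\ref{cor:Upper_bound} block by block. The point is that in that corollary the correction terms are $(\tr\rho_x)^{\alpha(1-\alpha)}(\tr\sigma_x)^{(1-\alpha)^2}$, and here $\tr\sigma_x$ (for $\sigma=\id_X\otimes\rho_{X'B}$) equals $\tr$ of the $x$-block of $\rho_{X'B}$, which is again $p_x$ by the copy structure — so the correction becomes $p_x^{\alpha(1-\alpha)+(1-\alpha)^2}=p_x^{1-\alpha}$. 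Summing, $\sum_x p_x^{1-\alpha}$ together with the $\widebar{Q}$ terms must be repackaged: this is exactly where one uses that $\sum_x \widebar{Q}_\alpha(\rho_x\|\sigma_x)=\widebar{Q}_\alpha$ and a Hölder-type inequality $\sum_x a_x^{\alpha} b_x^{1-\alpha}\le (\sum_x a_x)^\alpha(\sum_x b_x)^{1-\alpha}$ to collapse the sum into $\widebar{Q}_\alpha(\rho_{XX'B}\|\sigma)^\alpha$ times $(\sum_x p_x)^{1-\alpha}=1$. Taking logs and dividing by $\alpha-1<0$ flips the inequality and yields $\widetilde{D}_\alpha\le\alpha\widebar{D}_\alpha$ with \emph{no} additive term, which is $\widetilde{H}^\downarrow_\alpha\ge\alpha\widebar{H}^\downarrow_\alpha$ after negation — matching~\eqref{eq:Bounds_ccq1}.

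For the $\uparrow$ versions~\eqref{eq:Bounds_ccq2} I would run the same argument but first restrict attention to the optimal $\sigma_{X'B}$ in the supremum defining $\widetilde{H}^\uparrow_\alpha(X|X'B)$; by a standard symmetrization/pinching argument (the objective is invariant under the dephasing $\mathcal{P}$ in the $X$ basis, which one can apply on $X'$ as well using the copy structure, and $\mathcal{P}$ can only help) one may assume this optimal $\sigma_{X'B}$ is itself block diagonal in the copied $X'$ basis, i.e. $\sigma_{X'B}=\sum_x q_x\ketbra{x}{x}_{X'}\otimes\omega_x^B$ with $\sum_x q_x=1$. Then the block decomposition goes through with $\tr\sigma_x=q_x$, the correction term becomes $\prod p_x^{\alpha(1-\alpha)}q_x^{(1-\alpha)^2}$, and Hölder again collapses everything, now picking up $(\sum p_x)^{\alpha(1-\alpha)}(\sum q_x)^{(1-\alpha)^2}=1$. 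The restriction $\alpha\in[\tfrac12,1]$ in~\eqref{eq:Bounds_ccq2} enters because that symmetrization step / the DPI for $\widetilde{D}_\alpha$ under the pinching is only available for $\alpha\ge\tfrac12$, whereas~\eqref{eq:Bounds_ccq1} needs no such step. The main obstacle I anticipate is precisely justifying that one may take the optimizer $\sigma_{X'B}$ block diagonal (and that passing to the optimal $\sigma$ for $\widetilde{D}$ versus $\widebar{D}$ is consistent) — this is the only non-bookkeeping ingredient; the block decompositions of $\widetilde{Q}_\alpha,\widebar{Q}_\alpha$ and the Hölder collapse are routine once the structure is in place.
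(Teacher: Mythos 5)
Your proposal fails at its very first structural claim. Tracing out only $B'$ does \emph{not} make $\rho_{XX'B}$ classical on $X$: one gets
$\rho_{XX'B}=\sum_{x,x'}\sqrt{p_xp_{x'}}\,\ket{x}\bra{x'}_X\otimes\ket{x}\bra{x'}_{X'}\otimes\tr_{B'}\ket{\xi_x}\bra{\xi_{x'}}_{BB'}$,
and the cross terms $\tr_{B'}\ket{\xi_x}\bra{\xi_{x'}}_{BB'}$ are nonzero in general (e.g.\ for $\ket{\xi_x}=\ket{\phi_x}_B\ket{0}_{B'}$ they equal $\ket{\phi_x}\bra{\phi_{x'}}_B$, and $\rho_{XX'B}$ is then a pure entangled state). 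The copy in $X'$ would enforce $x=x'$ only if $X'$ (or $X$) were traced out, which it is not; this is precisely why such states are called ``classically coherent'' rather than classical. Consequently $\rho_{XX'B}$ is not block diagonal with respect to $\{\ketbra{x}_X\}$, the blockwise decomposition of $\widetilde{Q}_\alpha$ and $\widebar{Q}_\alpha$ you rely on does not exist, and the H\"older-collapse computation has nothing to act on. (A secondary slip: even granting your decomposition, $\widetilde{Q}_\alpha\leqslant\widebar{Q}_\alpha^{\,\alpha}$ yields $\widetilde{D}_\alpha\geqslant\alpha\widebar{D}_\alpha$ and hence $\widetilde{H}^{\downarrow}_{\alpha}\leqslant\alpha\widebar{H}^{\downarrow}_{\alpha}$; the directions you wrote in the last step of the $\downarrow$ argument are reversed.)

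Parts of your plan do survive and match the paper: for \eqref{eq:Bounds_ccq2} one may indeed restrict the supremum to cq states $\sigma_{X'B}$ by a dephasing/DPI argument, valid for $\alpha\in[\tfrac12,1)$ (this is Lemma~\ref{lem:DPI_dephasing}), and the goal is to apply Corollary~\ref{cor:Upper_bound} in a normalized form so that no $\log|X|$ term appears. But the copy structure must be exploited differently: conjugate by the controlled-shift unitary $U_{XX'}=\sum_{x,x'}\ket{x-x'}\bra{x}_X\otimes\ketbra{x'}_{X'}$, which leaves $\id_X\otimes\sigma_{X'B}$ invariant for cq $\sigma_{X'B}$ and maps $\rho_{XX'B}$ to $\ketbra{0}_X\otimes G_{X'B}$, where $G_{X'B}=\sum_{x,x'}\sqrt{p_xp_{x'}}\ket{x}\bra{x'}_{X'}\otimes\tr_{B'}\ket{\xi_x}\bra{\xi_{x'}}_{BB'}$ is a density operator. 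The pure $X$ factor then drops out of both $\widetilde{Q}_\alpha$ and $\widebar{Q}_\alpha$, and \eqref{eq:reformulated_thm} applied to the two \emph{normalized} states $G_{X'B}$ and $\sigma_{X'B}$ gives $\widetilde{Q}_\alpha\leqslant\widebar{Q}_\alpha^{\,\alpha}$ with no correction term, which is exactly \eqref{eq:Bounds_ccq1} (taking $\sigma_{X'B}=\rho_{X'B}$) and \eqref{eq:Bounds_ccq2}. Without this unitary compression of the $X$ register (or an equivalent idea), your approach cannot be repaired by bookkeeping.
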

States $\rho_{XX'B}$ are sometimes called ``classically coherent'' as the classical information is treated coherently, i.e.\ fully quantum-mechanically. 
\begin{proof} [Proof of Proposition~\ref{ccq_states}]
It is known that $\widetilde{D}_1=\widebar{D}_1$ (see for example~\cite{tomamichel_quantum_2016}), and hence the claim is trivial in the case $\alpha=1$. Using the definition of $H^{\uparrow}_{\alpha}$ given in~\eqref{eq:cond_entropies} as well as the definitions of the Petz and the minimal quantum R\'enyi divergence given in~\eqref{eq:petz_entropy} and~\eqref{eq:min_entropy}, respectively\,, one can see that it suffices to show that
\begin{align}
&\widetilde{Q}_{\alpha}(\rho_{XX'B}\| \id_X\otimes \rho_{X'B})\leqslant \widebar{Q}_{\alpha}(\rho_{XX'B}\| \id_X\otimes \rho_{X'B})^{\alpha} \quad \text{ for  $\alpha \in (0,1)$}  \label{eq:Q_formulation_cq1},\\
&\widetilde{Q}_{\alpha}(\rho_{XX'B}\| \id_X\otimes \sigma_{X'B})\leqslant \widebar{Q}_{\alpha}(\rho_{XX'B}\| \id_X\otimes \sigma_{X'B})^{\alpha} \quad \text{ for  $\alpha \in [\tfrac{1}{2},1)$} , \label{eq:Q_formulation_cq2}
\end{align}
 for all density operators $\sigma_{X'B}$ (the case $\alpha=0$ then follows by continuity).

The marginal state $\rho_{X'B}$ appearing in \eqref{eq:Q_formulation_cq1} is a classical quantum (cq) state by assumption. Importantly, by the monotonicity of the R\'enyi divergence, we need only prove~\eqref{eq:Q_formulation_cq2} for cq states $\sigma_{X'B}$ in order to show~\eqref{eq:Bounds_ccq2}. Indeed, by Lemma~\ref{lem:DPI_dephasing} of Appendix~\ref{app:dephasing},  the supremum arising in equation~\eqref{eq:Bounds_ccq2} can be taken only over cq states.

Now define the unitary $U_{XX'}:=\sum_{x',x} \ket{x-x'}\bra{x}_X\otimes \ketbra{x'}_{X'}$, where arithmetic inside the ket is taken modulo $|X|$, and observe that $U_{XX'}\otimes \id_B$ leaves the state $\id_X \otimes \sigma_{X'B}$ invariant (here we use the assumption that  $\sigma_{X'B}$ is a cq state). 
Hence, by unitary invariance of $\mathbb{Q}_{\alpha}$, where $\mathbb{Q}_{\alpha}$ is a placeholder for $\widetilde{Q}$ or $\widebar{Q}$, we find
\begin{align}
&\mathbb{Q}_{\alpha}(  \rho_{XX'B}\|\id_X \otimes \sigma_{X'B}) \nonumber \\*
&\hspace{2em}=\mathbb{Q}_{\alpha}\big((U_{XX'}\otimes \id_B )\rho_{XX'B} (U_{XX'}^{*}\otimes \id_B) \| \id_X \otimes \sigma_{X'B}\big)\nonumber \\*
&\hspace{2em}= \mathbb{Q}_{\alpha}\big(\ketbra{0}_X\otimes \sum_{x,x'}\sqrt{p_xp_{x'}}\ket{x}\bra{x'}_{X'}\otimes \tr_{B'}\ket{\xi_x}\bra{\xi_{x'}}_{BB'} \|\id_X \otimes \sigma_{X'B}\big) \nonumber \\*
&\hspace{2em}=  \mathbb{Q}_{\alpha}\Big(\sum_{x,x'}\sqrt{p_xp_{x'}}\ket{x}\bra{x'}_{X'}\otimes \tr_{B'}\ket{\xi_x}\bra{\xi_{x'}}_{BB'} \|\sigma_{X'B}\Big) \nonumber \, ,
\end{align}
where we used the multiplicity of the trace under tensor products in the last equality. The claim now follows by a direct application of Corollary~\ref{cor:Upper_bound} (or more precisely of~\eqref{eq:reformulated_thm} applied to density operators):
\begin{align}
&\widetilde{Q}_{\alpha}(  \rho_{XX'B}\|\id_X \otimes \sigma_{X'B}) \nonumber\\*
&\hspace{6em}=  \widetilde{Q}_{\alpha}\Big(\sum_{x,x'}\sqrt{p_xp_{x'}}\ket{x}\bra{x'}_{X'} \otimes \tr_{B'}\ket{\xi_x}\bra{\xi_{x'}}_{BB'} \|\sigma_{X'B}\Big) \nonumber \\*
&\hspace{6em}\leqslant \widebar{Q}_{\alpha}\Big(\sum_{x,x'}\sqrt{p_xp_{x'}}\ket{x}\bra{x'}_{X'} \otimes \tr_{B'}\ket{\xi_x}\bra{\xi_{x'}}_{BB'} \|\sigma_{X'B}\Big)^{\alpha} \nonumber \\*
&\hspace{6em}=\widebar{Q}_{\alpha}(  \rho_{XX'B}\|\id_X \otimes \sigma_{X'B})^{\alpha} \,. \nonumber
\end{align}
This shows inequality~\eqref{eq:Q_formulation_cq2} for cq states $\sigma_{X'B}$, and hence~\eqref{eq:Bounds_ccq2}. Moreover, we recover inequality~\eqref{eq:Q_formulation_cq1} by setting $\sigma_{X'B}=\rho_{X'B}$.
 \end{proof}

\subsection{Relations between min-like entropies} \label{sec:min_like}

We can use duality relations for conditional entropies (see Lemma~\ref{lem_duality}) and Corollary~\ref{cor:entropies} to derive new bounds for conditional min-like entropies.

\begin{mylem} \label{lem:bounds_cond_entr}
For $\alpha \in [1,2]$ and $\rho_{AB}\in \cD(A\otimes B)\,$, we have that\footnote{We use the convention that $\frac{1}{0}=\infty\,$.}
\begin{align}
&\widetilde{H}^{\downarrow}_{\alpha}(A|B)_{\rho} \leqslant \alpha \widetilde{H}^{\uparrow}_{\frac{1}{2-\alpha}}(A|B)_{\rho}+(\alpha-1) \log |A| \label{eq:bounds_cond_entr1} \qquad \text{and}\\
&\widebar{H}^{\downarrow}_{\alpha}(A|B)_{\rho} \leqslant \frac{1}{2-\alpha} \left( \widebar{H}^{\uparrow}_{ \frac{1}{2-\alpha}}(A|B)_{\rho}+(\alpha-1) \log |A|\right). \label{eq:bounds_cond_entr2}
\end{align}
\end{mylem}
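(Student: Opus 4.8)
The plan is to deduce Lemma~\ref{lem:bounds_cond_entr} directly from Corollary~\ref{cor:entropies} by invoking the duality relations of Lemma~\ref{lem_duality}. First I would let $\rho_{ABC}$ be a purification of $\rho_{AB}$, so that all conditional entropies of $A$ given $B$ or given $C$ refer to the same global pure state. The idea is that for $\alpha \in [1,2]$ the min-like entropies $\widetilde H^{\downarrow}_{\alpha}(A|B)_\rho$ and $\widebar H^{\downarrow}_{\alpha}(A|B)_\rho$ are dual (via Lemma~\ref{lem_duality}) to max-like entropies evaluated on the complementary system $C$, which lie in the regime $\alpha' \in [0,1]$ where Corollary~\ref{cor:entropies} applies.

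For the first inequality~\eqref{eq:bounds_cond_entr1}: set $\beta = 2-\alpha \in [0,1]$. By the third duality relation in Lemma~\ref{lem_duality} (with the pair $\widebar{H}^{\uparrow}$, $\widetilde{H}^{\downarrow}$ and product of parameters equal to $1$), we have $\widetilde{H}^{\downarrow}_{\alpha}(A|B)_{\rho} = -\widebar{H}^{\uparrow}_{1/\alpha}(A|C)_{\rho}$. Now apply the upper bound of~\eqref{eq:Bounds_entropies2} in Corollary~\ref{cor:entropies} with parameter $1/\alpha \in [\tfrac12,1]\subseteq[0,1]$ to the state on $A\otimes C$, which gives $\widetilde{H}^{\uparrow}_{1/\alpha}(A|C)_{\rho} \leqslant \tfrac{1}{\alpha}\widebar{H}^{\uparrow}_{1/\alpha}(A|C)_{\rho} + (1-\tfrac1\alpha)\log|A|$; rearranging yields $\widebar{H}^{\uparrow}_{1/\alpha}(A|C)_{\rho} \geqslant \alpha\,\widetilde{H}^{\uparrow}_{1/\alpha}(A|C)_{\rho} - (\alpha-1)\log|A|$. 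Negating and using duality again on the right side — the second duality relation gives $\widetilde{H}^{\uparrow}_{1/\alpha}(A|C)_{\rho} = -\widetilde{H}^{\uparrow}_{\gamma}(A|B)_{\rho}$ where $\tfrac{1}{1/\alpha}+\tfrac1\gamma = 2$, i.e. $\gamma = \tfrac{1}{2-\alpha}$ — one obtains exactly $\widetilde{H}^{\downarrow}_{\alpha}(A|B)_{\rho} \leqslant \alpha\,\widetilde{H}^{\uparrow}_{1/(2-\alpha)}(A|B)_{\rho} + (\alpha-1)\log|A|$. For the second inequality~\eqref{eq:bounds_cond_entr2}, the argument is analogous but uses the $\widebar{H}^{\downarrow}$–$\widebar{H}^{\downarrow}$ duality (first relation, $\alpha+\beta=2$) to write $\widebar{H}^{\downarrow}_{\alpha}(A|B)_{\rho} = -\widebar{H}^{\downarrow}_{2-\alpha}(A|C)_{\rho}$, then relates $\widebar{H}^{\downarrow}$ to $\widebar{H}^{\uparrow}$ through the inequality $\widebar H^{\downarrow}_{2-\alpha}(A|C)_\rho \leqslant \widebar H^{\uparrow}_{2-\alpha}(A|C)_\rho$ (immediate from the supremum in the definition), and finally applies the lower/upper structure of~\eqref{eq:Bounds_entropies2} together with the $\widebar{H}^{\uparrow}$–$\widetilde{H}^{\downarrow}$ or $\widebar H^{\uparrow}$-self duality at parameter $\tfrac{1}{2-\alpha}$.

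The main obstacle I anticipate is bookkeeping: correctly matching each of the three duality relations in Lemma~\ref{lem_duality} to the right divergence type ($\uparrow$ vs.\ $\downarrow$, Petz vs.\ sandwiched) and the right parameter map, and checking that every intermediate parameter genuinely lies in the interval where the invoked statement is valid — in particular that $1/\alpha$ and $2-\alpha$ land in $[0,1]$ (resp.\ $[\tfrac12,1]$) when $\alpha\in[1,2]$, and handling the endpoint $\alpha=2$ where $\tfrac{1}{2-\alpha}=\infty$ via the convention $\tfrac10=\infty$ and a limiting argument. There is also a minor subtlety that $\widebar D_\alpha$ only satisfies the DPI (hence the duality-based manipulations are meaningful) for $\alpha\in(0,2]$, which is exactly the range assumed, so no extra care beyond noting this is needed. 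Once the dictionary between the two inequalities of Corollary~\ref{cor:entropies} and the dual quantities is set up, both bounds follow by pure rearrangement with no further analytic input.
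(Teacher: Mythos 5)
Your derivation of \eqref{eq:bounds_cond_entr1} is exactly the paper's argument: purify $\rho_{AB}$, use the third duality relation to write $\widetilde{H}^{\downarrow}_{\alpha}(A|B)_{\rho}=-\widebar{H}^{\uparrow}_{1/\alpha}(A|C)_{\rho}$, apply the upper bound of \eqref{eq:Bounds_entropies2} at parameter $1/\alpha\in[\tfrac12,1]$ on system $C$, and convert back with the $\widetilde H^{\uparrow}$--$\widetilde H^{\uparrow}$ duality at $\gamma=\tfrac{1}{2-\alpha}$; the parameter bookkeeping you worry about does check out, so that half is correct.

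Your sketch for \eqref{eq:bounds_cond_entr2}, however, has a directional error. After $\widebar{H}^{\downarrow}_{\alpha}(A|B)_{\rho}=-\widebar{H}^{\downarrow}_{2-\alpha}(A|C)_{\rho}$ you need a \emph{lower} bound on $\widebar{H}^{\downarrow}_{2-\alpha}(A|C)_{\rho}$, but your step $\widebar H^{\downarrow}_{2-\alpha}(A|C)_\rho \leqslant \widebar H^{\uparrow}_{2-\alpha}(A|C)_\rho$ is an upper bound; after negation it only yields $\widebar{H}^{\downarrow}_{\alpha}(A|B)_{\rho}\geqslant -\widebar H^{\uparrow}_{2-\alpha}(A|C)_\rho$, which points the wrong way. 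Moreover, \eqref{eq:Bounds_entropies2} applied on system $C$ involves only $\widetilde H^{\uparrow}$ and $\widebar H^{\uparrow}$ of $(A|C)$, and under Lemma~\ref{lem_duality} these dualize to $\widetilde H^{\uparrow}$ and $\widetilde H^{\downarrow}$ of $(A|B)$, so neither route produces the Petz quantity $\widebar H^{\uparrow}_{1/(2-\alpha)}(A|B)_{\rho}$ that appears in \eqref{eq:bounds_cond_entr2}; there is also no ``$\widebar H^{\uparrow}$-self duality'' in Lemma~\ref{lem_duality} to fall back on. The correct route (the paper's) uses the $\downarrow$ chain \eqref{eq:Bounds_entropies1} at parameter $2-\alpha\in[0,1]$ on system $C$: the inequality $\widetilde{H}^{\downarrow}_{2-\alpha}(A|C)_{\rho}\leqslant (2-\alpha)\,\widebar{H}^{\downarrow}_{2-\alpha}(A|C)_{\rho}+(\alpha-1)\log|A|$ is exactly the needed lower bound on $\widebar{H}^{\downarrow}_{2-\alpha}(A|C)_{\rho}$, and the third duality relation converts $\widetilde{H}^{\downarrow}_{2-\alpha}(A|C)_{\rho}$ into $-\widebar{H}^{\uparrow}_{1/(2-\alpha)}(A|B)_{\rho}$, after which rearranging gives \eqref{eq:bounds_cond_entr2}. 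In short: the second inequality requires \eqref{eq:Bounds_entropies1} and the $\widebar H^{\uparrow}$--$\widetilde H^{\downarrow}$ duality, not the monotonicity $\widebar H^{\downarrow}\leqslant\widebar H^{\uparrow}$ combined with \eqref{eq:Bounds_entropies2}.
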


\begin{proof}
Let $\tau_{ABC}$ be a purification of $\rho_{AB}$ on $A \otimes B\otimes C$, i.e., $\tau_{ABC}$ is a pure state with $\tr_C \tau_{ABC}=\rho_{AB}$. 
Then, we find
\begin{align}
\widetilde{H}^{\downarrow}_{\alpha}(A|B)_{\tau}
=-\widebar{H}^{\uparrow}_{\frac{1}{\alpha}}(A|C)_{\tau} 
&\leq- \alpha \widetilde{H}^{\uparrow}_{\frac{1}{\alpha}}(A|C)_{\tau} +(\alpha-1) \log |A| \nonumber\\
&= \alpha \widetilde{H}^{\uparrow}_{\frac{1}{2-\alpha}}(A|B)_{\tau} +(\alpha-1) \log |A| \, , \nonumber
\end{align}
where we used  Corollary~\ref{cor:entropies} for the inequality and duality relations in the first and third equality. 
Similarly, we find
\begin{align}
\widebar{H}^{\downarrow}_{\alpha}(A|B)_{\tau}
&=-\widebar{H}^{\downarrow}_{2-\alpha}(A|C)_{\tau} \\
&\leqslant \frac{1}{2-\alpha} \left(-\widetilde{H}^{\downarrow}_{2-\alpha}(A|C)_{\tau} +(\alpha-1) \log |A|\right)\\
&=  \frac{1}{2-\alpha} \left(\widebar{H}^{\uparrow}_{\frac{1}{2-\alpha}}(A|B)_{\tau} +(\alpha-1) \log |A|\right) \, ,
\end{align}
where we again used Corollary~\ref{cor:entropies} for the inequality and duality relations in the first and third equality. 
\end{proof}

\begin{mycor} \label{cor:entropy_cq}
Let  $\alpha \in [1,2]$ and $\rho_{XB}$ be a cq state on $X \otimes B$, i.e., $\rho_{XB}=\sum_x p_x \ketbra{x}_X \otimes (\rho_x)_B$ where $(\rho_x)_B$ are density operators and $p_x \in [0,1]\,$, such that $\sum_x p_x=1\,$. Then
\begin{align}
&\widetilde{H}^{\downarrow}_{\alpha}(X|B)_{\rho} \leqslant \alpha \widetilde{H}^{\uparrow}_{\frac{1}{2-\alpha}}(X|B)_{\rho}  \label{eq:cond_entr_bound_cq1} \qquad \text{and} \\
&\widebar{H}^{\downarrow}_{\alpha}(X|B)_{\rho} \leqslant \frac{1}{2-\alpha}  \widebar{H}^{\uparrow}_{ \frac{1}{2-\alpha}}(X|B)_{\rho}.\label{eq:cond_entr_bound_cq2}
\end{align}
\end{mycor}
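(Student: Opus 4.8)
The plan is to mimic the proof of Lemma~\ref{lem:bounds_cond_entr}, but to replace the arbitrary purification used there by one that is \emph{classically coherent} in the sense of Proposition~\ref{ccq_states}, and to invoke Proposition~\ref{ccq_states} in place of Corollary~\ref{cor:entropies}. Since the bounds of Proposition~\ref{ccq_states} carry no $\log|A|$ term, neither will the resulting bounds, which is precisely the improvement claimed in \eqref{eq:cond_entr_bound_cq1}--\eqref{eq:cond_entr_bound_cq2}. The case $\alpha=1$ is trivial (there $\widetilde{D}_1=\widebar{D}_1$ and the ``up'' entropies dominate the ``down'' ones), and $\alpha=2$ is covered by the convention $\tfrac1{0}=\infty$ together with continuity, so it suffices to treat $\alpha\in(1,2)$.

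Concretely, given a cq state $\rho_{XB}=\sum_x p_x\ketbra{x}_X\otimes(\rho_x)_B$, I would first purify each conditional state, choosing $\ket{\xi_x}_{B'B}$ with $\tr_{B'}\ketbra{\xi_x}_{B'B}=(\rho_x)_B$, and then set $\ket{\tau}_{XX'BB'}=\sum_x\sqrt{p_x}\ket{x}_X\ket{x}_{X'}\ket{\xi_x}_{B'B}$ with $X'\simeq X$. This $\tau$ is a pure state, its marginal satisfies $\tau_{XB}=\rho_{XB}$, and it is of the classically coherent form appearing in Proposition~\ref{ccq_states}, with the register $B'$ here playing the role of the register called $B$ there and the register $B$ here playing the role of $B'$ there. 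Consequently Proposition~\ref{ccq_states}, applied with conditioning system $X'B'$, gives $\widetilde{H}^{\uparrow}_{\beta}(X|X'B')_{\tau}\leqslant\beta\,\widebar{H}^{\uparrow}_{\beta}(X|X'B')_{\tau}$ for $\beta\in[\tfrac12,1]$ and $\widetilde{H}^{\downarrow}_{\beta}(X|X'B')_{\tau}\leqslant\beta\,\widebar{H}^{\downarrow}_{\beta}(X|X'B')_{\tau}$ for $\beta\in[0,1]$.

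For \eqref{eq:cond_entr_bound_cq1} I would then run the duality chain of Lemma~\ref{lem:bounds_cond_entr} on $\tau$: by the $\widebar{H}^{\uparrow}/\widetilde{H}^{\downarrow}$ duality of Lemma~\ref{lem_duality}, $\widetilde{H}^{\downarrow}_{\alpha}(X|B)_{\tau}=-\widebar{H}^{\uparrow}_{1/\alpha}(X|X'B')_{\tau}$; since $1/\alpha\in[\tfrac12,1]$, the first bound above yields $-\widebar{H}^{\uparrow}_{1/\alpha}(X|X'B')_{\tau}\leqslant-\alpha\,\widetilde{H}^{\uparrow}_{1/\alpha}(X|X'B')_{\tau}$; and the $\widetilde{H}^{\uparrow}$ self-duality with $\tfrac1{1/\alpha}+\tfrac1\gamma=2$, i.e.\ $\gamma=\tfrac1{2-\alpha}$, identifies $-\widetilde{H}^{\uparrow}_{1/\alpha}(X|X'B')_{\tau}=\widetilde{H}^{\uparrow}_{1/(2-\alpha)}(X|B)_{\tau}$. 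As all the entropies of $X$ given $B$ depend only on $\tau_{XB}=\rho_{XB}$, this is exactly \eqref{eq:cond_entr_bound_cq1}. For \eqref{eq:cond_entr_bound_cq2} the same recipe works with the Petz duality relations: $\widebar{H}^{\downarrow}_{\alpha}(X|B)_{\tau}=-\widebar{H}^{\downarrow}_{2-\alpha}(X|X'B')_{\tau}$, then the second bound above with $\beta=2-\alpha\in[0,1]$ gives $-\widebar{H}^{\downarrow}_{2-\alpha}(X|X'B')_{\tau}\leqslant-\tfrac1{2-\alpha}\widetilde{H}^{\downarrow}_{2-\alpha}(X|X'B')_{\tau}$, and finally $-\widetilde{H}^{\downarrow}_{2-\alpha}(X|X'B')_{\tau}=\widebar{H}^{\uparrow}_{1/(2-\alpha)}(X|B)_{\tau}$.

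I do not expect any genuinely hard step: the substance is entirely contained in Proposition~\ref{ccq_states}, which has already been established. The only things requiring care are bookkeeping --- checking that the parameters stay in the ranges where Proposition~\ref{ccq_states} and Lemma~\ref{lem_duality} apply (for $\alpha\in[1,2]$ one has $1/\alpha\in[\tfrac12,1]$ and $2-\alpha\in[0,1]$), treating the endpoint $\alpha=2$ via the stated conventions and continuity --- and the one conceptual point, namely that the classically coherent form $\sum_x\sqrt{p_x}\ket{x}_X\ket{x}_{X'}\ket{\xi_x}_{BB'}$ is symmetric under relabelling $B\leftrightarrow B'$, which is what allows us to condition on the purifying side $X'B'$ rather than on $X'B$ when invoking Proposition~\ref{ccq_states}.
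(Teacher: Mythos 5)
Your proposal is correct and is essentially the paper's own argument: the same classically coherent purification $\ket{\tau}_{XX'BB'}=\sum_x\sqrt{p_x}\ket{x}_X\ket{x}_{X'}\ket{\xi_x}$ with the roles of $B$ and $B'$ interchanged relative to Proposition~\ref{ccq_states}, followed by the duality chain of Lemma~\ref{lem:bounds_cond_entr} with Proposition~\ref{ccq_states} replacing Corollary~\ref{cor:entropies} to drop the $\log|A|$ terms. Your extra bookkeeping of the parameter ranges and the endpoints $\alpha=1,2$ is consistent with the paper, which leaves those details implicit.
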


\begin{proof}
The proof proceeds analogously to the proof of Lemma~\ref{lem:bounds_cond_entr}, but we can make use of the improved bounds given in Proposition~\ref{ccq_states}: Let $\ket{\tau}_{XX'BB'}=\sum_x \sqrt{p_x}\ket{x}_X\ket{x}_{X'}\ket{\xi_x}_{BB'}$ where $\ket{\xi_x}_{BB'}$ purifies $(\rho_x)_B$. 
The system $X' \otimes B'$ corresponds to the system $C$ in the proof of Lemma~\ref{lem:bounds_cond_entr} and the state on $X \otimes X' \otimes B'$, i.e., $\tau_{XX'B'}$, is a classical-coherent state as required for Proposition~\ref{ccq_states} (note that the role of $B$ and $B'$ are interchanged here and in the statement of Proposition~\ref{ccq_states}). 
\end{proof}

We note that the special case $\alpha=2$ of the inequalities~\eqref{eq:bounds_cond_entr1}  and~\eqref{eq:cond_entr_bound_cq1} was already shown in~\cite{dupuis_entanglement_2015}.

\subsection{Equality condition for max-like  entropies} \label{sec:equality_cond_entropy}
In this section, we give a necessary and sufficient condition on a density operator $\rho_{AB}$, such that the entropies  $\widebar{H}^{\uparrow}_{\alpha}(A|B)_{\rho}$ and $\widetilde{H}^{\uparrow}_{\alpha}(A|B)_{\rho}$ are equal for $\alpha \in [\tfrac{1}{2},1)$. To derive the necessary condition, let $\alpha \in (0,1)$. In the proof of Lemma~1 of~\cite{tomamichel_relating_2014}, it is shown that the optimizer $\sigma^\star_B$ of $\widebar{H}^{\uparrow}_{\alpha}(A|B)_{\rho}=\sup_{\sigma_B \in \cD(B) }-\widebar{D}_{\alpha}(\rho_{AB}\| \id_A \otimes \sigma_B)$ is given by
 \begin{align} \label{def:tilde_sigma}
 \sigma^\star_B=\frac{\left(\tr_A \, \rho_{AB}^{\alpha}\right)^{\frac{1}{\alpha}}}{\tr\, \left(\tr_A{\rho_{AB}^{\alpha}}\right)^{\frac{1}{\alpha}}}\, .
 \end{align} 
 By the ALT inequality~\cite{lieb_inequalities_1976,araki_inequality_1990}, we then find that
  \begin{align}  \label{eq:ALT_for_entropies}
  \widebar{H}^{\uparrow}_{\alpha}(A|B)_{\rho}=-\widebar{D}_{\alpha}(\rho_{AB}\|\id_A \otimes \sigma^\star_B) \leqslant \sup \limits_{\sigma_B \in \cD(B) }-\widetilde{D}_{\alpha}(\rho_{AB}\| \id_A \otimes \sigma_B)
 \, .
 \end{align} 

According to Lemma~\ref{lem:eq_cond_ALT}, a necessary condition for equality in~\eqref{eq:ALT_for_entropies} is that  $[\rho_{AB},\id_A \otimes \sigma^\star_B]=0$. Assume now that $\alpha \in   [\tfrac{1}{2},1)$. To show that this condition is also sufficient for equality in~\eqref{eq:ALT_for_entropies}, it suffices to show that the function $\sigma_B \mapsto  -\widetilde{D}_{\alpha}(\rho_{AB}\|\id_A \otimes \sigma_B)$ or equivalently $\sigma_B \mapsto  \widetilde{Q}_{\alpha}(\rho_{AB}\|\id_A \otimes \sigma_B)$ attains its global maximum at $\sigma_B=\sigma^\star_B$ if $[\rho_{AB},\id_A \otimes \sigma^\star_B]=0$. The proof of this fact is based on standard derivative techniques, albeit for matrices, and is given in Appendix~\ref{app:equality_condition_extrema}. The results are summarized in the following lemma.
\begin{mylem}[Equality condition for entropies] \label{lem:equal_cond_entropies}
Let $\alpha \in [\tfrac{1}{2},1)\,$, $\rho_{AB}$ be a density operator and $\hat{\sigma}^\star_B:=\tr_A \, \rho_{AB}^{\alpha}$. Then, the following are equivalent
\begin{enumerate}
\item $\widebar{H}^{\uparrow}_{\alpha}(A|B)_{\rho}=\widetilde{H}^{\uparrow}_{\alpha}(A|B)_{\rho} $
\item $[\rho_{AB},\id_A \otimes \hat{\sigma}^\star_{B}]=0$ .
\end{enumerate}
\end{mylem}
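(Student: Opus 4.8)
The plan is to prove the equivalence (1)$\Leftrightarrow$(2) by splitting into the two implications, handling the "only if" direction via the equality condition for the ALT inequality (Lemma~\ref{lem:eq_cond_ALT}) and the "if" direction via a direct maximization argument.

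\textbf{Proof of (1)$\Rightarrow$(2).} Assume $\widebar{H}^{\uparrow}_{\alpha}(A|B)_{\rho}=\widetilde{H}^{\uparrow}_{\alpha}(A|B)_{\rho}$. Writing out the definitions, and using~\eqref{eq:ALT_for_entropies}, we have
\begin{align}
-\widebar{D}_{\alpha}(\rho_{AB}\|\id_A \otimes \sigma^\star_B) = \widebar{H}^{\uparrow}_{\alpha}(A|B)_{\rho} = \widetilde{H}^{\uparrow}_{\alpha}(A|B)_{\rho} = \sup_{\sigma_B}-\widetilde{D}_{\alpha}(\rho_{AB}\| \id_A \otimes \sigma_B) \, , \nonumber
\end{align}
so the chain of inequalities
\begin{align}
-\widebar{D}_{\alpha}(\rho_{AB}\|\id_A \otimes \sigma^\star_B) \leqslant -\widetilde{D}_{\alpha}(\rho_{AB}\|\id_A \otimes \sigma^\star_B) \leqslant \sup_{\sigma_B}-\widetilde{D}_{\alpha}(\rho_{AB}\| \id_A \otimes \sigma_B) \nonumber
\end{align}
must collapse to equalities; in particular $\widetilde{D}_{\alpha}(\rho_{AB}\|\id_A \otimes \sigma^\star_B) = \widebar{D}_{\alpha}(\rho_{AB}\|\id_A \otimes \sigma^\star_B)$. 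By Lemma~\ref{lem:eq_cond_ALT} this forces $[\rho_{AB},\id_A \otimes \sigma^\star_B]=0$, and since $\sigma^\star_B$ is a positive multiple of $\hat\sigma^\star_B=\tr_A\rho_{AB}^\alpha$ (compare~\eqref{def:tilde_sigma}), this is equivalent to $[\rho_{AB},\id_A \otimes \hat\sigma^\star_B]=0$.

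\textbf{Proof of (2)$\Rightarrow$(1).} Assume $[\rho_{AB},\id_A \otimes \hat\sigma^\star_B]=0$, equivalently $[\rho_{AB},\id_A\otimes\sigma^\star_B]=0$. Then by Lemma~\ref{lem:eq_cond_ALT} (applied to $\rho=\rho_{AB}$, $\sigma=\id_A\otimes\sigma^\star_B$) we have $\widetilde{D}_{\alpha}(\rho_{AB}\|\id_A \otimes \sigma^\star_B) = \widebar{D}_{\alpha}(\rho_{AB}\|\id_A \otimes \sigma^\star_B)$. Since $\sigma^\star_B$ is the optimizer of the Petz $\uparrow$-entropy, it remains only to show that $\sigma^\star_B$ is \emph{also} the optimizer for the minimal divergence, i.e.\ that $\sigma_B \mapsto \widetilde{Q}_{\alpha}(\rho_{AB}\|\id_A \otimes \sigma_B)$ attains its global maximum over $\cD(B)$ at $\sigma_B=\sigma^\star_B$; granting this,
\begin{align}
\widetilde{H}^{\uparrow}_{\alpha}(A|B)_{\rho} = -\widetilde{D}_{\alpha}(\rho_{AB}\|\id_A \otimes \sigma^\star_B) = -\widebar{D}_{\alpha}(\rho_{AB}\|\id_A \otimes \sigma^\star_B) = \widebar{H}^{\uparrow}_{\alpha}(A|B)_{\rho} \, . \nonumber
\end{align}
The maximization claim is the content of the technical lemma proved in Appendix~\ref{app:equality_condition_extrema}: for $\alpha\in[\tfrac12,1)$ the function $\sigma_B\mapsto\widetilde{Q}_{\alpha}(\rho_{AB}\|\id_A\otimes\sigma_B)$ is concave on $\cD(B)$ (as a composition of the operator-concave map $\sigma_B \mapsto (\id_A\otimes\sigma_B^{\frac{1-\alpha}{2\alpha}})\rho_{AB}(\id_A\otimes\sigma_B^{\frac{1-\alpha}{2\alpha}})$ with the trace of its $\alpha$-th power, valid since $\tfrac{1-\alpha}{2\alpha}\in(0,\tfrac12]$ and $\alpha\in[\tfrac12,1)$), so a stationary point is a global maximum; one then computes the first-order stationarity condition via a matrix derivative and checks that $\sigma^\star_B$ satisfies it precisely when $[\rho_{AB},\id_A\otimes\sigma^\star_B]=0$, because in the commuting case the variational equations reduce to the classical ones, whose solution is exactly~\eqref{def:tilde_sigma}.

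\textbf{Main obstacle.} The delicate point is the (2)$\Rightarrow$(1) direction: showing that $\sigma^\star_B$ is the global maximizer of $\widetilde{Q}_{\alpha}$ under the commutation hypothesis. This requires (i) a careful matrix-calculus computation of the gradient of $\sigma_B\mapsto\widetilde{Q}_{\alpha}(\rho_{AB}\|\id_A\otimes\sigma_B)$ — the $\sigma_B$-dependence appears in an awkward ``sandwiched'' form with fractional powers — and (ii) the concavity statement that upgrades stationarity to a global maximum, which is where the restriction $\alpha\geqslant\tfrac12$ enters. Both are deferred to Appendix~\ref{app:equality_condition_extrema}; here I would simply cite that appendix, as the theorem statement already signposts.
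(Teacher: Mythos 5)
Your outline is essentially the paper's own argument: (1)$\Rightarrow$(2) by collapsing the chain through $-\widebar{D}_{\alpha}(\rho_{AB}\|\id_A\otimes\sigma^\star_B)\leqslant-\widetilde{D}_{\alpha}(\rho_{AB}\|\id_A\otimes\sigma^\star_B)\leqslant\widetilde{H}^{\uparrow}_{\alpha}(A|B)_{\rho}$ and invoking Lemma~\ref{lem:eq_cond_ALT}, and (2)$\Rightarrow$(1) by combining Lemma~\ref{lem:eq_cond_ALT} with the fact (Lemma~\ref{lem:alphasufficiency} in Appendix~\ref{app:equality_condition_extrema}) that $\sigma_B\mapsto\widetilde{Q}_{\alpha}(\rho_{AB}\|\id_A\otimes\sigma_B)$ is maximized at $\sigma^\star_B$ under the commutation hypothesis. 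Two justifications you give along the way are not right, though neither breaks the structure. First, $\sigma^\star_B$ is \emph{not} a positive multiple of $\hat{\sigma}^\star_B=\tr_A\rho_{AB}^{\alpha}$; by \eqref{def:tilde_sigma} it is the normalized $\nicefrac{1}{\alpha}$-th \emph{power} of $\hat{\sigma}^\star_B$. The equivalence $[\rho_{AB},\id_A\otimes\sigma^\star_B]=0\iff[\rho_{AB},\id_A\otimes\hat{\sigma}^\star_B]=0$ still holds, but the correct reason is that commuting with a positive operator is equivalent to commuting with any strictly positive power of it (they share spectral projectors), so state it that way. Second, your parenthetical concavity argument would fail: the map $\sigma_B\mapsto(\id_A\otimes\sigma_B^{\frac{1-\alpha}{2\alpha}})\rho_{AB}(\id_A\otimes\sigma_B^{\frac{1-\alpha}{2\alpha}})$ is quadratic in $\sigma_B^{\frac{1-\alpha}{2\alpha}}$ and is not operator concave in general, so one cannot obtain concavity of $\widetilde{Q}_{\alpha}$ by composing it with $A\mapsto\tr A^{\alpha}$. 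The paper's appendix instead derives concavity in $\sigma_B$ from the (nontrivial) joint concavity of $\widetilde{Q}_{\alpha}$ for $\alpha\in[\tfrac12,1)$~\cite{frank_monotonicity_2013,beigi_sandwiched_2013}; that is where the restriction $\alpha\geqslant\tfrac12$ genuinely enters. Since you defer the maximization claim to Appendix~\ref{app:equality_condition_extrema} anyway (which also handles the rank-deficient $\rho_{AB}$ by restricting to the support of $\sigma^\star_B$), the proof goes through once these two justifications are replaced by the correct ones.
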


\chapter{Pretty good measures in quantum information theory} \label{cha:pgm}

\section{Introduction}
In this section we present a  unified framework relating pretty good measures often used  in quantum information to their optimal counterparts. In particular, this can help to estimate the quality of a pretty good measure. In Section~\ref{sec:pgf}, we define the  ``pretty good fidelity'' as $\fpg(\rho ,\sigma):=\tr \sqrt{\rho}\sqrt{\sigma}$.  The new bound between the Petz and the minimal quantum R\'enyi  divergence given in Corollary~\ref{cor:Upper_bound} then implies that the pretty good fidelity is indeed pretty good in that $\fpg \leqslant F\leqslant \sqrt{\fpg}$, where $F$ denotes the usual fidelity defined by $F(\rho,\sigma):=\tr (\sqrt{\rho} \sigma \sqrt{\rho})^{\nicefrac{1}{2}}$.  Analogous bounds are also known between the pretty good guessing probability and the optimal guessing probability~\cite{barnum_reversing_2002} as well as between the pretty good and the optimal achievable singlet fraction~\cite{dupuis_entanglement_2015}.\footnote{Note that ``singlet'' refers to a maximally entangled state (and not necessarily to the maximally entangled two-qubit state)~\cite{dupuis_entanglement_2015}. }  We show that both of these relations follow by the inequality relating the pretty good fidelity and the fidelity. We thus present a unified picture of the relationship between pretty good quantities and their optimal versions.
Additionally, we show that the equality condition for the minimal and the Petz divergence given in Lemma~\ref{lem:eq_cond_ALT} lead to a new necessary and sufficient condition on the optimality of both pretty good measurement and singlet fraction.

\section{Pretty good fidelity} \label{sec:pgf}
Let   $\rho$ and $\sigma$ be two density operators throughout this section.  We define the pretty good fidelity of $\rho$ and $\sigma$ by
\begin{align}
\fpg(\rho,\sigma):=\widebar{Q}_{\frac{1}{2}}(\rho,\sigma)=\tr\,\sqrt\rho\sqrt\sigma \, .
\end{align}
This quantity was called the ``quantum affinity'' in \cite{luo_informational_2004} and is nothing but the fidelity of the ``pretty good purification'' introduced in~\cite{winter_extrinsic_2004}: Letting $\ket{\Omega}_{AA'}=\sum_k \ket{k}_A \ket{k}_{A'}$,  the canonical purification with respect to $\ket{\Omega}_{AA'}$ of $\rho$ is $\ket{\Psi_\rho}_{AA'}=({\sqrt{\rho}_A \otimes \id_{A'}}) \ket{\Omega}_{AA'}$, and thus 
\begin{align}
\fpg(\rho,\sigma)=\braket{\Psi_\rho|\Psi_\sigma}_{AA'}.
\end{align}

Recall that the usual fidelity is given by
\begin{align}
F(\rho,\sigma):=\widetilde{Q}_{\frac{1}{2}}(\rho,\sigma)
=\norm{\sqrt{\rho} \sqrt{\sigma}}_1
=\max_{V_{A'}}\bra{\Psi_\rho}(\id_{A}\otimes V_{A'})\ket{\Psi_\sigma}_{AA'}, 
\end{align}
where the maximum is taken over all unitary operators $V_{A'}$ and the final equality follows from Uhlmann's theorem~\cite{uhlmann_transition_1976}. Therefore, it is clear that $\fpg(\rho,\sigma)\leqslant F(\rho,\sigma)$. This can also be seen from the ALT inequality directly (cf. Corollary~\ref{cor:Upper_bound} for $\alpha=\tfrac 12$), and therefore, by Lemma~\ref{lem:eq_cond_ALT}, we have that $\fpg(\rho,\sigma)=F(\rho,\sigma)$ if and only if $[\rho, \sigma]=0$. 
The reverse ALT inequality implies a bound in the opposite direction; a similar approach using the H\"older inequality is given in \cite{audenaert_comparisons_2014}. 
By choosing $\alpha=\nicefrac{1}{2}$, it follows from Corollary~\ref{cor:Upper_bound} that the fidelity is also upper bounded by the square root of the pretty good fidelity, i.e.,
\begin{align} \label{ineq_pgf}
\fpg(\rho,\sigma)\leqslant F(\rho,\sigma)\leqslant \sqrt{\fpg(\rho,\sigma)}\, .
\end{align}
Hence the pretty good fidelity is indeed pretty good. 

Recall that the \emph{trace distance} between two density operators $\rho$ and $\sigma$ is defined by $\delta(\rho,\sigma):=\frac{1}{2}\|\rho-\sigma\|_1$.
An important property of the fidelity is its relation to the trace distance~\cite{fuchs_cryptographic_1999}:
\begin{align} \label{eq:fid_tr_dist}
1-F(\rho,\sigma) \leqslant \delta(\rho,\sigma) \leqslant \sqrt{1-F(\rho,\sigma)^2} \, . 
\end{align}
Indeed, the pretty good fidelity satisfies the same relation:
\begin{align} 
\label{eq:fvdg}
1-\fpg(\rho,\sigma) \leqslant \delta(\rho,\sigma) \leqslant \sqrt{1-\fpg(\rho,\sigma)^2} \, .
\end{align}
The upper bound follows immediately by combining the upper bound in \eqref{eq:fid_tr_dist} with the lower bound in~\eqref{ineq_pgf}. 
The lower bound was first shown in \cite{powers_free_1970} (see also \cite{audenaert_comparisons_2014}). 



\section{Relation to bounds for the pretty good measurement and singlet fraction} \label{sec:bounds_pg_measures}
In this section we show that together with entropy duality, the relation between fidelity and pretty good fidelity in \eqref{ineq_pgf} implies the known optimality bounds of the pretty good measurement and the pretty good singlet fraction. 
Let us first consider the optimal and pretty good singlet fraction. 
Define $R(A|B)_\rho$ to be the largest achievable overlap with the maximally entangled state one can obtain from $\rho_{AB}$ by applying a quantum channel on $B$. 
Formally, 
\begin{align} 
R(A|B)_{\rho}:=\text{max}_{\mathcal{E}_{B \rightarrow A'}} F(\ket{\Phi}\!\bra{\Phi}_{AA'}, (\id_A \otimes \mathcal{E}_{B \rightarrow A'})\rho_{AB})^2 \, ,
\end{align}
where $\ket{\Phi}_{AA'}=\frac{1}{\sqrt{|A|}}\sum_k \ket{k}_A \ket{k}_{A'}$ and the maximization is over all completely positive, trace-preserving maps ${\mathcal{E}_{B \rightarrow A'}}$. 
In~\cite{konig_operational_2009} it was shown that 
\begin{align} 
\widetilde{H}^{\uparrow}_{\infty}(A|B)_{\rho}=- \log |A|\, R(A|B)_{\rho} \, .
\end{align}
A ``pretty good'' map $\cE_{\text{pg}}$ was considered in \cite{berta_entanglement-assisted_2014}, and it was shown that  
\begin{align} 
\widetilde{H}^{\downarrow}_{2}(A|B)_{\rho}= -\log |A|\, R_{\text{pg}}(A|B)_{\rho} \, , 
\end{align}
where $R_{\text{pg}}(A|B)_\rho$ is the overlap obtained by using $\cE_{\text{pg}}$. 
Clearly $R_{\text{pg}}(A|B)_{\rho}\leqslant R(A|B)_{\rho}$, but the case $\alpha=2$ in~\eqref{eq:bounds_cond_entr1}, which comes from \eqref{ineq_pgf} via entropy duality, implies that we also have 
\begin{align} 
R_{\text{pg}}(A|B)_{\rho} \leqslant R(A|B)_{\rho}\leqslant \sqrt{R_{\text{pg}}(A|B)_{\rho}} \, .
\end{align}
This was also shown in~\cite{dupuis_entanglement_2015}. 
Note that in the special case where $\rho_{AB}$ has the form of a Choi state, i.e., $\tr_B \, \rho_{AB}=\frac{1}{|A|} \id_A$, this statement also follows from~\cite{barnum_reversing_2002}. 

Now let $\rho_{XB}=\sum_x p_x \ketbra{x}_X \otimes (\rho_x)_B$ be a cq state, and consider an observer with access to the system $B$ who would like to guess the variable $X$. 
Denote by $p_{\text{guess}}(X|B)$ the optimal guessing probability which can be achieved by performing a POVM on the system $B$.
It was shown in~\cite{konig_operational_2009} that 
\begin{align} 
\widetilde{H}^{\uparrow}_{\infty}(X|B)_{\rho} = -\log p_{\text{guess}}(X|B) \,  . 
\end{align}
On the other hand, it is also known that~\cite{buhrman_possibility_2008}
\begin{align} 
\widetilde{H}^{\downarrow}_{2}(X|B)_{\rho} = -\log p^{\text{pg}}_{\text{guess}}(X|B) \, ,
\end{align}
where $p_{\text{guess}}^{\text{pg}}(X|B)$ denotes the guessing probability of the pretty good measurement introduced in~\cite{belavkin_optimal_1975, hausladen_`pretty_1994}. (We refer to Appendix~\ref{app:pgm} for a description of the pretty good measurement.)
Clearly $p^{\text{pg}}_{\text{guess}} (X|B) \leqslant p_{\text{guess}}(X|B)$, but the case $\alpha=2$ in~\eqref{eq:cond_entr_bound_cq1}, which again comes from \eqref{ineq_pgf} via entropy duality, also implies that  
\begin{align}  \label{eq:pgm_is_pg}
p^{\text{pg}}_{\text{guess}} (X|B)  \leqslant p_{\text{guess}} (X|B)  \leqslant   \sqrt{p^{\text{pg}}_{\text{guess}} (X|B)}\, .  
\end{align}
This was originally shown in~\cite{barnum_reversing_2002}.

\section{Optimality conditions for pretty good measures} \label{sec:opt_cond_pgm}
Our framework also yields a novel optimality condition for the pretty good measures. 
Supposing $\tau_{ABC}$ is a purification of $\rho_{AB}$, the duality relations for R\'enyi entropies (cf. Lemma~\ref{lem_duality}) imply
\begin{align} \label{eq:aquival_dual_picture}
\widetilde{H}_{2}^{\downarrow}(A|B)_{\tau}=\widetilde{H}_{\infty}^{\uparrow}(A|B)_{\tau} \quad \iff \quad \widebar{H}^{\uparrow}_{\nicefrac{1}{2}}(A|C)_{\tau}= \widetilde{H}^{\uparrow}_{\nicefrac{1}{2}}(A|C)_{\tau}\, .
\end{align}
Applying the equality condition for max-like conditional entropies, using \linebreak Lemma~\ref{lem:equal_cond_entropies}, we find that the pretty good singlet fraction and pretty good measurement are optimal if and only if ${[\tau_{AC}, \id_A \otimes \hat{\sigma}^\star_{C}]}=0$, where $\hat{\sigma}^\star_{C}:=\tr_A \, \sqrt{ \tau_{AC}}$. 
Alternately, this specific equality condition ($\alpha=\nicefrac12$) can be established via weak duality of semidefinite programs, as described in Appendix~\ref{app:sdp}.

As a simple example of optimality of the pretty good singlet fraction, consider the case of a pure bipartite $\rho_{AB}$. 
Then every purification $\tau_{ABC}=\rho_{AB}\otimes \xi_C$ for some pure $\xi_C$.
Thus, $\tau_{AC}=\rho_A\otimes \xi_C$, and it follows immediately that the optimality condition is satisfied. 
Optimality also holds for arbitrary mixtures of pure states, i.e., for states of the form $\rho_{ABY}=\sum_y q_y \ketbra{\psi_y}_{AB}\otimes \ketbra {y}_Y$ with some arbitrary distribution $q_y$, provided both $B$ and $Y$ are used in the entanglement recovery operation. 
Here any purification takes the form $\ket{\tau}_{ABYY'}=\sum_y \sqrt{q_y}\ket{\psi_y}_{AB}\ket{y}_Y\ket{y}_{Y'}$. Hence, we have that $\tau_{AY'}=\sum_y q_y \tr_B \,\ketbra{\psi_y}_{AB} \otimes \ketbra{y}_{Y'}$, a state in which $Y'$ is classical, for which it is easy to see that the optimality condition holds. 

The optimality condition for the pretty good measurement can be simplified using the classical coherent nature of the state $\tau_{AC}$, which results in a condition formulated in terms of the Gram matrix. 
Suppose $\rho_{XB}=\sum_x p_x \ketbra{x}_X \otimes (\rho_x)_B$ describes the ensemble of mixed states $(\rho_x)_B$, for which a natural purification is given by 
\begin{align} \label{eq:purification_of_rho}
\ket{\tau}_{XX'BB'}=\sum_x \sqrt{p_x}\ket{x}_X\ket{x}_{X'}\ket{\xi_x}_{BB'}\, ,
\end{align}
where $\ket{\xi_x}_{BB'}$ denotes a purification of $(\rho_x)_B$.
Then we define the (generalized) Gram matrix $G$ 
\begin{align} \label{eq:G_def}
G_{X'B'}:=\sum_{x,x'}   \sqrt{p_xp_{x'}} \ket{x}\bra{x'}_{X'}  \otimes \tr_{B} \, \ket{\xi_{x}}\bra{\xi_{x'}}_{BB'} \, .
\end{align}
This definition reverts to the usual Gram matrix when the states $(\rho_x)_B$ are pure and system $B'$ is trivial. 
Observe that we are in the setting of Proposition~\ref{ccq_states}; using the unitary  $U_{XX'}$ introduced in its proof, we find that $\left(U_{XX'}\otimes \id_{B'}\right)\tau_{XX'B'}$ \linebreak $\left(U_{XX'}^{*}\otimes \id_{B'}\right)=\ketbra{0}_X\otimes G_{X'B'}$. 
Hence, $\sqrt{\tau_{XX'B'}}=\left(U_{XX'}^{*}\otimes \id_{B'}\right)(\ketbra{0}_X\otimes \sqrt{G_{X'B'}})\left(U_{XX'}\otimes \id_{B'}\right)$ and a further calculation shows that $\tr_X\sqrt{\tau_{XX'B'}}=\hat{\sigma}^\star_{X'B'}$, with 
\begin{align}
\hat{\sigma}^\star_{X'B'}:=\sum_x \ketbra{x}_{X'} \otimes \bra{x} \sqrt{G_{X'B'}}\ket{x}_{X'}.
\end{align}
Note that $[M,N]=0$ is equivalent to $[UMU^{*},UNU^{*}]=0$ for any square matrices $M,N$ and any unitary $U$. Therefore, we find that the equality condition $[\tau_{XX'B'}, \id_X \otimes \hat{\sigma}^\star_{X'B'}]=0$ is equivalent to $[\ketbra{0}_X \otimes G_{X'B'}, \id_X \otimes\hat{\sigma}^\star_{X'B'}]=[G_{X'B'},\hat{\sigma}^\star_{X'B'}]=0$. Thus we have shown the following result:
\begin{mylem}[Optimality condition for the pretty good measurement] \label{lem:opt_pgm}
The pretty good measurement is optimal for distinguishing states in the ensemble $\{p_x,\rho_x\}$ if and only if $[G_{X'B'},\hat{\sigma}^\star_{X'B'}]=0$.
\end{mylem}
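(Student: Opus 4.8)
The plan is to reduce the claimed equivalence to Lemma~\ref{lem:equal_cond_entropies} (the equality condition for max-like entropies at $\alpha=\tfrac12$) by carefully tracking the chain of reformulations already assembled in Section~\ref{sec:opt_cond_pgm}. Concretely, I would argue as follows. By the operational identities $\widetilde{H}^{\downarrow}_{2}(X|B)_{\rho} = -\log p^{\text{pg}}_{\text{guess}}(X|B)$ and $\widetilde{H}^{\uparrow}_{\infty}(X|B)_{\rho} = -\log p_{\text{guess}}(X|B)$ recalled in Section~\ref{sec:bounds_pg_measures}, the pretty good measurement is optimal if and only if $\widetilde{H}^{\downarrow}_{2}(X|B)_{\rho}=\widetilde{H}^{\uparrow}_{\infty}(X|B)_{\rho}$. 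Taking a purification $\tau$ of $\rho_{XB}$ of the explicit classically-coherent form~\eqref{eq:purification_of_rho}, the duality relations of Lemma~\ref{lem_duality} (the cases $\alpha\beta=1$ with $\alpha=2,\beta=\tfrac12$, applied to the appropriate pure state) give the equivalence displayed in~\eqref{eq:aquival_dual_picture}, with the roles of the systems set up so that the complementary system $C$ is played by $X'B'$. Hence optimality of the pretty good measurement is equivalent to $\widebar{H}^{\uparrow}_{\nicefrac12}(X|X'B')_{\tau} = \widetilde{H}^{\uparrow}_{\nicefrac12}(X|X'B')_{\tau}$.

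Next I would invoke Lemma~\ref{lem:equal_cond_entropies} with $\alpha=\tfrac12$ applied to the bipartition $X\,|\,X'B'$ of the state $\tau_{XX'B'}$: this equality of entropies holds if and only if $[\tau_{XX'B'},\,\id_X\otimes\hat{\sigma}^\star_{X'B'}]=0$, where $\hat{\sigma}^\star_{X'B'}=\tr_X\sqrt{\tau_{XX'B'}}$. So it remains to show that this commutator condition is equivalent to $[G_{X'B'},\hat{\sigma}^\star_{X'B'}]=0$ with the Gram-matrix $\hat{\sigma}^\star_{X'B'}=\sum_x\ketbra{x}_{X'}\otimes\bra{x}\sqrt{G_{X'B'}}\ket{x}_{X'}$ of Lemma~\ref{lem:opt_pgm}. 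For this last step I would use the unitary $U_{XX'}$ from the proof of Proposition~\ref{ccq_states}: it conjugates $\tau_{XX'B'}$ into $\ketbra{0}_X\otimes G_{X'B'}$ (and hence $\sqrt{\tau_{XX'B'}}$ into $\ketbra{0}_X\otimes\sqrt{G_{X'B'}}$), while leaving $\id_X\otimes(\cdot)$ untouched on the $X$-factor; since $[M,N]=0\iff[UMU^\ast,UNU^\ast]=0$, the condition becomes $[\ketbra{0}_X\otimes G_{X'B'},\,\id_X\otimes\hat{\sigma}^\star_{X'B'}]=0$, which reduces to $[G_{X'B'},\hat{\sigma}^\star_{X'B'}]=0$ on the nontrivial $X'B'$ factor. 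One also checks that $\tr_X\sqrt{\tau_{XX'B'}}$ indeed equals the stated $\hat{\sigma}^\star_{X'B'}$, which is the "further calculation" already sketched in the text.

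The main obstacle is bookkeeping rather than a genuinely hard inequality: one must make sure the systems line up correctly through the duality step — in particular that the complementary system to $X$ in $\tau_{XX'BB'}$ relevant to the $\widetilde{H}^{\uparrow}_{1/2}$/$\widebar{H}^{\uparrow}_{1/2}$ comparison is precisely $X'B'$ (note the swapped roles of $B$ and $B'$ relative to the statement of Proposition~\ref{ccq_states}), and that the marginal $\tau_{XX'B'}$ is classically coherent so that Lemma~\ref{lem:equal_cond_entropies} and the $U_{XX'}$-conjugation trick both apply. A secondary point requiring a little care is that Lemma~\ref{lem:equal_cond_entropies} is stated for $\alpha\in[\tfrac12,1)$ and we use the boundary value $\alpha=\tfrac12$, which is included; and that $\sqrt{\tau_{XX'B'}}$ is taken on the support of $\tau_{XX'B'}$, consistent with the convention that negative/fractional matrix powers are evaluated on the support. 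Once these alignments are pinned down, the equivalence is immediate from the cited results.
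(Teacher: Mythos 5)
Your proposal is correct and follows essentially the same route as the paper: optimality of the pretty good measurement is translated via the operational entropy identities and the duality relation~\eqref{eq:aquival_dual_picture} into the equality $\widebar{H}^{\uparrow}_{\nicefrac12}=\widetilde{H}^{\uparrow}_{\nicefrac12}$ on $X|X'B'$, then Lemma~\ref{lem:equal_cond_entropies} at $\alpha=\nicefrac12$ and the $U_{XX'}$-conjugation (using that $\hat{\sigma}^\star_{X'B'}$ is classical on $X'$, so $\id_X\otimes\hat{\sigma}^\star_{X'B'}$ is invariant) reduce the commutator condition to $[G_{X'B'},\hat{\sigma}^\star_{X'B'}]=0$, exactly as in Section~\ref{sec:opt_cond_pgm}.
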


In the case of distinguishing pure states, we recover Theorem~2 of~\cite{dalla_pozza_optimality_2015} (which was first shown in~\cite{helstrom_bayes-cost_1982}). 
To see this, observe that $B'$ is now trivial and $G_{X'}$ is the usual Gram matrix. 
Moreover, $\hat{\sigma}^\star_{X'}$ is now the diagonal of the square root of $G_{X'}$, and the commutation condition of Lemma~\ref{lem:opt_pgm} becomes $[G_{X'},\hat{\sigma}^\star_{X'}]=0$, which is equivalent to the condition in equation~(11) of~\cite{dalla_pozza_optimality_2015} (in the case of the pretty good measurement). 
Reformulating what it means for the Gram matrix $G_{X'}$ to commute with the diagonal matrix $\hat{\sigma}^\star_{X'}$ then leads to Theorem~3 of~\cite{dalla_pozza_optimality_2015}.

\section{Conclusion}

The bound between the minimal and the Petz divergence given in Corollary~\ref{cor:Upper_bound} leads to an elegant unified framework of pretty good constructions in quantum information theory, and the ALT equality condition leads to a simple necessary and sufficient condition for their optimality. 
Previously it was observed that the min entropy $\widetilde H^{\uparrow}_\infty$ characterizes optimal measurement and singlet fraction, while $\widetilde H_2^\downarrow$ is the ``pretty good min entropy'' since it characterizes pretty good measurement and singlet fraction. 
On the other hand, we can think of $\widebar H_{\nicefrac{1}{2}}^\uparrow$ as the ``pretty good max entropy'' since it is based on the pretty good fidelity instead of the (usual) fidelity itself as in the max entropy $\widetilde H_{\nicefrac{1}{2}}^{\uparrow}$. 
Entropy duality then beautifully links the two, as the (pretty good) max entropy is dual to the (pretty good) min entropy, and the known optimality bounds can be seen to stem from the lower bound on the pretty good fidelity in \eqref{ineq_pgf}. 
Indeed, that such a unified picture might be possible was the original inspriation to look for a reverse ALT inequality of the form given in Theorem~\ref{thm:Inv_ALT}. 
It is also interesting to note that both the pretty good min and max entropies appear in achievability proofs of information processing tasks, the former in randomness extraction against quantum adversaries~\cite{tomamichel_leftover_2011} and the latter in the data compression with quantum side information~\cite{renes_one-shot_2012}.

%
%
%
\clearpage
%
%
%
\begin{appendix}
%
%
\lhead[\fancyplain{\scshape Appendix \thechapter}
{\scshape Appendix \thechapter}]
{\fancyplain{\scshape \leftmark}
  {\rightmark}}
\rhead[\fancyplain{\scshape \leftmark}
{\scshape \leftmark}]
{\fancyplain{\scshape Appendix \thechapter}
  {\scshape Appendix \thechapter}}
%
%

\chapter{Pretty good measurement} \label{app:pgm}

In this appendix, we give some additional information about the pretty good measurement, completing the discussion in the preface. Let us first formalize our goal: \\
Fix a set of density operators $\{\rho_x\}$ on a quantum system $B$ and a discrete probability distribution $P_X$ with finite support. Alice chooses an $x$ with probability  $P_X(x)=:p_x$ and prepares the corresponding state $\rho_x$ on the system $B$. Bob has access to system $B$ and wants to find out which $x$ has been chosen by Alice. We can summarize the information from the point of view of Bob in the following cq state $\rho_{XB}=\sum_{x} p_x \ketbra{x}_X \otimes (\rho_x)_B$. The measurement preformed by Bob can be described by POVM elements $\Lambda:=\{\Lambda_x\}$ on the system $B$. Then, the probability that Bob guesses correctly in the case that Alice has chosen $x$ is given by $\tr \, \Lambda_x \rho_x$, and hence, the unconditioned success probability (using the POVM $\Lambda$) is $p^{\Lambda}_{\text{guess}}(X|B):=\sum_{x} P_X(x) \, \tr \, \Lambda_x \rho_x$. Therefore, our goal is to find the POVM elements $\Lambda_x$ that maximize $p^{\Lambda}_{\text{guess}}(X|B)$. We define
\begin{align}
p_{\text{guess}}(X|B):= \underset{\Lambda_x}{\max} \sum_{x} P_X(x) \, \tr \, \Lambda_x \rho_x \,.
\end{align}
Unfortunately, it turns out that this optimization problem is not easy to solve in general. However, a different approach was taken in~\cite{belavkin_optimal_1975, hausladen_`pretty_1994}. Indeed,  they defined the pretty good POVM elements 
\begin{align}
\Lambda^{\text{pg}}_x:=P_X(x) \, \hat{\rho}^{-\frac{1}{2}} \rho_x \hat{\rho}^{-\frac{1}{2}} \, ,
\end{align}
where we set $ \hat{\rho}:= \sum_{x} P_X(x) \, \rho_x\,$. Then, the pretty good success probability is given by 
\begin{align}
p^{\text{pg}}_{\text{guess}}(X|B):=\sum_{x} P_X(x) \, \tr \, \Lambda^{\text{pg}}_x \rho_x \, .  
\end{align}
It turns out that the choice $\Lambda_x=\Lambda^{\text{pg}}_x$ is indeed pretty good in that $p_{\text{guess}}(X|B)$ is bounded from below and above in terms of $p^{\text{pg}}_{\text{guess}}(X|B)$ (cf.~\eqref{eq:pgm_is_pg} for the exact statement). These bounds follow elegantly in the framework of this thesis as discussed in detail in Chapter~\ref{cha:pgm}.


 \chapter{Technical results} \label{app:technical_res}

\section{Optimal marginals for classically coherent states} \label{app:dephasing}
This appendix details the argument that cq states are optimal in the conditional entropy expressions for classically coherent states. 
Following the approach taken in \cite[Lemma A.1]{dupuis_decoupling_2014} to establish a similar result for the smooth min entropy, we can show the following lemma.
\begin{mylem} \label{lem:DPI_dephasing}
Let   $ \ket{\rho}_{XX'BB'}=\sum_x \sqrt{p_x}\ket{x}_X\ket{x}_{X'}\ket{\xi_x}_{BB'}$ be a pure state on  $X\otimes X' \otimes B \otimes B'$, where $p_x \in [0,1]$ with $\sum_x p_x=1\,$, and $X'\simeq X$. Let $\mathbb{Q}_{\alpha}$ be a placeholder for $\widetilde{Q}_{\alpha}$ or $\widebar{Q}_{\alpha}$. Then, for any density operator $\sigma_{X'B}\,$, we have that 
 \begin{align}
 \label{eq:classicalcoherentQ}
&\mathbb{Q}_{\alpha}(\rho_{XX'B}\| \id_X \otimes \sigma_{X'B})\leqslant \mathbb{Q}_{\alpha}(\rho_{XX'B}\| \id_X \otimes \sigma^{\cl}_{X'B}) \quad \text{for }  \alpha \in [\tfrac{1}{2},1)   \, ,
 \end{align}
 where $\sigma^{\cl}_{X'B}:=\sum_{x}\ketbra x_{X'}\otimes \bra{x}\sigma_{X'B}\ket{x}_{X'}$. 
\end{mylem}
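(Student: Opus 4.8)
The plan is to run the argument of \cite[Lemma A.1]{dupuis_decoupling_2014} (there carried out for the smooth min‑entropy) at the level of the $\mathbb{Q}_\alpha$‑quantities, using the data‑processing inequality twice together with an invariance property special to ``classically coherent'' states. Put $\omega:=\ee^{2\pi\I/|X|}$ and, for $z\in\{0,1,\dots,|X|-1\}$, let $V_z:=\sum_x\omega^{zx}\ketbra{x}$ be the associated diagonal unitary, acting on $X$ or on $X'$ as indicated by a superscript. Two elementary facts are needed. \textbf{(i)} The state $\rho_{XX'B}$ is invariant under the paired unitary $W_z:=V_z^X\otimes V_{-z}^{X'}\otimes\id_B$: writing $\rho_{XX'B}=\sum_{x,y}\sqrt{p_xp_y}\,\ket{x}\bra{y}_X\otimes\ket{x}\bra{y}_{X'}\otimes\tr_{B'}(\ket{\xi_x}\bra{\xi_y}_{BB'})$, conjugation by $W_z$ multiplies the $(x,y)$‑term by $\omega^{z(x-y)}$ (from $V_z^X$) times $\omega^{-z(x-y)}$ (from $V_{-z}^{X'}$), i.e.\ by $1$. \textbf{(ii)} For every $\sigma_{X'B}$ the pinching in the statement is the uniform average over this family, $\sigma^{\cl}_{X'B}=\tfrac1{|X|}\sum_z(V_z^{X'}\otimes\id_B)\sigma_{X'B}(V_z^{X'}\otimes\id_B)^{*}$, because $\tfrac1{|X|}\sum_z\omega^{z(x-x')}=\delta_{x,x'}$.

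Next I would introduce a flag system $Z$ with $|Z|=|X|$ and the CPTP (hence trace‑preserving, each $W_z$ being unitary) map $\mathcal{E}\colon L(XX'B)\to L(ZXX'B)$ defined by $\mathcal{E}(\theta):=\tfrac1{|X|}\sum_z\ketbra{z}_Z\otimes W_z\,\theta\,W_z^{*}$. By fact (i), $\mathcal{E}(\rho_{XX'B})=\pi_Z\otimes\rho_{XX'B}$ with $\pi_Z:=\tfrac1{|X|}\id_Z$; and, since $V_z^X\id_X (V_z^X)^{*}=\id_X$, one gets $\mathcal{E}(\id_X\otimes\sigma_{X'B})=\tfrac1{|X|}\sum_z\ketbra{z}_Z\otimes\id_X\otimes(V_{-z}^{X'}\otimes\id_B)\sigma_{X'B}(V_{-z}^{X'}\otimes\id_B)^{*}$. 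Tracing out $Z$ then gives back $\rho_{XX'B}$ in the first case and, by fact (ii) (the sum over $z$ being the same as the sum over $-z$), exactly $\id_X\otimes\sigma^{\cl}_{X'B}$ in the second.

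Finally, apply the data‑processing inequality for $\mathbb{D}_\alpha$ — which holds for $\widetilde D_\alpha$ on $[\tfrac12,\infty]$ and for $\widebar D_\alpha$ on $(0,1)\cup(1,2]$, hence for both placeholders on $\alpha\in[\tfrac12,1)$ — first under $\mathcal{E}$ and then under $\tr_Z$:
\[
\mathbb{D}_\alpha(\rho_{XX'B}\,\|\,\id_X\otimes\sigma_{X'B})\;\geqslant\;\mathbb{D}_\alpha(\mathcal{E}(\rho_{XX'B})\,\|\,\mathcal{E}(\id_X\otimes\sigma_{X'B}))\;\geqslant\;\mathbb{D}_\alpha(\rho_{XX'B}\,\|\,\id_X\otimes\sigma^{\cl}_{X'B}).
\]
Since $\tr\rho_{XX'B}$ is the same on both ends and $\mathbb{D}_\alpha(\rho\|\tau)=\tfrac1{\alpha-1}\log\tfrac1{\tr\rho}\mathbb{Q}_\alpha(\rho\|\tau)$ with $\tfrac1{\alpha-1}<0$ for $\alpha\in[\tfrac12,1)$, this chain rearranges to $\mathbb{Q}_\alpha(\rho_{XX'B}\|\id_X\otimes\sigma_{X'B})\leqslant\mathbb{Q}_\alpha(\rho_{XX'B}\|\id_X\otimes\sigma^{\cl}_{X'B})$, which is \eqref{eq:classicalcoherentQ}.

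The only genuinely nontrivial step is spotting the channel $\mathcal{E}$: the point is that the ``classically coherent'' structure of $\rho_{XX'B}$ makes it invariant under the paired diagonal unitaries $W_z$, while those same unitaries implement the pinching on the $\sigma$‑side and leave the $\id_X$‑factor untouched, so that appending the flag register turns the desired inequality into two applications of data processing. After that it is bookkeeping; the one thing to keep an eye on is that the interval $\alpha\in[\tfrac12,1)$ sits inside the DPI regime of both $\widetilde D_\alpha$ and $\widebar D_\alpha$, so a single argument handles $\mathbb{Q}_\alpha=\widetilde Q_\alpha$ and $\mathbb{Q}_\alpha=\widebar Q_\alpha$ simultaneously.
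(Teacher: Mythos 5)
Your proof is correct, and it follows the same broad strategy as the paper (a channel that fixes the classically coherent state $\rho_{XX'B}$, then data processing, in the spirit of Lemma~A.1 of the decoupling paper), but the channel you pick and the properties of $\mathbb{Q}_\alpha$ you need are genuinely different. The paper uses the two-block pinching $\theta\mapsto P\theta P+(\id-P)\theta(\id-P)$ with $P_{XX'}=\sum_x\ketbra{x}_X\otimes\ketbra{x}_{X'}$; since this channel does not send $\id_X\otimes\sigma_{X'B}$ to $\id_X\otimes\sigma^{\cl}_{X'B}$ exactly, the paper must additionally invoke (a) the fact that $\mathbb{Q}_\alpha$ is insensitive to the part of its second argument outside the support of the first, and (b) the dominance property $\sigma\leqslant\sigma'\Rightarrow\mathbb{Q}_\alpha(\rho\|\sigma)\leqslant\mathbb{Q}_\alpha(\rho\|\sigma')$ for $\alpha<1$. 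Your correlated phase twirl $W_z=V_z^X\otimes V_{-z}^{X'}\otimes\id_B$ instead maps $\id_X\otimes\sigma_{X'B}$ \emph{exactly} to $\id_X\otimes\sigma^{\cl}_{X'B}$ (after discarding the flag), while fixing $\rho_{XX'B}$, so the whole argument reduces to data processing alone in the common DPI range $[\tfrac12,1)$ of $\widetilde D_\alpha$ and $\widebar D_\alpha$, plus the sign flip from $\tfrac{1}{\alpha-1}<0$ when passing from $\mathbb{D}_\alpha$ to $\mathbb{Q}_\alpha$; this is arguably cleaner since it avoids the two auxiliary properties. Two small remarks: the flag register $Z$ is not needed at all — the composition $\tr_Z\circ\mathcal{E}$ is just the random-unitary channel $\theta\mapsto\tfrac1{|X|}\sum_z W_z\theta W_z^{*}$, so a single application of the DPI suffices; and your phase-averaging identity implicitly labels the basis of $X'$ by integers mod $|X|$, which is harmless but worth stating, as is the degenerate case $\mathbb{Q}_\alpha=0$ (where the $\mathbb{D}_\alpha$-level inequality still yields the $\mathbb{Q}_\alpha$-level one under the paper's conventions).
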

\begin{proof} 
Let $P_{XX'}=\sum_x \ketbra{x}_{X} \otimes \ketbra{x}_{X'}$ and define the quantum channel $\mathcal E$ from $X \otimes X'$ to itself by $\mathcal{E}(\cdot):= P_{XX'}(\cdot)P_{XX'}+(\id_{XX'}-P_{XX'})(\cdot)(\id_{XX'}-P_{XX'})$. 
Since $P_{XX'}\ket{\Psi}_{XX'BB'}=\ket{\Psi}_{XX'BB'}$, $\mathcal{E}_{XX'}\otimes \mathcal I_B$ leaves the density operator $\rho_{XX'B}$ invariant.
By the DPI we then have, for  $\alpha \in [\tfrac{1}{2},1)$,
 \begin{align}
&\mathbb{Q}_{\alpha}(\rho_{XX'B}\| \id_X \otimes \sigma_{X'B})\\
&\hspace{5em}\leqslant \mathbb{Q}_{\alpha}\big(\rho_{XX'B}\| \mathcal{E}_{XX'} \otimes \mathcal I_B(\id_X \otimes \sigma_{X'B})\big) \\
&\hspace{5em}=\mathbb{Q}_{\alpha}\big(\rho_{XX'B}\| (P_{XX'} \otimes \id_B)(\id_X \otimes \sigma_{X'B})(P_{XX'} \otimes \id_B)\big)\,. \label{eq:indiff_supp}
 \end{align}
In~\eqref{eq:indiff_supp} we use the fact that $\mathbb{Q}_\alpha$ is indifferent to parts of its second argument which are not contained in the support of its first argument. 
Observe that $(P_{XX'} \otimes \id_B)(\id_X \otimes \sigma_{X'B})({P_{XX'} \otimes \id_B})=\sum_x \ketbra x_{X} \otimes \ketbra{x}_{X'}\sigma_{X'B}\ketbra{x}_{X'}\leqslant \id_{X}  \otimes \sigma^{\cl}_{X'B}$. Inequality~\eqref{eq:classicalcoherentQ} now follows directly from the dominance property of $\mathbb{D}_{\alpha}$ (see e.g.,~\cite{tomamichel_quantum_2016}), which states (in terms of $\mathbb{Q}_{\alpha}$) that $\mathbb{Q}_{\alpha}(\rho\|  \sigma) \leqslant \mathbb{Q}_{\alpha}(\rho\|  \sigma')$ for any non-negative operators $\rho, \sigma, \sigma'$ with $\sigma \leqslant \sigma'$ .
\end{proof}

\section{Sufficient condition for equality of max-like entropies} \label{app:equality_condition_extrema}

In this appendix, we show that, for $\alpha \in [\tfrac{1}{2},1)$, the function  $f_{\alpha}: \cD(B) \ni \sigma_B \mapsto  \widetilde{Q}_{\alpha}(\rho_{AB}\|\id_A \otimes \sigma_B)$  attains its global maximum at $\sigma_B=\sigma^\star_B$  if $[\rho_{AB},\id_A \otimes \sigma^\star_B]=0$. We use the notation of Section~\ref{sec:equality_cond_entropy}. The following lemma is similar to Lemma~5.1 of~\cite{sebastiani_derivatives_1996-1}. 

\begin{mylem} \label{lem:derivative}
Let $I \subset \mathbb{R}$ be open and  $t_0 \in I$. Let $A(t)$ be a matrix whose entries are smooth functions of $t \in I$ and $A(t)>0$ for all $t \in I$. 
Further, let $B$ be a matrix such that $[B,A(t_0)]=0$. Then,
\begin{align} \label{eq:derivative_of_matrix_power}
\frac{d}{dt}\Bigr|_{\substack{t=t_0}} \tr \, BA(t)^r=r \, \tr \, B A(t_0)^{r-1} A'(t_0)  \quad \text{for} \quad r \in \mathbb{R} \, ,
\end{align}
where $A'(t_0):=\frac{d}{dt}\Bigr|_{\substack{t=t_0}} A(t)$\, .
\end{mylem}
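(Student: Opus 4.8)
The plan is to access the (generally fractional) power $A(t)^r$ through the holomorphic functional calculus and to differentiate under a contour integral, so that the hypothesis $[B,A(t_0)]=0$ can be combined with cyclicity of the trace; note that the naive identity $\frac{\d}{\d t}A(t)^r=r\,A(t)^{r-1}A'(t)$ is false without commutativity, so both the trace and the commutation assumption are essential. Since $A(t_0)>0$ and $t\mapsto A(t)$ is smooth, the eigenvalues of $A(t)$ lie in a fixed compact subset of $(0,\infty)$ for $t$ in some closed interval $J\ni t_0$; fix a positively oriented contour $\Gamma\subset\mathbb{C}\setminus(-\infty,0]$ encircling $\spec(A(t))$ for every $t\in J$, so that, with the principal branch of $z\mapsto z^r$,
\begin{align}
A(t)^r=\frac1{2\pi\I}\oint_\Gamma z^r\,(z-A(t))^{-1}\,\d z\qquad(t\in J)\, .
\end{align}

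Differentiating $(z-A(t))(z-A(t))^{-1}=\id$ gives the resolvent derivative $\frac{\d}{\d t}(z-A(t))^{-1}=(z-A(t))^{-1}A'(t)(z-A(t))^{-1}$, which — all integrands being jointly continuous on the compact set $\Gamma\times J$ — lets me differentiate under the integral and then multiply by $B$ and take the trace:
\begin{align}
\frac{\d}{\d t}\Bigr|_{t=t_0}\tr\,BA(t)^r=\frac1{2\pi\I}\oint_\Gamma z^r\,\tr\big[B\,(z-A(t_0))^{-1}A'(t_0)(z-A(t_0))^{-1}\big]\,\d z\, .
\end{align}
Because $[B,A(t_0)]=0$, $B$ commutes with the resolvent $(z-A(t_0))^{-1}$, so by cyclicity the trace integrand equals $\tr\big[B\,(z-A(t_0))^{-2}A'(t_0)\big]$. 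Pulling $B$ and $A'(t_0)$ outside the $z$-integral and evaluating $\frac1{2\pi\I}\oint_\Gamma z^r(z-A(t_0))^{-2}\,\d z=r\,A(t_0)^{r-1}$ — apply the Cauchy formula $\frac1{2\pi\I}\oint\frac{f(z)}{(z-\lambda)^2}\,\d z=f'(\lambda)$ with $f(z)=z^r$ to each eigenvalue $\lambda$ of $A(t_0)$ — yields exactly~\eqref{eq:derivative_of_matrix_power}.

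The only genuine work is the analytic bookkeeping: producing a single contour valid on a whole neighbourhood of $t_0$ and justifying the interchange of $\frac{\d}{\d t}$ with $\oint_\Gamma$; both are routine given smoothness of $A$ and compactness of $\Gamma$. If one prefers to avoid contour integrals, the same conclusion follows from the Daleckii--Krein formula for the Fréchet derivative of $z\mapsto z^r$ at $A(t_0)=\sum_i\lambda_iP_i$: the off-diagonal pieces $P_iA'(t_0)P_j$ with $i\neq j$ are annihilated by $\tr\,B(\cdot)$ since $B$ commutes with every $P_i$, leaving $\sum_i f'(\lambda_i)\tr\,BP_iA'(t_0)=r\,\tr\,BA(t_0)^{r-1}A'(t_0)$; for $r\in\mathbb{Z}$ this reduces to the Leibniz rule plus cyclicity of the trace, and the general $r$ case is as above.
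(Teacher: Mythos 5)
Your proof is correct, but it takes a genuinely different route from the paper's. The paper does not argue from first principles: it invokes Theorem~3.5 and equation~(26) of Sebastiani's work on derivatives of matrix powers (adapted from the real to the complex case), which expresses $\frac{\d}{\d t}\bigr|_{t=t_0}\tr \, B A(t)^r$ as the ``naive'' term $r\,\tr\, B A'(t_0)A(t_0)^{r-1}$ plus a correction term $r\,\tr \, B H_{0,r}A(t_0)^{r-1}$, and then disposes of the correction by a short computation using $[B,A(t_0)]=0$. You instead give a self-contained argument: represent $A(t)^r$ by a resolvent contour integral (with a single contour valid on a compact neighbourhood of $t_0$, which exists because the spectra stay in a fixed compact subset of $(0,\infty)$), differentiate under the integral via $\frac{\d}{\d t}(z-A(t))^{-1}=(z-A(t))^{-1}A'(t)(z-A(t))^{-1}$, use the commutation of $B$ with the resolvent together with cyclicity of the trace to merge the two resolvent factors, and evaluate $\frac{1}{2\pi\I}\oint_\Gamma z^r(z-A(t_0))^{-2}\,\d z=r\,A(t_0)^{r-1}$ on the (diagonalizable) matrix $A(t_0)$; your Daleckii--Krein variant is an equally valid spectral-projection rendering of the same idea, since $[B,A(t_0)]=0$ makes $B$ commute with every spectral projector and kills the off-diagonal divided-difference terms. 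All steps check out and the argument covers every real $r$, as the lemma requires. What your route buys is independence from the external reference and from the unproved adaptation of Sebastiani's real-case theorem to complex matrices; what the paper's route buys is brevity.
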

\begin{proof}
Note that it is straightforward to adapt Theorem 3.5 of~\cite{sebastiani_derivatives_1996-1} to the complex case. Therefore, by setting $\alpha=0$ in the equation~(26) of~\cite{sebastiani_derivatives_1996-1}, we find that
\begin{align}
\frac{d}{dt}\Bigr|_{\substack{t=t_0}} \tr \,  BA(t)^r =r \,  \tr   \, B  A'(t_0) A(t_0)^{r-1} + r \,  \tr \,  B H_{0,r} A(t_0)^{r-1}  \, ,
\end{align}
where  $H_{0,r}$ is defined in equation~(27) of~\cite{sebastiani_derivatives_1996-1}. Since $[A(t_0),B]=0$, a short calculation shows that $\tr \, B H_{0,r} A(t_0)^{r-1}=0$.
\end{proof}
\begin{mylem} \label{lem:derivative_tilde_Q}
Set $I=(-\delta,\delta) \subset \mathbb{R}$ for some $\delta>0$ and let $A(t)$ be a matrix whose entries are smooth functions of $t \in I$ and $A(t)>0$ for all $t \in I$.  For $B$ a density operator such that $[B,A(0)]=0$, 
\begin{align}
\frac{d}{dt}\Bigr|_{\substack{t=0}} \widetilde{Q}_{\alpha}(B\|A(t))=(1-\alpha) \, \Re \, \tr \, B^{\alpha} A(0)^{-\alpha} A'(0)  \quad \text{for} \quad \alpha \in (0,1) \, ,
\end{align}
where $A'(t_0):=\frac{d}{dt}\Bigr|_{\substack{t=t_0}} A(t)$ for $t_0 \in I$.
\end{mylem}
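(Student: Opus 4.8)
The plan is to reduce Lemma~\ref{lem:derivative_tilde_Q} to two applications of Lemma~\ref{lem:derivative}. Write $\beta:=\frac{1-\alpha}{2\alpha}>0$, so that by definition $\widetilde{Q}_{\alpha}(B\|A(t))=\tr\big(A(t)^{\beta}BA(t)^{\beta}\big)^{\alpha}$. Setting $X:=A(t)^{\beta}B^{1/2}$ and using the identity $\tr(XX^{*})^{\alpha}=\tr(X^{*}X)^{\alpha}$ (valid for every matrix $X$ and every $\alpha>0$, since $XX^{*}$ and $X^{*}X$ have the same nonzero spectrum and $0^{\alpha}=0$), I would first rewrite
\[
\widetilde{Q}_{\alpha}(B\|A(t))=\tr\,N(t)^{\alpha},\qquad N(t):=B^{1/2}A(t)^{2\beta}B^{1/2}.
\]
The point of this step is that $\langle v,N(t)v\rangle=\|A(t)^{\beta}B^{1/2}v\|^{2}$, so $\ker N(t)=\ker B$ for \emph{every} $t$; restricted to the fixed subspace $\supp B$, the family $N(t)$ is positive definite and, because $A(t)>0$ and $x\mapsto x^{2\beta}$ is holomorphic on a neighbourhood of the spectrum, depends smoothly on $t$ (holomorphic functional calculus). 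Hence $\tr N(t)^{\alpha}$ is differentiable at $t=0$ and $x\mapsto x^{\alpha}$ is applied only to a genuinely positive-definite operator.

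Next I would apply Lemma~\ref{lem:derivative} on the subspace $\supp B$ with the constant matrix taken to be the identity, which trivially commutes with $N(0)$, obtaining $\frac{d}{dt}\big|_{0}\tr N(t)^{\alpha}=\alpha\,\tr\big(N(0)^{\alpha-1}N'(0)\big)$. Here the hypothesis $[B,A(0)]=0$ enters: it gives $N(0)=A(0)^{2\beta}B$, hence $N(0)^{\alpha-1}=A(0)^{2\beta(\alpha-1)}B^{\alpha-1}$, while $N'(0)=B^{1/2}\big(\tfrac{d}{dt}\big|_{0}A(t)^{2\beta}\big)B^{1/2}$. Substituting, and using cyclicity of the trace together with the fact that $B^{1/2}$ commutes with every power of $A(0)$, the expression collapses to
\[
\tfrac{d}{dt}\big|_{0}\widetilde{Q}_{\alpha}(B\|A(t))=\alpha\,\tr\Big(A(0)^{2\beta(\alpha-1)}B^{\alpha}\,\tfrac{d}{dt}\big|_{0}A(t)^{2\beta}\Big).
\]

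Now I would invoke Lemma~\ref{lem:derivative} a second time: the constant matrix $A(0)^{2\beta(\alpha-1)}B^{\alpha}$ commutes with $A(0)$, so with $r=2\beta$ the lemma yields $\tr\big(A(0)^{2\beta(\alpha-1)}B^{\alpha}\,\tfrac{d}{dt}\big|_{0}A(t)^{2\beta}\big)=2\beta\,\tr\big(A(0)^{2\beta(\alpha-1)}B^{\alpha}A(0)^{2\beta-1}A'(0)\big)$. It then remains to do the exponent arithmetic: since $2\alpha\beta=1-\alpha$, one has $2\beta(\alpha-1)+(2\beta-1)=2\alpha\beta-1=-\alpha$ and $\alpha\cdot 2\beta=1-\alpha$, and moving $B^{\alpha}$ past the (commuting) powers of $A(0)$ gives $\tfrac{d}{dt}\big|_{0}\widetilde{Q}_{\alpha}(B\|A(t))=(1-\alpha)\,\tr\big(B^{\alpha}A(0)^{-\alpha}A'(0)\big)$. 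Finally, $B^{\alpha}A(0)^{-\alpha}$ is Hermitian (product of commuting Hermitian operators) and $A'(0)$ is Hermitian (derivative of the Hermitian family $A(t)$), so the trace is automatically real and equals its real part, which matches the claimed formula.

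I expect the main obstacle to be precisely the domain issue flagged in the first paragraph: because $x\mapsto x^{\alpha}$ fails to be differentiable at $0$ for $\alpha\in(0,1)$, one cannot differentiate $\tr\big(A(t)^{\beta}BA(t)^{\beta}\big)^{\alpha}$ naively when $B$ is rank-deficient; the $XX^{*}\leftrightarrow X^{*}X$ rewriting is what repairs this, by making $\ker N(t)$ independent of $t$, after which Lemma~\ref{lem:derivative} applies verbatim on $\supp B$. A secondary, routine point is the smoothness of $t\mapsto A(t)^{2\beta}$ for non-integer $2\beta$, which I would justify via the holomorphic functional calculus; the rest is bookkeeping with $[B,A(0)]=0$ and cyclicity of the trace.
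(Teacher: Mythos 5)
Your proposal is correct, and it takes a genuinely different route from the paper's proof. The paper handles the rank-deficiency of $B$ by regularizing, $B_{\varepsilon}:=B+\varepsilon\,\id>0$, computing $\frac{d}{dt}\widetilde{Q}_{\alpha}(B_{\varepsilon}\|A(t))$ via Lemma~\ref{lem:derivative}, and then spending most of the proof justifying the interchange of $\lim_{\varepsilon\to0}$ with the derivative: uniform convergence on a compact $t$-interval is established through the generalized H\"older inequality, operator monotonicity of $x\mapsto x^{\alpha}$ (L\"owner), and Dini's theorem. You instead remove the difficulty structurally: the rewriting $\tr\,(XX^{*})^{\alpha}=\tr\,(X^{*}X)^{\alpha}$ with $X=A(t)^{\beta}B^{1/2}$ replaces $A(t)^{\beta}BA(t)^{\beta}$ by $N(t)=B^{1/2}A(t)^{2\beta}B^{1/2}$, whose kernel equals $\ker B$ for every $t$, so on $\supp B$ the family is positive definite and smooth and the non-differentiability of $x^{\alpha}$ at $0$ never enters; two applications of Lemma~\ref{lem:derivative} (first to $N(t)$ with the identity as the commuting constant, then to $A(t)^{2\beta}$ with the constant $A(0)^{2\beta(\alpha-1)}B^{\alpha}$, which commutes with $A(0)$ by hypothesis) plus the exponent bookkeeping $2\alpha\beta=1-\alpha$, $2\beta(\alpha-1)+2\beta-1=-\alpha$ give the claim, with the trace automatically real so the $\Re$ in the statement is harmless. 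What each approach buys: yours eliminates the regularization and the entire uniform-convergence/Dini machinery, at the modest cost of invoking smoothness of $t\mapsto A(t)^{2\beta}$ for non-integer exponents (holomorphic functional calculus) and care with pseudo-powers of $B$ on its support --- but the paper's argument also implicitly differentiates $A(t)^{\beta}$, so this cost is shared; conversely, the paper's intermediate formula is derived at arbitrary $t_{0}$ before the commutation hypothesis is used, whereas your simplification uses $[B,A(0)]=0$ from the outset, which is all the stated lemma (and its use in Lemma~\ref{lem:alphasufficiency}) requires.
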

\begin{proof}
To simplify the notation, let us define $\beta:=\tfrac{1-\alpha}{2\alpha}$. We set $B_{\varepsilon}:=B+\varepsilon \, \id>0$ for some $\varepsilon>0$. Using Lemma~\ref{lem:derivative} (with $A=A(t)^{\beta}B_{\varepsilon}A(t)^{\beta}$ and $B=\id$), we~find  
\begin{align}
\frac{d}{dt}\Bigr|_{\substack{t=t_0}}& \tr \left(A(t)^{\beta}B_{\varepsilon}A(t)^{\beta}\right)^{\alpha} \nonumber \\*
&= \alpha \, \tr \, \left(A(t_0)^{\beta}B_{\varepsilon}A(t_0)^{\beta}\right)^{\alpha-1} \frac{d}{dt}\Bigr|_{\substack{t=t_0}} \left(A(t)^{\beta}B_{\varepsilon}A(t)^{\beta} \right)  \nonumber \\*
&= \alpha \,  \tr \, \left(A(t_0)^{\beta}B_{\varepsilon}A(t_0)^{\beta}\right)^{\alpha-1} \Bigg( \frac{d}{dt}\Bigr|_{\substack{t=t_0}} A(t)^{\beta}B_{\varepsilon}A(t_0)^{\beta}  \nonumber \\*
&\hspace{16em}+A(t_0)^{\beta} B_{\varepsilon} \frac{d}{dt}\Bigr|_{\substack{t=t_0}}A(t)^{\beta} \Bigg) \, .  \nonumber 
\end{align} 
This can be simplified by noting that for any Hermitian matrix $H$ and any matrix~$C$, 
\begin{align}
\tr  \, H(C+C^{*})
&=\tr \, HC+\tr \, H C^{*}   \nonumber \\*
&= \tr \, HC+\tr \, H^{*}  C^{*}  \nonumber \\*
&=\tr \, HC+\left(\tr \, HC \right)^{\ast}   \nonumber \\*
&=2 \, \text{Re} \, \tr \, HC \, . \nonumber
\end{align} 
Using this we obtain
\begin{align} \label{eq:derivative_Q_calculation}
&\frac{d}{dt}\Bigr|_{\substack{t=t_0}} \widetilde{Q}_{\alpha}\big(B_{\varepsilon}\|A(t)\big) \nonumber\\*
&\hspace{4em}=2 \alpha \, \Re \, \tr  \, \left(A(t_0)^{\beta}B_{\varepsilon}A(t_0)^{\beta}\right)^{\alpha-1} \frac{d}{dt}\Bigr|_{\substack{t=t_0}} A(t)^{\beta}B_{\varepsilon}A(t_0)^{\beta} \nonumber \\*
&\hspace{4em}=2 \alpha \, \Re \, \tr \, A(t_0)^{-\beta} \left(A(t_0)^{\beta}B_{\varepsilon} A(t_0)^{\beta}\right)^{\alpha} \frac{d}{dt}\Bigr|_{\substack{t=t_0}} A(t)^{\beta}\, . \nonumber
\end{align} 
Taking the limit $\varepsilon \rightarrow 0$ yields
\begin{align}
\lim_{\varepsilon\to 0}\frac{d}{dt}\Bigr|_{\substack{t=t_0}} \widetilde{Q}_{\alpha}\big(B_{\varepsilon}\|A(t)\big)
&=2 \alpha \, \Re \, \tr \, A(t_0)^{-\beta} \left(A(t_0)^{\beta}B A(t_0)^{\beta}\right)^{\alpha} \frac{d}{dt}\Bigr|_{\substack{t=t_0}} A(t)^{\beta}\, . \nonumber
\end{align}
At $t_0=0$ the righthand side can be simplified by again making use of Lemma~\ref{lem:derivative} as well as $[A(0),B]=0$:
\begin{align}
\lim_{\varepsilon\to 0}\frac{d}{dt}\Bigr|_{\substack{t=0}} \widetilde{Q}_{\alpha}\big(B_{\varepsilon}\|A(t)\big)
&=2 \alpha \, \Re \, \tr  \, B^{\alpha} A(0)^{\beta (2\alpha-1)}\frac{d}{dt}\Bigr|_{\substack{t=0}} A(t)^{\beta} \nonumber \\
&=(1-\alpha) \, \Re  \,  \tr \,  B^{\alpha} A(0)^{-\alpha} A'(0) \, . \nonumber
\end{align}

It remains to be shown that the limit can be interchanged with the derivative. 
This follows if we ensure that $\frac{d}{dt}\bigr|_{\substack{t=t_0}} \widetilde{Q}_{\alpha}(B_{\varepsilon}\|A(t))$ converges uniformly in $t_0 \in [-\nicefrac{\delta}{2},\nicefrac{\delta}{2}]$ for $\varepsilon \rightarrow 0$. 
To show uniform convergence, it suffices to show  
\begin{align}
&\lim_{\varepsilon \rightarrow 0} \sup_{t_0 \in  [-\delta/2,\delta/2]} \Bigg\| A(t_0)^{-\beta} \Big[ \left(A(t_0)^{\beta}B_{\varepsilon} A(t_0)^{\beta}\right)^{\alpha} \nonumber \\
&\hspace{12em}-\left(A(t_0)^{\beta}B A(t_0)^{\beta}\right)^{\alpha} \Big] \frac{d}{dt}\Bigr|_{\substack{t=t_0}} A(t)^{\beta}\Bigg\|_1 =0\, , \nonumber
\end{align}
where we used that $|\tr (M)|\leqslant \norm{M}_1$ for any square matrix $M$ (see, e.g.,~\cite[Exercise~IV~2.12]{bhatia_matrix_1997}). By the generalized H\"older inequality for matrices (cf. Corollary~\ref{cor:generalized_Holder}), we find that it is enough to show that
\begin{align}
&\lim_{\varepsilon \rightarrow 0} \sup_{t_0 \in  [-\delta/2,\delta/2]} \norm{ \left(A(t_0)^{\beta}B_{\varepsilon} A(t_0)^{\beta}\right)^{\alpha}-\left(A(t_0)^{\beta}B A(t_0)^{\beta}\right)^{\alpha} }_1\nonumber \\*
&\hspace{17.5em}\norm{A(t_0)^{-\beta}}_{\infty}  \norm{ \frac{d}{dt}\Bigr|_{\substack{t=t_0}} A(t)^{\beta}}_{\infty} =0  \nonumber \, . 
\end{align}
Note that the infinity-norm terms are bounded on the compact interval \linebreak $t_0 \in  [-\nicefrac{\delta}{2},\nicefrac{\delta}{2}]$, as $A(t)^{\beta}$ is continuously differentiable for $A(t)>0$.
Thus, we need only show  that
\begin{equation}\label{eq:unif_conv_matrix_powers}
\lim_{\varepsilon \rightarrow 0} \sup_{t_0 \in  [-\nicefrac{\delta}{2},\nicefrac{\delta}{2}]}  \norm{ \left(A(t_0)^{\beta}B_{\varepsilon} A(t_0)^{\beta}\right)^{\alpha}-\left(A(t_0)^{\beta}B A(t_0)^{\beta}\right)^{\alpha} }_1=0\, .
\end{equation}
Since $t \rightarrow t^{\alpha}$ is operator monotone for $\alpha \in [0,1]$ (L\"owner's theorem~\cite{Lowner1934}), the matrix inside the trace norm is positive, and hence~\eqref{eq:unif_conv_matrix_powers} is equivalent to
\begin{align}\label{eq:unif_conv_matrix_powers2}
\lim_{\varepsilon \rightarrow 0} \sup_{t_0 \in  [-\nicefrac{\delta}{2},\nicefrac{\delta}{2}]} \norm{ \left(A(t_0)^{\beta}B_{\varepsilon} A(t_0)^{\beta}\right)^{\alpha}}_1- \norm{\left(A(t_0)^{\beta}B A(t_0)^{\beta}\right)^{\alpha} }_1=0\, .
\end{align}
Note that $\varepsilon \mapsto \|(A(t_0)^{\beta}B_{\varepsilon} A(t_0)^{\beta})^{\alpha}\|_1$ is monotonically decreasing (again by  L\"owner's theorem). Then, by Dini's theorem,  it converges uniformly to  \linebreak $\|(A(t_0)^{\beta}B A(t_0)^{\beta})^{\alpha}\|_1$, which proves~\eqref{eq:unif_conv_matrix_powers2}, and hence the desired uniformity of the convergence. 
\end{proof}
We are now ready to calculate the derivative of the function  $f_{\alpha}$ at $\sigma_B=\sigma^\star_B$.
\begin{mylem}
\label{lem:alphasufficiency}
Let $\alpha \in [\tfrac{1}{2},1)$ and $\rho_{AB} \in \cD(A\otimes B)$ be such that $[\rho_{AB}, \id_A \otimes \sigma^\star_B]=0$. Then the  function $f_{\alpha}:\cD(B) \ni \sigma_B\mapsto  \widetilde{Q}_{\alpha}(\rho_{AB}\|\id_A \otimes \sigma_B)$ attains its global maximum at $\sigma^\star_B$ as defined in \eqref{def:tilde_sigma}. 
\end{mylem}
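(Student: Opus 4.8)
The plan is to exploit \emph{concavity}. For $\alpha\in[\tfrac12,1)$ the map $\sigma_B\mapsto\widetilde{Q}_\alpha(\rho_{AB}\|\id_A\otimes\sigma_B)$ is concave on the convex set $\cD(B)$; this concavity is part of the Frank--Lieb/Beigi analysis underlying the data-processing inequality for $\widetilde{D}_\alpha$ in this range (see~\cite{frank_monotonicity_2013,beigi_sandwiched_2013,tomamichel_quantum_2016}), and it is precisely here that the hypothesis $\alpha\geqslant\tfrac12$ is needed, the concavity failing for smaller $\alpha$. For a concave function on a convex set, a point is a global maximizer as soon as every one-sided directional derivative from it toward another feasible point is non-positive. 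Hence it suffices to show that for each $\tau_B\in\cD(B)$,
\begin{align}
\frac{d}{dt}\bigr|_{t=0}f_\alpha\big((1-t)\sigma^\star_B+t\tau_B\big)\leqslant 0,\nonumber
\end{align}
and in fact I expect this derivative to vanish identically.

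First I would reduce to the face of $\cD(B)$ consisting of density operators supported on $\supp\rho_B$. On that face $\sigma^\star_B$ is strictly positive, so the path $(1-t)\sigma^\star_B+t\tau_B$ stays in the positive cone for $t$ near $0$ and Lemma~\ref{lem:derivative_tilde_Q} becomes applicable. The reduction is justified because $(\id_A\otimes\Pi)\rho_{AB}=\rho_{AB}$, where $\Pi$ is the support projector of $\rho_B$ (equivalently of $\hat{\sigma}^\star_B=\tr_A\rho_{AB}^\alpha$, since $\rho_{AB}^\alpha$ and $\rho_{AB}$ have the same support for $\alpha\in(0,1)$); consequently $\widetilde{Q}_\alpha(\rho_{AB}\|\id_A\otimes\sigma_B)$ depends only on the block of $\sigma_B$ on $\supp\rho_B$, and the homogeneity $\widetilde{Q}_\alpha(\rho\|\lambda\sigma)=\lambda^{1-\alpha}\widetilde{Q}_\alpha(\rho\|\sigma)$ with $1-\alpha>0$ shows that renormalizing that block onto $\supp\rho_B$ can only increase $f_\alpha$.

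Next I would compute the derivative. Fix $\tau_B$ supported on $\supp\rho_B$ and set $A(t):=\id_A\otimes\big((1-t)\sigma^\star_B+t\tau_B\big)$, so that $A(0)=\id_A\otimes\sigma^\star_B$, $A'(0)=\id_A\otimes(\tau_B-\sigma^\star_B)$, and $A(t)>0$ near $t=0$. The hypothesis $[\rho_{AB},\id_A\otimes\sigma^\star_B]=0$ is exactly $[\rho_{AB},A(0)]=0$, so Lemma~\ref{lem:derivative_tilde_Q}, applied with the density operator $\rho_{AB}$, gives
\begin{align}
\frac{d}{dt}\bigr|_{t=0}f_\alpha\big((1-t)\sigma^\star_B+t\tau_B\big)
&=(1-\alpha)\,\Re\,\tr\big[\rho_{AB}^\alpha\,(\id_A\otimes(\sigma^\star_B)^{-\alpha})\,(\id_A\otimes(\tau_B-\sigma^\star_B))\big]\nonumber\\
&=(1-\alpha)\,\Re\,\tr\big[(\tr_A\rho_{AB}^\alpha)\,(\sigma^\star_B)^{-\alpha}\,(\tau_B-\sigma^\star_B)\big].\nonumber
\end{align}
Substituting $\sigma^\star_B=(\tr_A\rho_{AB}^\alpha)^{1/\alpha}/c$ with $c:=\tr(\tr_A\rho_{AB}^\alpha)^{1/\alpha}$ turns $(\tr_A\rho_{AB}^\alpha)(\sigma^\star_B)^{-\alpha}$ into $c^\alpha\Pi$, which acts as the identity on $\supp\rho_B$; hence the bracket collapses to $c^\alpha(\tr\tau_B-\tr\sigma^\star_B)=c^\alpha(1-1)=0$. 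Thus the directional derivative vanishes toward every $\tau_B$, and by concavity $\sigma^\star_B$ is a global maximizer of $f_\alpha$; together with the first reduction this proves the lemma.

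The computation itself is short, so the main obstacles are the two supporting ingredients: invoking the concavity of $\widetilde{Q}_\alpha$ on the correct $\alpha$-range (without it a critical point need not be a maximum), and checking the hypotheses of Lemma~\ref{lem:derivative_tilde_Q}, in particular strict positivity along the path, via the support reduction above. The delicate interchange of the $\varepsilon\to0$ limit with differentiation has already been handled inside the proof of that lemma.
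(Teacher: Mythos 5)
Your route is essentially the paper's: concavity of $\sigma_B\mapsto\widetilde{Q}_{\alpha}(\rho_{AB}\|\id_A\otimes\sigma_B)$ (inherited from joint concavity for $\alpha\in[\tfrac12,1)$) reduces the problem to stationarity at $\sigma^\star_B$; the derivative is computed with Lemma~\ref{lem:derivative_tilde_Q} using $[\rho_{AB},\id_A\otimes\sigma^\star_B]=0$; and it vanishes because $(\tr_A\rho_{AB}^{\alpha})(\sigma^\star_B)^{-\alpha}$ is proportional to the support projector of $\rho_B$ while admissible directions are traceless. This is exactly the published argument (you are in fact more explicit than the paper about why the trace term is zero), and your handling of the rank-deficient case by passing to $\supp\rho_B$ mirrors the paper's second step.

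The one step that does not hold as written is the justification of that support reduction: for $\alpha\in(\tfrac12,1)$ it is \emph{not} true that $\widetilde{Q}_{\alpha}(\rho_{AB}\|\id_A\otimes\sigma_B)$ depends only on the block $\Pi\sigma_B\Pi$, where $\Pi$ projects onto $\supp\rho_B$. Writing $\widetilde{Q}_{\alpha}(\rho\|\sigma)=\tr\bigl(\rho^{\nicefrac12}\,\sigma^{\frac{1-\alpha}{\alpha}}\,\rho^{\nicefrac12}\bigr)^{\alpha}$, the quantity depends on $\Pi\,\sigma_B^{\frac{1-\alpha}{\alpha}}\,\Pi$, which is not a function of $\Pi\sigma_B\Pi$ unless $\alpha=\tfrac12$. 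The inequality you actually need, $f_{\alpha}(\tau_B)\leqslant f_{\alpha}\bigl(\Pi\tau_B\Pi/\tr\Pi\tau_B\Pi\bigr)$, is nevertheless true and repairable with tools already in the thesis: apply the DPI for the pinching $\sigma_B\mapsto\Pi\sigma_B\Pi+(\id_B-\Pi)\sigma_B(\id_B-\Pi)$, which leaves $\rho_{AB}$ invariant and, since $\alpha<1$, can only increase $\widetilde{Q}_{\alpha}$; the pinched state is block diagonal, so its off-support block is invisible to $\rho_{AB}$, and your homogeneity argument finishes the job (this is the same manoeuvre as in the proof of Lemma~\ref{lem:DPI_dephasing}; alternatively $(\Pi\sigma_B\Pi)^{\frac{1-\alpha}{\alpha}}\geqslant\Pi\sigma_B^{\frac{1-\alpha}{\alpha}}\Pi$ by the operator Jensen inequality plus the dominance property gives the same conclusion). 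The paper instead cites that the optimizer can always be taken with $\sigma_B\ll\rho_B$ and then projects all operators to $A\otimes\supp\rho_B$. With that one justification patched, your proof is complete and coincides with the paper's.
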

\begin{proof} 
First consider the case $\rho_{AB}>0$ for simplicity; we return to the rank-deficient case below. 
Since $(\rho,\sigma) \mapsto \widetilde{Q}_{\alpha}(\rho\| \sigma)$ is jointly concave~\cite{frank_monotonicity_2013, beigi_sandwiched_2013}, the function $ f_{\alpha} :  \mathcal{D}(B)\ni \sigma_B \mapsto {\widetilde Q_\alpha(\rho_{AB} \| \id_A \otimes \sigma_B)}$ is concave. As $ \mathcal{D}(B)$ is a convex set, it suffices to show that $f_\alpha$ has an extreme point at $\sigma^\star_B$ (which is then also a global maximum). 
Observe that $\sigma^\star_B>0$ by definition, and therefore all states $\sigma_B(t)$ along arbitrary paths of states through $\sigma_B(0)=\sigma^\star_B$ have full rank for all $t$ sufficiently close to zero.  
Thus, we may use Lemma~\ref{lem:derivative_tilde_Q} to compute the derivative along any such path and find  
\begin{align*}
&\frac{d}{dt}\Bigr|_{\substack{t=0}}  \widetilde{Q}_{\alpha}(\rho_{AB}\|\id_A \otimes \sigma_B(t))\\*
&\hspace{5em}= (1-\alpha) \, \Re \,  \tr \, \rho_{AB}^{\alpha} \left(\id_A \otimes \sigma^\star_B\right)^{-\alpha}  \left(\id_A \otimes \frac{d}{dt}\Bigr|_{\substack{t=0}} \sigma_{B}(t) \right)\\*
&\hspace{5em}= (1-\alpha) \, \Re \, \tr  \left( \tr_A \left(\rho_{AB}^{\alpha}\right) (\sigma^\star_B)^{-\alpha}  \frac{d}{dt}\Bigr|_{\substack{t=0}} \sigma_{B}(t)  \right)  
=0 \, .
\end{align*} 
Therefore $\sigma^\star_B$ is the optimizer in this case. 

For $\rho_{AB}$ not strictly positive, we can restrict the set of marginal states $\sigma_B$ to the support of $\sigma^\star_B$ and replay the above argument. 
To see this, first observe that the support of $\sigma^\star_B$ is the same as that of $\rho_{B}$. 
Furthermore, as noted in~\cite{muller-lennert_quantum_2013}, the DPI for $\widetilde{D}_{\alpha}$ implies that the maximum of  $f_{\alpha}$ is always attained at a density matrix $\sigma^{\star}_B$ satisfying $\sigma^{\star}_B\ll \rho_B$.
Therefore, we can restrict the domain of the function $f_{\alpha}$ to the set $\mathcal{P}(B):=\{\sigma_B \in \cD(B): \sigma_B \ll \sigma^\star_B\}$. 
Now observe that $\ker (\id_A\otimes\sigma^\star_B)\subseteq \ker (\rho_{AB})$.
For any $\ket{\psi}_B$ we have $\bra{\psi}\rho_B\ket{\psi}_B=\sum_k \bra{k}_A \bra{\psi} _B \rho_{AB}\ket{k}_A \ket{\psi}_B$. 
By positivity of $\rho_{AB}\geqslant 0$, each $\ket{\psi}_B\in \ker(\sigma^\star_B)=\ker(\rho_B)$ leads to a set of states $\ket{k}_A\otimes \ket{\psi}_B\in \ker(\rho_{AB})$.
This implies that projecting  $\rho_{AB}$ to the support of $\id_A\otimes \sigma^\star_B$ has no effect on $\widetilde Q_{\alpha}$. 
Hence, we can restrict all operators in the problem to this subspace, where again all states in $\mathcal P(B)$ sufficiently close to $\sigma^\star_B$ have full rank.
\end{proof}

\section{Optimality condition via semidefinite programming}
\label{app:sdp}

Here we derive the optimality condition for pretty good measures via weak duality of semidefinite programs. 
In terms of fidelity and pretty good fidelity, the optimality condition in \eqref{eq:aquival_dual_picture} reads 
\begin{align}
\label{eq:fpgopt}
\fpg(\tau_{AC},\id_A\otimes \sigma^\star_C)=\sup_{\sigma\in \cD(C)} F(\tau_{AC},\id_A\otimes \sigma_C),
\end{align}
where $ \sigma^\star_C$ is as in \eqref{def:tilde_sigma} with $\alpha=\nicefrac 12$. 
Lemma~\ref{lem:eq_cond_ALT} implies that $[\tau_{AC},\id_A\otimes \sigma^\star_C]=0$ is necessary for \eqref{eq:fpgopt} to hold. 
Sufficiency, meanwhile, is the statement that $\sigma^\star_C$ is the optimizer on the righthand side. 
We can show this by formulating the optimization as a semidefinite program and finding a matching upper bound using the dual program. 

In particular, following~\cite{watrous_semidefinite_2009}, the optimal value of the (primal) semidefinite program
\begin{align}
\begin{array}{r@{\,\,}rl}
\gamma=&\sup & \tr \, W_{ACA'C'}\tau_{ACA'C'}\\
&\text{s.t.} & \tr_{A'C'} W_{ACA'C'}\leqslant \id_A\otimes \sigma_C\\
&& \tr \,\sigma_C\leqslant 1\\
&& W_{ACA'C'},\sigma_C \geqslant 0 \, ,
\end{array}
\end{align}
satisfies $\gamma=\sup_{\sigma\in \cD(C)} F(\tau_{AC},\id_A\otimes \sigma_C)^2$.
Here $A'\simeq A$, $C'\simeq C$, and we take $\tau_{ACA'C'}$ to be the canonical purification of $\tau_{AC}$ as in Section~\ref{sec:pgf}. 
Using Watrous's general form for semidefinite programs we can easily derive the dual, which turns out to be  
\begin{align}
\begin{array}{r@{\,\,}rl}
\beta=&\inf & \mu \\
&\text{s.t.} & Z_{AC}\otimes \id_{A'C'}\geqslant \tau_{ACA'C'}\\
&&\mu\id_C\geq\tr_A Z_{AC}\\
&& \mu,Z_{AC}\geqslant 0 \, .
\end{array}
\end{align}


By weak duality $\gamma\leqslant \beta$, but the following choice of $\mu$ and $Z_{AC}$ gives $\beta=\fpg(\tau_{AC},\id_A\otimes \sigma^\star_C)^2$ and therefore \eqref{eq:fpgopt}:
\begin{align}
&\mu^\star = \left(\tr\, \sqrt{\tau_{AC}}\sqrt{\id_A\otimes \sigma^\star_C}\right)^2\quad \text{and} \\
&Z_{AC}^\star=\tr\left(\sqrt{\tau_{AC}}\sqrt{\id_A\otimes \sigma^\star_C}\right){\tau_{AC}^{\nicefrac12}}\left(\id_A\otimes \sigma^\star_C\right)^{-\nicefrac12}.
\end{align}
Here the inverse of $\id_A\otimes\sigma^\star_C$ is taken on its support.  To see that the first feasibility constraint is satisfied, start with the operator inequality 
\begin{align*}
&\id_{ACA'C'} \tr \, \sqrt{\tau_{AC}}\sqrt{\id_A\otimes \sigma^\star_C}\geqslant \left(\tau_{AC}^{\nicefrac14}\left(\id_A\otimes \sigma^\star_C\right)^{\nicefrac14} \otimes \id_{A'C'} \right)\Omega_{ACA'C'}  \\
&\hspace{20em}\left(\tau_{AC}^{\nicefrac14}(\id_A\otimes \sigma^\star_C)^{\nicefrac14} \otimes \id_{A'C'} \right),
\end{align*}
which holds because the righthand side is the canonical purification of the positive operator $\sqrt{\tau_{AC}}\sqrt{\id_A\otimes \sigma^\star_C}$ and the trace factor on the left is its normalization. 
Conjugating both sides by $\tau_{AC}^{\nicefrac14}(\id_A\otimes \sigma^\star_C)^{-\nicefrac14} \otimes \id_{A'C'}$ preserves the positivity ordering and gives 
\begin{align}\label{eq:feasibility_constraint}
&\tr\left(\sqrt{\tau_{AC}}\sqrt{\id_A\otimes \sigma^\star_C}\right) \tau_{AC}^{\nicefrac12}(\id_A\otimes \sigma^\star_C)^{-\nicefrac12}\otimes \id_{A'C'} \nonumber \\
&\hspace{10em}\geqslant \left( \tau_{AC}^{\nicefrac12} \otimes \id_{A'C'} \right)\Omega_{ACA'C'} \left( \tau_{AC}^{\nicefrac12}\otimes \id_{A'C'} \right) \, ,
\end{align}
where we used that $\ker \left(\id_A\otimes\sigma^\star_C \right) \subseteq \ker (\tau_{AC})$ (just as in the proof of \linebreak Lemma~\ref{lem:alphasufficiency}), ensuring that $\tau_{AC}(\id_A\otimes \sigma^\star_C)^{-1} (\id_A\otimes \sigma^\star_C)=\tau_{AC}$. Note that inequality~\eqref{eq:feasibility_constraint} shows that $Z^\star_{AC}\otimes \id_{A'C'}\geqslant \tau_{ACA'C'}$. Meanwhile, the second constraint is satisfied (with equality in the case where  $\sigma^\star_C$ has full rank) because direct calculation shows that $\tr_A {\tau_{AC}^{\nicefrac12}}(\id_A\otimes \sigma^\star_C)^{-\nicefrac12}\leq\id_C \, \tr\, \sqrt{\tau_{AC}}\sqrt{\id_A\otimes \sigma^\star_C}\,$.

\chapter{Notation and abbreviations} \label{APPnoation}

For an overview of the notation for quantum R\'enyi divergences and quantum conditional R\'enyi entropies used in this thesis, see Section~\ref{sec:important_divergences} and  Section~\ref{sec:cond_entropies}, respectively. Note also that our notation follows the one of~\cite{tomamichel_quantum_2016}. \\

We use the terms "non-negative operators" and "positive operators" to refer to linear, non-negative or positive operators on a Hilbert space, respectively. For simplicity, we consider only finite dimensional Hilbert spaces throughout this thesis. Therefore, non-negative operators and positive operators can always be viewed as positive semi-definite and positive definite matrices (over the complex numbers), respectively. \\
Throughout this thesis, taking the inverse of a non-negative operator $\rho$ should be viewed as taking the inverse evaluated only on the support of $\rho$.\\

Note also that we do not use a specific basis for the logarithm in this thesis. However, the exponential function should be considered as the reverse function of the chosen logarithm.\\

\vspace{5mm}

\noindent A list of abbreviations we use is available at Table~\ref{TABabbrev} and a comprehensive list of symbols can be found in Table~\ref{TABsymbols}. Note that the notation for matrices is also used for operators on Hilbert spaces in this thesis. This causes no confusion, because we work with finite dimensional Hilbert spaces only. 

\begin{table}[!ht]
\renewcommand{\arraystretch}{1.3}
\caption{List of abbreviations}
\label{TABabbrev}
\centering
\begin{tabular}{l l}
\hline
CPTP & Completely positive, trace-preserving (linear map)\\
POVM & Positive operator valued measure \\
DPI & Data-processing inequality [cf.~\eqref{eq:DPI}]\\
ALT & Araki-Lieb-Thirring (inequality) [cf. Theorem~\ref{thm:ALT}]\\
GT & Golden-Thompson (inequality) [cf. Theorem~\ref{thm:GT}]\\
cq & classical quantum \\
\hline
\end{tabular}
\end{table}

\begin{table}[!ht]
\renewcommand{\arraystretch}{1.3}
\caption{Notational conventions for mathematical expressions}
\label{TABsymbols}
\centering
\begin{tabular}{l l}
\hline
Operators on Hilbert spaces &\\ \hline
$\rho$, $\sigma$ & Typical elements of the set of non-negative operators\\
$\ker(\rho)$& Kernel of a non-negative operator $\rho$ \\
$\sigma \gg \rho$& $\ker( \sigma) \subseteq \ker(\rho)$ \\
$\cD(A)$& Set of density operators on a quantum system A, \\
&\quad i.e., non-negative operators $\rho$ with $\tr\rho=1$ \\
$\rho_A$& Density operator on a quantum sytem $A$ \\
$|A|$& Dimension of the Hilbert space $A$ \\ \hline
Matrices&\\ \hline
$\textnormal{Mat}(m,n)$ & Complex $m \times n$ matrices\\
$\textnormal{U}(n)$ & Unitary $n \times n$ matrices\\
$A^{*}$ & Conjugate transpose of a matrix $A \in \textnormal{Mat}(n,n)$ \\
$A\geqslant 0$&The matrix $A$ is positive semi-definite \\
$A> 0$&The matrix $A$ is positive definite \\
$A \#_{\alpha} B $&$= A^{\frac{1}{2}} \left(A^{-\frac{1}{2}}BA^{-\frac{1}{2}} \right)^{\alpha} A^{\frac{1}{2}}$ \quad (for $A,B>0$)\\
& \quad [$\alpha$-\emph{weighted geometric mean}]\\
$[A,B] $&$=AB-BA$ \quad [\emph{Commutator}]\\ \hline
Norms&\\ \hline
$|A|$&$= \sqrt{AA^{*}}$  for any $A \in \textnormal{Mat}(n,n)$\\
$\norm{\cdot}_p$&Schatten $p$-quasi-norm (cf. Section~\ref{sec:Schatten}) \\
$\normT{\cdot}$&Any unitarily invariant norm (cf. Definition~\ref{def:unitarily_inv_norm}) \\
\hline
\end{tabular}
\end{table}

%
%
%

\clearpage
%
%
\lhead[\fancyplain{\scshape Appendix}
{\scshape Appendix}]
{\fancyplain{\scshape \leftmark}
  {\scshape \leftmark}}
\rhead[\fancyplain{\scshape \leftmark}
{\scshape \leftmark}]
{\fancyplain{\scshape Appendix}
  {\scshape Appendix}}

%
%

%
%
%

\printbibliography[heading=bibintoc,title=References]

%
%
%
%
%
\end{appendix}
%
%
%
%
\end{document}